\newcolumntype{H}{>{\setbox0=\hbox\bgroup}c<{\egroup}@{}}
\newcommand{\mc}[1]{\mathcal{#1}}
\newcommand{\norma}[1]{ \lVert  {#1} \rVert }
\newcommand{\absa}[1]{ \lvert  {#1} \rvert }
\definecolor{myrefcolor}{RGB}{242, 10, 10}
\definecolor{myurlcolor}{RGB}{255, 138, 48}
\newcommand{\wtp}{\bar W_t}
\newcommand{\wtpd}{\bar W_t^\dagger}
\newtheorem{theorem}{Theorem}
\newtheorem{lemma}[theorem]{Lemma}
\newtheorem{corollary}[theorem]{Corollary}
\renewcommand{\fnum@figure}{FIG. \thefigure}  
\renewcommand{\@floatboxreset}{\raggedright}
\begin{document}

\title{Efficiently learning fermionic unitaries with few
non-Gaussian gates}

\date{\today}

\author{Sharoon Austin}
  \affiliation{Joint Center for Quantum Information and Computer Science, NIST/University of Maryland, College Park, MD, 20742, USA} 
  \affiliation{Joint Quantum Institute, NIST/University of Maryland, College Park, MD, 20742, USA}

\author{Mauro E.S. Morales}\affiliation{Joint Center for Quantum Information and Computer Science, NIST/University of Maryland, College Park, MD, 20742, USA}  

\author{Alexey Gorshkov}
\affiliation{Joint Center for Quantum Information and Computer Science, NIST/University of Maryland, College Park, MD, 20742, USA}
\affiliation{Joint Quantum Institute, NIST/University of Maryland, College Park, MD, 20742, USA}

\begin{abstract}
Fermionic Gaussian unitaries are known to be efficiently learnable and simulatable. In this paper, we present a learning algorithm that learns an $n$-mode circuit containing $t$ parity-preserving non-Gaussian gates. While circuits with $t = \textrm{poly}(n)$ are unlikely to be efficiently learnable, for constant $t$, 
we present a polynomial-time algorithm for learning the description of the unknown fermionic circuit within a small diamond-distance error.  
Building on work that studies the state-learning version of this problem, our approach relies on learning approximate Gaussian unitaries that transform the circuit into one that acts non-trivially only on a constant number of Majorana operators. Our result also holds for the case where we have a qubit implementation of the fermionic unitary.
\end{abstract}
\maketitle

\section{INTRODUCTION}

The task of learning unknown quantum unitaries is fundamental to quantum information science \cite{Chuang_1997, PhysRevA.77.032322, PhysRevLett.86.4195}. This task is important for the development of quantum algorithms and the characterization of quantum devices. However, learning unitaries of arbitrary gate complexity is exponentially hard \cite{OBrien04, kothopt}, making it crucial to identify classes of unitaries that can be learned efficiently. Moreover, practical quantum computation requires verification and validation of quantum circuits with bounded gate complexity. From an experimental standpoint, benchmarking and calibrating quantum devices can be thought of as a learning problem. Previous works have focused on providing learning algorithms in various 
scenarios, including circuits with Clifford unitaries together with a constant number of $T$ gates \cite{CliffordsT}, unitaries with a constant
number of two-qubit gates \cite{Zhao_2024}, and quantum circuits of constant depth \cite{yunchao}.

This work focuses on learning fermionic unitaries \cite{Jozsa_2008,terhalferm}. We consider two cases.  The first case, which we refer to as the \textit{fermionic implementation},  is where the unknown unitary is implemented on fermionic modes.
To define our learning problem, it is important to distinguish Gaussian and non-Gaussian unitaries. In the fermionic implementation, 
we will restrict 
Gaussian unitaries to be parity-preserving, which corresponds to time evolution under quadratic fermionic Hamiltonians, allowing for both hopping and pairing terms  
\cite{bravyi,terhalferm}.
The learning algorithm in this case must use input states that can be efficiently prepared on a fermionic quantum computer using parity-preserving gates.
The second case, which we refer to as the \textit{qubit implementation}, is where the unknown unitary is implemented on a chain of qubits (which can be related to a chain of fermionic modes via the Jordan-Wigner transformation). In this qubit implementation, the allowed Gaussian unitaries  are not required to preserve parity. They  
are generated by nearest-neighbor matchgates (which are parity-preserving and map to fermionic Gaussian gates under the Jordan-Wigner transformation) and $X_1$, the Pauli $X$ matrix on the first qubit \cite{Valiant, Wan_2023}.

It is well-established that Gaussian unitaries in both implementations can be learned efficiently \cite{mauro,cudby2024learninggaussianoperationsmatchgate}. Throughout this work, we refer to quantum circuits on qubits defined using the Jordan-Wigner mapping as fermionic circuits. We will also refer to unitaries defined by Majorana operators acting on fermions as fermionic circuits.
Augmenting fermionic Gaussian circuits with non-Gaussian gates enables universal quantum computation. In the qubit implementation, adding  SWAP, controlled-$Z$, or controlled-phase gates enables universal quantum computation \cite{Brod_2011}. In the fermionic implementation, adding $\exp(i\pi \gamma_1 \gamma_2 \gamma_3 \gamma_4 /4)$ enables universal quantum computation \cite{bravyi}.

This raises the question whether, in both implementatons, unitaries composed of fermionic Gaussian unitaries and a constant number of non-Gaussian gates are efficiently learnable. In Ref.~\cite{mele2024}, it was shown that quantum states produced by such unitaries are efficiently learnable, leaving open whether the unitaries themselves are efficiently learnable. In this work, we solve this problem by providing an efficient algorithm to learn such unitaries in both fermionic and qubit implementations.
Our result may have applications in fermionic quantum devices, such as those proposed in Refs.~\cite{Gonz_lez_Cuadra_2023, ott2024error, schuckert2024fermionqubitfaulttolerantquantumcomputing} and also complements the classical simulation algorithm for this class of circuits \cite{Dias_2024, Reardon_Smith_2024, browne}. Moreover, previous work has shown that there is an efficient algorithm to learn fermionic unitaries in any finite level of the matchgate hierarchy \cite{cudby2024learninggaussianoperationsmatchgate}. However, we show that fermionic unitaries composed of just two non-Gaussian gates and a Gaussian gate, where the non-Gaussian gates belong to the third level and
the Gaussian gate belongs to the second level \cite{bampounis2024matchgatehierarchycliffordlikehierarchy}, do not belong to any finite level of the matchgate hierarchy.

Our learning algorithm is based on the decomposition result in Ref.~\cite{mele2024} that shows that the unknown unitary acts like a Gaussian unitary on all but a constant number of Majorana operators. 
We show that such Majorana operators can be efficiently learned by measuring expectation values of constant-weight fermionic observables on states prepared using the unknown unitary (e.g.,~using shadow tomography \cite{JuanHam, Zhao_2021}). This information can be used to form a circuit that acts trivially on a large number of Majorana operators. Finally, we construct and learn a circuit that acts on a constant number of modes or qubits either through brute force \cite{Chuang_1997} or by estimating the expectation values of Pauli observables on states prepared by the circuit using shadow tomography~\cite{huangshad, shadowqpt}.

The remainder of the paper is organized as follows. In Sec.~\ref{reviewA}, we define fermionic unitaries, their relation to matchgates, and the computational complexity of such circuits. In Sec.~\ref{mainressec}, we define the learning problem and present the main result. 
In Sec.~\ref{AlgSteps}, we present the technical lemmas on which the learning algorithm is based. In Sec.~\ref{mchier}, we study how the fermionic unitaries in this work relate to unitaries in the matchgate hierarchy and show that, generally, fermionic unitaries with just two non-Gaussian unitaries do not belong to any finite level of the matchgate hierarchy.
In the Appendices, we present details omitted from the main text.

\section{REVIEW: Fermionic unitaries and matchgates}\label{reviewA}

In this section, we present a review of fermionic unitaries, matchgates, and known results regarding their computational complexity. For a fermionic system, the modes labeled $1, \hdots, n$ are defined by the creation and annihilation operators $a_i$ and $a_i^\dagger$, respectively, acting on the Fock-space state $\ket{z_1, \cdots, z_n}$ as follows \cite{bravyi}:
\begin{align}
a_j &\ket{z_1, \hdots,z_{j-1} ,1, z_{j+1}, \hdots,  z_n} \notag\\= &(-1)^{\sum_{k=1}^{j-1}z_k} \ket{z_1, \hdots,z_{j-1} ,0, z_{j+1}, \hdots,  z_n},\\
a_j &\ket{z_1, \hdots,z_{j-1} ,0, z_{j+1}, \hdots,  z_n} = 0,
\end{align}
where $z_i=0,1$. These operators satisfy the anticommutation relations $\{a_i, a_j \}=0$ and $\{a_i, a_j^\dagger \}=\delta_{ij}I$ for all $i,j$. We can define $2n$ Majorana operators $\gamma_i$ as follows: 
\begin{align}
\gamma_{2i-1}&=a_i+a^\dagger_i,\\
\gamma_{2i}&=-i(a_i-a_i^\dagger),
\end{align}
where $i \in [n]$. Here $\gamma_i$ obey the anticommutation relation $\{\gamma_i, \gamma_j \}=0$ for $i\neq j$, and $\gamma_i^2=I$. The $n$-mode Fock space can be associated with the $n$-qubit Hilbert Space \cite{bravyi}. Moreover, the system of $n$ qubits can be mapped to $2n$ Majorana operators $\gamma_i$ ($i = 1, \dots, 2n$) via the Jordan-Wigner transformation defined as
\begin{align}
\gamma_{2k-1}&=\Bigg(\prod_{i=1}^{k-1}Z_i \Bigg) X_k, \label{jw1}\\
\gamma_{2k}&=\Bigg(\prod_{i=1}^{k-1}Z_i \Bigg) Y_k,\label{jw2}
\end{align} 
Here $X_k$, $Y_k$, and $Z_k$ are Pauli matrices acting on qubit $k$, and this mapping ensures that the Majorana operators satisfy the correct anticommutation relations. In this mapping, the Fock-space state
\begin{align}
\ket{z_1 z_2 \hdots z_n}
\end{align}
of $n$ fermionic modes is exactly  identified with the computational basis state on $n$ qubits, where an empty fermionic mode ($z_i = 0$) corresponds to a qubit in computational basis state $\ket{z_i = 0}$, while a filled fermionic mode ($z_i = 1$) corresponds to a qubit in computational basis state $\ket{z_i = 1}$ \cite{bravyi}. 

We define a fermionic Gaussian unitary $G$ from its action on $\gamma_i$ as follows:
\begin{align}
G^\dagger \gamma_i G &= \sum_{k}O_{ik}\gamma_k, \label{canonG}
\end{align}
where $O \in $ O$(2n)$.
Matchgates are parity-preserving two-qubit unitaries of the form
\begin{align}
G(A, B)&=\begin{pmatrix}
A_{11} &0&0&A_{12}\\
0 & B_{11} & B_{12} & 0\\
0 & B_{21} & B_{22}& 0\\
A_{21} & 0& 0 & A_{21}
\end{pmatrix}, \\ \det(A)&=\det(B)=\pm1,
\end{align}
where the matrix is written in the $\{\ket{00},\ket{01},\ket{10},\ket{11}\}$ basis, and where $A$ and $B$ are complex unitary matrices.
To relate matchgates and fermionic Gaussian unitaries, we first note that nearest-neighbor matchgate circuits (on a one-dimensional chain of qubits) can be generated from nearest-neighbor two-qubit gates of the form $\exp(i\theta  X\otimes X)$ and single-qubit $Z$ rotations $\exp(i\theta Z \otimes I)$, $\exp(i\theta I \otimes Z)$. This definition shows that fermionic Gaussian unitaries generated by $G(\theta)_{i i+1}$, where $G(\theta)_{i j}=\exp(\theta \gamma_i \gamma_j)$ with $i\neq j$, and nearest-neighbor matchgate circuits are equivalent since $iX_j X_{j+1} = \gamma_{2j}\gamma_{2j+1}$ and $i Z_j = \gamma_{2j-1}\gamma_{2j}$. Here, unitaries $G(\theta)_{ij}$ 
act on $\gamma_k$ as follows:
\begin{align}
G(\theta)_{ij}^\dagger \gamma_k G(\theta)_{ij} &= \cos (2\theta) \gamma_i + \sin (2\theta) \gamma_j  \>\>\> k=i,\\
G(\theta)_{ij}^\dagger \gamma_k G(\theta)_{ij} &= -\sin (2\theta) \gamma_i + \cos (2\theta) \gamma_j \>\>\> k=j,\\
G(\theta)_{ij}^\dagger \gamma_k G(\theta)_{ij} &= \gamma_k \>\>\> k \notin \{i,j\}.
\end{align}
This shows that the nearest-neighbor $X\otimes X$ rotations and single-qubit $Z$ rotations mentioned earlier can be implemented as fermionic Gaussian unitaries $G(\theta)_{i i+1}$ corresponding to Givens rotations spanned by $\gamma_i$, $\gamma_{i+1}$ for $i \in [2n-1]$, generating all rotations in SO$(2n)$. Adding the operator $X_1$ can then be used to extend to rotations in O$(2n)$ \cite{Wan_2023}. From here on, we will use fermionic Gaussian unitaries and nearest-neighbor matchgate circuits interchangeably. 

Valiant showed that quantum circuits composed of nearest-neighbor matchgates on a one-dimensional chain of qubits can be classically simulated \cite{Valiant}. However, if these gates are allowed to act on arbitrary qubit pairs (or equivalently if the SWAP gate is added to the gate set), such circuits can perform universal quantum computation \cite{kempe}. This result was strengthened in Ref.~\cite{Jozsa_2008} to show that circuits with nearest-neighbor and next-nearest-neighbor matchgate circuits are quantum universal. Moreover, adding any one of the following gates to the gate set
also gives quantum universality: controlled-$Z$, controlled-phase \cite{Brod_2011}, or $\exp(i\pi Z_i Z_j/4)$ \cite{bravyi}. These gates can be written as non-Gaussian unitaries. For example, the gate $\exp(i\pi Z_i Z_j/4)$ can be written as $\exp(i \pi \gamma_{2i-1}\gamma_{2i}\gamma_{2j-1}\gamma_{2j}/4)$. 

We now define the metric used to quantify the precision of our learning algorithm. The diamond-norm distance between any two quantum channels $\mc{E}_1$ and $\mc{E}_2$ can be defined as follows:\begin{align}
\mc D_{\diamond}(\mc E_1, \mc E_2)=\frac{1}{2}\max_{\rho} \norma{(\mc E_1 \otimes I)\rho - (\mc E_2 \otimes I)\rho}_1, \label{defdiamond}
\end{align}
where $\rho$ is a density matrix on 2$n$ qubits, and $\lVert .\rVert_1$ is the trace norm. Moreover, we denote the spectral norm (the largest singular value) as $\norma{.}$ and denote the Frobenius norm as $\norm{.}_2$, which is defined as follows:
\begin{align}
\norma{A}_2 = \sqrt{\tr[A^\dagger A]}.
\end{align}
We use $\alpha_x$ to denote the Hamming weight of the bit-string $x$. The projection on state $\ket{z}$ for the $i$th qubit is denoted by $\Pi_z^{(i)}=(1+(-1)^z Z_i)/2$. We denote single-qubit Paulis acting on the qubit labeled $i$ as $P_i$, where $P \in \{X, Y, Z \}$.
\section{RESULT}\label{mainressec}
We now define our learning problem as follows. We are given black-box access to the unitary $U_t$ with the following promise on its form:
\begin{align}
U_t=G_t K_t \cdots G_1 K_1 G_0, \label{promiseUt}
\end{align}
where $G_i$ is a Gaussian unitary defined in Eq.~(\ref{canonG}), $K_i$ is a non-Gaussian unitary generated by an even-weight product of Majorana operators with weight up to $\kappa$, and $t$ is a constant. We take $\kappa$ to be a constant because such unitaries (e.g.,~$\kappa = 4$ in Ref.~\cite{bravyi}) suffice to implement universal quantum computation. We consider two cases.  The first case is where $U_t$ is implemented on $n$ fermionic modes.
Here, we specialize to parity-preserving  Gaussian unitaries $G_i$. These unitaries  correspond to orthogonal matrices in SO$(2n)$ instead of O$(2n)$ and can be implemented as time evolution under quadratic fermionic Hamiltonians composed of both hopping and pairing terms \cite{bravyi,terhalferm}.
The learning algorithm in this case must use input states that can be efficiently prepared on a fermionic quantum computer (i.e., states that can be obtained from parity-preserving unitaries). We refer to this setting as the \textit{fermionic implementation}. The second case is where $U_t$ is implemented on $n$ qubits. Here, we consider Gaussian unitaries $G_i$ that correspond to orthogonal matrices in O$(2n)$ and are implemented as matchgates. We refer to this setting as the \textit{qubit implementation}.
We aim to find a description of a unitary channel $\mc U_{t}^{(\ell)}$ such that $\mc D_{\diamond}(\mc U_t^{(\ell)}, \mc{U}_t) \leq \epsilon$, where $\mc U_t$ is the quantum channel corresponding to $U_t$. The main result of this work is presented as follows.
 
\begin{restatable}[]{theorem}{mainres}\label{resdet}
\normalfont
For a unitary $U_t$ promised to have the form in Eq.~(\ref{promiseUt}), there is a learning algorithm that accesses the unitary $O(\text{poly}(n, \epsilon^{-1}, \log\delta^{-1}))$ number of times and uses $O(\text{poly}(n, \epsilon^{-1}, \log\delta^{-1}))$ classical processing time to produce a description of the quantum channel $\mc U_{t}^{(\ell)}$ such that $\mc D_{\diamond}(\mc U_t^{(\ell)}, \mc{U}_t) \leq \epsilon$, where $\mc U_t$ is the quantum channel corresponding to $U_t$, with probability $\geq 1-\delta$. The input states used in the algorithm have $O(\text{poly}(n))$ gate complexity. Moreover, the channel description can be used to approximately implement the unitary $U_t$ using $2m+1$ ancillary modes for the fermionic implementation or $2m$ ancillary qubits for the qubit implementation, where $m=M/2$ and $M=\kappa t$. 
\end{restatable}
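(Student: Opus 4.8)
The plan is to reduce learning $U_t$ to learning a Gaussian unitary together with a unitary supported on only $m = M/2$ modes (with $M = \kappa t$), and then to recombine these pieces with controlled error. The starting point is the decomposition of Ref.~\cite{mele2024}, adapted from the state setting to the unitary setting: since each non-Gaussian gate $K_i$ is generated by Majorana monomials of weight at most $\kappa$, the Heisenberg image $U_t^\dagger \gamma_j U_t$ remains an exact linear combination of Majorana operators for all but at most $M$ indices $j$. Consequently there exist Gaussian unitaries $G_L$ and $G_R$ such that $U_t = G_L\, C\, G_R$, where $C$ acts as the identity on all but $M$ Majorana operators, i.e.\ $C$ is, up to a relabeling of modes, a unitary on $m$ modes tensored with the identity. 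The task then splits into (i) learning descriptions of the Gaussian factors $G_L, G_R$ and (ii) learning the constant-size core $C$.

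For step (i) I would estimate the correlation data that pins down the orthogonal matrices associated with $G_L$ and $G_R$. Concretely, preparing Gaussian input states (efficiently preparable with parity-preserving gates in the fermionic implementation, and directly on qubits in the qubit implementation, so that the $O(\text{poly}(n))$ gate-complexity requirement is met), applying $U_t$, and measuring constant-weight Majorana observables such as $\gamma_a\gamma_b$ via fermionic classical shadows~\cite{JuanHam, Zhao_2021} yields estimates of the matrix entries $O_{jk}$ governing the Gaussian action on the unaffected Majoranas. The ``spoiled'' indices---those $j$ whose reconstructed row fails to be a genuine rotation, signaling non-Gaussian support---can be read off from the same data and number at most $M$; this fixes the constant-size block on which $C$ is supported. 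From the estimated orthogonal matrix I would then reconstruct $G_L,G_R$ as explicit matchgate (respectively parity-preserving Gaussian) circuits using the efficient Gaussian-learning primitives of Refs.~\cite{mauro,cudby2024learninggaussianoperationsmatchgate}, each to spectral-norm precision $\text{poly}(\epsilon/n)$, requiring only $O(\text{poly}(n,\epsilon^{-1},\log\delta^{-1}))$ samples by the standard concentration bounds for classical shadows.

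With $G_L,G_R$ in hand, the residual unitary $G_L^\dagger U_t G_R^\dagger$ acts non-trivially on only the $m$ affected modes (equivalently $2m$ qubits, with one extra mode needed in the fermionic implementation to accommodate the odd-parity sector inaccessible to SO$(2n)$ rotations---this is the source of the $2m+1$ versus $2m$ ancilla counts). Being a constant-size object, it can be learned in step (ii) either by brute-force process tomography~\cite{Chuang_1997} or by estimating Pauli expectation values on its output states through shadow process tomography~\cite{huangshad,shadowqpt}, to within diamond distance $\epsilon/3$ using $O(\text{poly}(\epsilon^{-1},\log\delta^{-1}))$ queries since the dimension is constant. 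Setting $\mc U_t^{(\ell)} = \mc G_L\,\mc C\,\mc G_R$ from the learned factors and using the triangle inequality together with the subadditivity and unitary invariance of the diamond norm under composition bounds $\mc D_\diamond(\mc U_t^{(\ell)},\mc U_t)$ by the sum of the three per-factor errors, which I would balance so that the total is at most $\epsilon$.

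The step I expect to be the main obstacle is controlling error propagation in (i): the decomposition becomes approximate once the correlators are estimated, and I must show that a spectral-norm error $\eta$ in each reconstructed orthogonal matrix induces at most an $O(\text{poly}(n)\,\eta)$ diamond-norm error in the corresponding Gaussian channel, so that achieving the final precision $\epsilon$ inflates the sample complexity only polynomially in $n$. Equally delicate is proving that the identification of the $\leq M$ spoiled indices is robust to statistical noise---i.e.\ that there is a large enough separation between genuinely non-Gaussian rows and merely noisy-but-Gaussian ones---because a misidentification changes the support of $C$ and breaks the reduction. The remaining ingredients (constant-size tomography, the composition bound, and the ancilla-assisted implementation of $C$) are comparatively routine given that the relevant dimension is constant.
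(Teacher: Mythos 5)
Your skeleton matches the paper's (decompose $U_t$ into Gaussians sandwiching a constant-size core, learn the Gaussian frame from weight-two correlation data via fermionic shadows, learn the constant-size residual, recombine by the triangle inequality), but two steps as you describe them would fail. First, your step (i) hinges on identifying the $\leq M$ ``spoiled'' Majorana indices by testing whether each reconstructed row of the correlation data is a genuine rotation, and you yourself flag that this needs a noise separation between spoiled and unspoiled rows --- but no such separation exists in general: a non-Gaussian gate can act with arbitrarily small angle, making a spoiled row arbitrarily close to a rotation row, so the identification (and with it your reduction) breaks. The paper sidesteps identification entirely. It estimates $\hat c^{(1)}_{jk} = \tr[U_t^\dagger \gamma_k U_t \gamma_j]/d + E^{(1)}_{jk}$, takes the SVD $\hat c^{(1)} = U\Sigma V^T$ with singular values sorted, and defines $G_a, G_b$ directly from $V$ and $U^T$. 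By Lemma~\ref{propc} (Weyl's perturbation theorem), at least $2n-M$ singular values lie within $\norma{E^{(1)}}$ of $1$ \emph{regardless of any gap}, and Lemma~\ref{constructW} then yields $\norma{[W_t,\gamma_i]} \leq T_1(n)\norma{E^{(1)}}^{1/2}$ for all $i>M$; whether the discarded directions are ``genuinely'' non-Gaussian is irrelevant, since any Gaussian frame achieving approximate Majorana decoupling suffices for the rest of the algorithm.

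Second, in the qubit implementation your claim that the residual $G_L^\dagger U_t G_R^\dagger$ ``acts non-trivially on only the $m$ affected modes'' is false. Majorana decoupling $[W_t,\gamma_i]\approx 0$ for $i>M$ implies mode locality only when $W_t$ is a sum of even-weight Majorana strings (the fermionic SO$(2n)$ case, Lemma~\ref{lastlemma}); when the Gaussians range over O$(2n)$, $W_t$ can contain odd-weight strings --- e.g.\ $W_t=\gamma_1\gamma_3\gamma_4 = iX_1Z_2$ commutes with $\gamma_3,\gamma_4$ yet acts on qubit~2 --- and the SVD-learned $G_a,G_b$ need not coincide with decomposition-compatible $G_A,G_B$ (the paper's example $U_t=\gamma_4$ with $G_a=\gamma_1$, $G_b=-\gamma_3$). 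The paper repairs this with the explicit diagonal conjugation $\bar U_d$ of Definition~\ref{defbarw}, proving in Lemma~\ref{introwtp} that $\bar W_t = \bar U_d^\dagger W_t \bar U_d$ is Pauli-decoupled; your proposal has no analogue of this step. Relatedly, your ancilla accounting is inverted: the $2m$ (resp.\ $2m+1$) ancillas arise because the learned constant-size channel is only approximately unitary and must be implemented via its Stinespring dilation $V_S$, and the extra mode in the fermionic case is needed to render $V_S$ parity-preserving \cite{bravyi}, not to reach an odd-parity sector inaccessible to SO$(2n)$. Finally, the error-propagation obstacle you anticipate largely dissolves in the paper's route: $G_a,G_b$ are \emph{exact} Gaussian unitaries by construction (there is no ``true'' $G_L$ being approximated), so the only error parameter is the decoupling bound $\epsilon_0$, which is converted into a diamond-norm bound through the reduced-channel argument of Lemma~\ref{localglobal} together with the Choi-state projection of Lemma~\ref{regularizeChoi}.
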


\noindent Details of the learning guarantees are provided in Lemma \ref{guaranteeAf} for the fermionic implementation and Lemma \ref{guaranteeA} for the qubit implementation. We remark that, since the resource requirements scale as $O(\text{poly}(\log\delta^{-1}))$ in the failure probability $\delta$, we can choose an exponentially small failure probability and still have an efficient algorithm. We remark that we need additional ancilla modes (qubits) to implement the learned unitary because our algorithm involves learning a constant-sized quantum channel that is only approximately unitary, and therefore needs additional ancilla qubits to be implemented as a unitary via Stinespring dilation \cite{Nielsen_Chuang_2010}.

\begin{figure}[t]
\centering
		\includegraphics[
		width= 0.8\columnwidth
		]{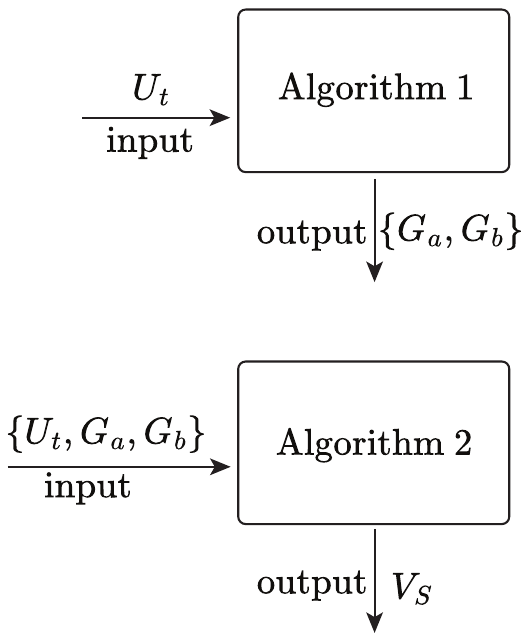}
	\caption{The learning algorithm for the fermionic implementation. Algorithm \ref{alg:algo1} uses access to $U_t$ to produce a description of Gaussian unitaries $G_a$ and $G_b$. Algorithm \ref{alg:algo2} uses access to $U_t$, along with $G_a$ and $G_b$, to learn the unitary $W_t = G_a^\dagger U_t G_b^\dagger$ on a constant number of modes and produce a description of the unitary $V_S$ also suppported on a constant number of modes. The output of the learning algorithm is a description of the unitary $U_t^{(\ell)} = G_a  V_S G_b$. For the qubit implementation, the learning algorithm is the same except that Algorithm \ref{alg:algo2} learns a description of $\bar W_t = \bar U_d^\dagger W_t \bar U_d$ instead of $W_t$.}
	\label{Lflow}
\end{figure}
As shown in Fig.~\ref{Lflow}, our learning algorithm can be described in two steps. In the first step, we perform a tomography scheme that constructs the matrix $ c^{(1)} \in \mathbb{R}^{2n\times 2n}$ defined by 
\begin{align}
c^{(1)}_{jk} := \frac{1}{d}\tr[U_t^\dagger \gamma_k U_t \gamma_j], \label{conedef}
\end{align}
where $d=2^n$. The matrix element $c^{(1)}_{jk}$ can be described as follows. We evolve $\gamma_k$ under $U_t$ and then compute its overlap with the Majorana operator $\gamma_j$.
The reason for the use of the superscript will be made clear later. Since we construct $c^{(1)}$ using tomography, we obtain the matrix $\hat c^{(1)}$ with error $E^{(1)}$ such that $\hat c^{(1)} = c^{(1)} + E^{(1)}$. As shown later in Lemma \ref{constructW}, we can derive Gaussian unitaries $G_a$ and $G_b$ from $\hat c^{(1)}$ such that the unitary $W_t:=G_a^\dagger U_t G_b^\dagger$ satisfies 
\begin{align} [W_t, \gamma_i] \approx 0, \>\>\> i>M \label{approxW2}.
\end{align}
We will refer to this property as Majorana decoupling. The error in this approximation can be made small by performing the tomography step with sufficiently high precision.

We now consider our two implementations. We start with the fermionic implementation, where  we consider Gaussian $G_{j}$ in Eq.~(\ref{promiseUt}) corresponding to orthogonal matrices in SO$(2n)$. In this case, we can ensure $W_t$ is a sum of even-weight Majorana strings by choosing $G_a$ and $G_b$ to be parity-preserving (i.e.\  generated by quadratic fermionic Hamiltonians). We can then show that $[W_t, \gamma_j]=0$ for all $j>M$ implies that $W_t$ is a sum of Majorana strings containing only Majorana operators $\gamma_i$ with $i \leq M$, and therefore $W_t$ only acts on modes $i \in [m]$. To prove this, for some $j > M$, let $A_j$ be the sum of Majorana strings in $W_t$ that contain $\gamma_j$. Since $W_t$ has only even-weight Majorana strings $[W_t, \gamma_j]=0$ implies $[A_j, \gamma_j]=0$. Writing $A_j = B_j \gamma_j$, this immediately implies that $B_j = A_j = 0$. 
In Lemma \ref{lastlemma}, we extend this argument to the case where $[W_t, \gamma_j]=0$ is replaced with $[W_t, \gamma_j] \approx 0$.

We now consider the qubit implementation. Gaussian unitaries $G_{j}$ in Eq.~(\ref{promiseUt}) now correspond to orthogonal matrices in O$(2n)$. In this case, since the Majorana weight of $W_t$ can be odd, the condition in Eq.~(\ref{approxW2}) does not imply that the unitary $W_t$ acts trivially on the qubits labeled $i > m$. As a simple example, take $(n,m)=(2,1)$ and consider $W_t=\gamma_1 \gamma_3 \gamma_4=iX_1Z_2$. We remind the reader that $\gamma_1$ and $\gamma_2$ act on fermionic mode 1, while $\gamma_3$ and  $\gamma_4$ act on fermionic mode  $2$ but on both qubits 1 and 2 due to Jordan-Wigner transformation. Even when $[W_t, \gamma_3]=[W_t, \gamma_4]=0$, $W_t$ acts non-trivially on qubit 2. In this case, we show that we can use a simple unitary transformation $\bar U_d$ to obtain a unitary $\bar W_t=\bar U_d^{\dagger}W_t \bar U_d$ that has no support on qubits labeled $i> m$ when Eq.~(\ref{approxW2}) holds exactly. When Eq.~(\ref{approxW2}) holds approximately, $\bar W_t$ has approximately no support on qubits $i > m$.

We now proceed to the second step of the learning algorithm. Once Algorithm \ref{alg:algo1} allows us to construct unitary transformations of $U_t$ that can be approximated as $m$-mode (qubit) quantum channels for both fermionic and qubit implementations, we can learn those channels. In  Algorithm 2, we learn these channels by measuring the expectation values of Pauli observables that act on the first $m$ modes (qubits) via shadow tomography, and then constructing the corresponding Choi states. The Choi state can be used to compute the unitary Stinespring dilation $V_S$ of the channels \cite{watrous}. The channel can also be learned using brute force \cite{Chuang_1997}. The result of our learning algorithm is a description of the unitary $U_t^{(\ell)}=   G_a   V_S G_b $ acting on $n+2m+1$ modes for the fermionic implementation and a description of the unitary $U_t^{(\ell)}=   G_a   \bar U_d   V_S \bar{U}_d^\dagger G_b $ acting on $n+2m$ qubits for the qubit implementation. The learned unitary satisfies $\mc D_{\diamond}(\mc U_t , \mc U_t^{(\ell)}) \leq \epsilon$, where $\mc{U}_t^{(\ell)}$ is the quantum channel obtained by applying the unitary $U_t^{(\ell)}$ and tracing out the ancillary modes (qubits). As shown in Fig.~{\ref{cptp}}, for the fermionic implementation, the learned unitary can be implemented as a parity-preserving unitary using a unitary transformation on $V_S$, along with an additional ancilla mode \cite{bravyi}.

\begin{figure}[htb]
		\includegraphics[
		width= \columnwidth
		]{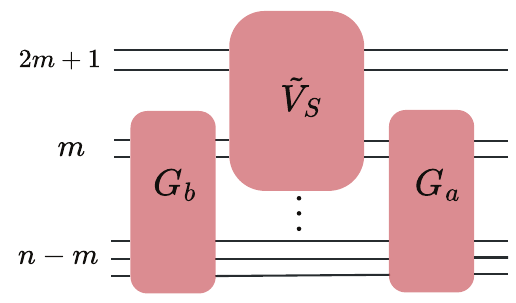}
	\caption{ This figure shows the implementation of the learned quantum channel $\mc U_t^{(\ell)}$ as a product of unitaries acting on $2m+1$ ancillary modes in the fermionic implementation. Here $\tilde V_S$ is a unitary transformation of the Stinespring dilation $V_S$ of the reduced quantum channel $\mc{E}_m^{{\mc{W}}_t}$ corresponding to $W_t$, ensuring that $\tilde V_S$ is parity-preserving.
    }
	\label{cptp}
\end{figure}

\section{METHODS}\label{AlgSteps}

In this section, we first present an overview of the learning algorithm. Then, in Subsecs.~\ref{alg1det} and \ref{alg2det}, we present details and lemmas for Algorithms \ref{alg:algo1} and \ref{alg:algo2}, respectively.

Our learning algorithm is based on a minor modification of the result in Ref.~\cite{mele2024} showing that $U_t$ has the following decomposition.
\begin{restatable}[Decomposition result for $U_t$]{lemma}{Udecomp}\label{Udecomp}
\normalfont
For any given $U_t$ in Eq.~(\ref{promiseUt}), there exist Gaussian unitaries $G_A$ and $G_B$ and unitary $u_t$ such that 
\begin{align}
U_t = G_A u_t G_B, \label{udecompeq}
\end{align}
where $u_t$ is generated by even-weight Majorana strings containing Majorana operators $\gamma_i$ with indices $i \in [M]$. Moreover, for the case where all $G_{j}$ in Eq.~(\ref{promiseUt}) correspond to orthogonal matrices in SO$(2n)$, there exist $G_A$ and $G_B$ that are in SO$(2n)$.
\end{restatable}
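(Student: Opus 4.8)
The plan is to push every Gaussian factor in Eq.~(\ref{promiseUt}) to the right through the non-Gaussian gates, collecting them into a single Gaussian unitary while conjugating each $K_i$ into a rotated non-Gaussian gate. Concretely, I would define the partial products $H_i := G_t G_{t-1}\cdots G_i$ and the rotated gates $\tilde K_i := H_i K_i H_i^\dagger$. Using the elementary identity $H_i K_i = \tilde K_i H_i$ repeatedly, one telescopes Eq.~(\ref{promiseUt}) into
\begin{align}
U_t = \tilde K_t \tilde K_{t-1}\cdots \tilde K_1\, G_{\mathrm{tot}}, \qquad G_{\mathrm{tot}} := G_t\cdots G_1 G_0,
\end{align}
which I would verify by induction on $t$ (the cross terms $H_i^\dagger H_i$ cancel). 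Since each $G_i$ is Gaussian, $H_i$ and $G_{\mathrm{tot}}$ are Gaussian, so all the non-Gaussianity is now isolated in $V := \tilde K_t\cdots\tilde K_1$.

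The key step is a dimension count on the Majorana support of $V$. If $K_i = \exp(\theta_i P_i)$ with $P_i$ an even product of $w_i \le \kappa$ Majorana operators, then $\tilde K_i = \exp(\theta_i\, H_i P_i H_i^\dagger)$, and conjugation by the Gaussian $H_i$ sends each Majorana $\gamma_a$ appearing in $P_i$ to a real linear combination $\mu^{(i)}_a := H_i \gamma_a H_i^\dagger = \sum_k O^{(i)}_{ka}\gamma_k$, i.e.\ a vector in $\mathbb{R}^{2n}$. The generators of $V$ are therefore built entirely from the set $\{\mu^{(i)}_a\}$, which contains at most $\sum_i w_i \le \kappa t = M$ vectors and hence spans a subspace $S \subseteq \mathbb{R}^{2n}$ of dimension $r \le M$. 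Because $M < 2n$ in the regime of interest, I would then pick an orthogonal matrix $O_A$ whose transpose maps $S$ into $\mathrm{span}(\hat\gamma_1,\dots,\hat\gamma_M)$, by sending an orthonormal basis of $S$ to $\hat\gamma_1,\dots,\hat\gamma_r$ and completing it to an orthogonal matrix, and let $G_A$ be the associated Gaussian unitary. By construction $u_t := G_A^\dagger V G_A$ has every generator supported on $\gamma_1,\dots,\gamma_M$, and since each $w_i$ is even, $u_t$ is a product of exponentials of even-weight Majorana strings on those $M$ operators.

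The decomposition then follows by setting $G_B := G_A^\dagger G_{\mathrm{tot}}$, since
\begin{align}
U_t = V G_{\mathrm{tot}} = G_A\big(G_A^\dagger V G_A\big) G_A^\dagger G_{\mathrm{tot}} = G_A\, u_t\, G_B,
\end{align}
with $G_A,G_B$ Gaussian. For the parity-preserving case I would additionally arrange $O_A \in$ SO$(2n)$: if all $G_j \in$ SO$(2n)$ then $G_{\mathrm{tot}}$ and each $H_i$ lie in SO$(2n)$, and since $\dim S^\perp \ge 2n - M \ge 1$ I can flip the sign of one basis vector of $S^\perp$ (which maps into $\mathrm{span}(\hat\gamma_{r+1},\dots,\hat\gamma_{2n})$ and so does not disturb the support condition) to force $\det O_A = +1$; then $G_A$ and $G_B = G_A^\dagger G_{\mathrm{tot}}$ are both in SO$(2n)$. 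I expect the main obstacle to be this linear-algebraic heart of the argument — establishing that the combined Majorana support of the rotated gates really has dimension at most $M$ and can be simultaneously aligned by a single Gaussian — together with the bookkeeping needed to keep $G_A,G_B$ parity-preserving; the rest is routine telescoping. This determinant control is precisely the ``minor modification'' of Ref.~\cite{mele2024} alluded to in the text.
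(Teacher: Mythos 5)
Your proposal is correct and follows essentially the same route as the paper's proof: the paper likewise telescopes the Gaussians through the circuit (collecting them on the left via $\tilde K_{t'} = \tilde G_{t'-1}^\dagger K_{t'} \tilde G_{t'-1}$, the mirror image of your $H_i$-conjugation), then introduces a single auxiliary Gaussian $G_{\text{aux}}$ whose orthogonal matrix maps the span of the at most $\kappa t = M$ rotated Majorana support vectors into the first $M$ coordinates, exactly your dimension count and alignment step. Even your determinant fix is the same idea in a trivially different spot --- the paper flips a sign inside the aligned block ($O_{\text{aux}}^T \to \mathrm{diag}(-1,1,\dots,1)\,O_{\text{aux}}^T$) while you flip a basis vector of $S^\perp$; both preserve the support condition, so your argument matches the paper's in substance.
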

\noindent 
See Appendix~\ref{appa} for the proof of this lemma. We can use this result to show that $U_t$ preserves the form of 
all but a constant number of transformed
Majorana operators as follows.
\begin{restatable}[Preserved Majoranas]{lemma}{presmaj}\label{presmaj}
\normalfont
$U_t$ preserves the form of transformed Majorana operators $G_A\gamma_iG_A^\dagger$, for $i=M+1 , \hdots, 2n$, in the following sense:
\begin{align}
U_t^\dagger G_A \gamma_i G_A^\dagger U_t
&=G_B^\dagger \gamma_i G_B \label{eqa}.
\end{align}
Therefore,  
$G_A^\dagger U_t G_B^\dagger$ obeys the Majorana decoupling property
\begin{align}
[G_A^\dagger U_t G_B^\dagger, \gamma_i]=0,\>\>\> i>M.
 \label{freemajWz}
\end{align}
\end{restatable}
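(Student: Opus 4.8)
The plan is to reduce both claims to a single commutation fact about the core unitary $u_t$ furnished by Lemma~\ref{Udecomp}. First I would substitute the decomposition $U_t = G_A u_t G_B$ from Eq.~(\ref{udecompeq}) into the left-hand side of Eq.~(\ref{eqa}). Using the unitarity of $G_A$, the conjugating factors $G_A^\dagger$ and $G_A$ meeting in the middle cancel, so that
\begin{align}
U_t^\dagger G_A \gamma_i G_A^\dagger U_t = G_B^\dagger\, u_t^\dagger\, \gamma_i\, u_t\, G_B.
\end{align}
Thus the entire statement hinges on showing $u_t^\dagger \gamma_i u_t = \gamma_i$, i.e.\ $[u_t, \gamma_i] = 0$, for every $i > M$.

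The crux is the following parity argument. By Lemma~\ref{Udecomp}, $u_t$ is generated by even-weight Majorana strings supported entirely on indices in $[M]$. Consider any such generator, a string $\gamma_{j_1}\cdots\gamma_{j_{2k}}$ with all $j_\ell \leq M$. For any fixed $i > M$ we have $i \neq j_\ell$ for all $\ell$, so the anticommutation relation $\{\gamma_i, \gamma_{j_\ell}\} = 0$ applies at each position. Commuting $\gamma_i$ past the full string therefore produces a sign $(-1)^{2k} = +1$, so $\gamma_i$ commutes with every such even-weight string. Since $u_t$ is built (as a product of exponentials, or by linearity and multiplication) from generators each commuting with $\gamma_i$, it follows that $[u_t, \gamma_i] = 0$ for all $i > M$.

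Combining the two observations gives $U_t^\dagger G_A \gamma_i G_A^\dagger U_t = G_B^\dagger \gamma_i G_B$, which is Eq.~(\ref{eqa}). For the Majorana decoupling property in Eq.~(\ref{freemajWz}), I would conjugate Eq.~(\ref{eqa}) by $G_B$ on the left and $G_B^\dagger$ on the right; since $G_B U_t^\dagger G_A = (G_A^\dagger U_t G_B^\dagger)^\dagger$ and $G_A^\dagger U_t G_B^\dagger = u_t$, this yields $u_t^\dagger \gamma_i u_t = \gamma_i$, equivalently $[G_A^\dagger U_t G_B^\dagger, \gamma_i] = 0$ for $i > M$. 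Equivalently, one may simply observe $G_A^\dagger U_t G_B^\dagger = u_t$ at the outset and read off the decoupling directly from $[u_t, \gamma_i]=0$.

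There is no serious analytic obstacle here; the entire content is the even-weight bookkeeping, and the single point that must be handled with care is ensuring the parity count is exactly the weight of each generator. This is precisely why the even-weight guarantee of Lemma~\ref{Udecomp} is essential: had $u_t$ contained an odd-weight string on $[M]$, commuting $\gamma_i$ through would flip the sign and the Majorana $\gamma_i$ would fail to be preserved. I would therefore make the dependence on that hypothesis explicit rather than treat the commutation as automatic.
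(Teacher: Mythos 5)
Your proposal is correct and follows essentially the same route as the paper: substitute the decomposition $U_t = G_A u_t G_B$ from Lemma~\ref{Udecomp}, cancel the $G_A$ factors, and invoke $[u_t,\gamma_i]=0$ for $i>M$ to obtain both Eq.~(\ref{eqa}) and Eq.~(\ref{freemajWz}). The only difference is that you spell out the even-weight parity argument behind $[u_t,\gamma_i]=0$, which the paper simply cites as a consequence of Lemma~\ref{Udecomp}; making that dependence explicit is a reasonable addition, not a deviation.
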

\begin{proof} The left-hand side of Eq.~(\ref{eqa}) can be written as follows:
\begin{align}
U_t^\dagger G_A \gamma_i G_A^\dagger U_t &=      G_B^\dagger u_t^\dagger   \gamma_i  u_t G_B\\
&=G_B^\dagger \gamma_i G_B,
\end{align}
where we used the fact that $[u_t, \gamma_i]=0$ from Lemma \ref{Udecomp}. Eq.~(\ref{freemajWz}) follows from Eq.~(\ref{eqa}).
\end{proof}

Lemma \ref{presmaj} shows that Eq.~(\ref{udecompeq}) implies the Majorana decoupling condition in Eq.~(\ref{freemajWz}). For the fermionic implementation, by choosing parity-preserving Gaussian $G_A$ and $G_B$, we also  ensure that Eq.~(\ref{freemajWz}) implies Eq.~(\ref{udecompeq}). 
On the other hand, for the qubit implementation, the existence of Gaussian $G_a$ and $G_b$ such that $W_t=G_a^\dagger U_t G_b^\dagger$ and $[W_t, \gamma_i]=0$ for $i>M$ does not imply that $G_a$ and $G_b$ satisfy Eq.~(\ref{udecompeq}).  This is illustrated for the case $(n,m)=(2,1)$ and the example $U_t =\gamma_4$. Using $G_a=\gamma_1$  and $G_b=-\gamma_3$ gives us $W_t=\gamma_1 \gamma_3 \gamma_4$ which satisfies the Majorana decoupling condition but is not supported on Majorana operators $\gamma_i$ with $i \in [2m]$. This is why, for the qubit implementation, we introduce notation $G_a$ and $G_b$ in addition to $G_A$ and $G_B$. However, we will show that learning $G_a, G_b$ satisfying the  Majorana decoupling condition is sufficient to learn our unknown unitary. Since Lemma \ref{presmaj} shows there exist Gaussian unitaries $G_a$, $G_b$ such that $W_t=G_a^\dagger U_t G_b^\dagger$ satisfies
\begin{align}
[W_t, \gamma_i]=0,\>\>\> i>M, \label{freemajW}
\end{align}
we now proceed to devise a tomographic scheme that discovers these Gaussian unitaries.
Consider the matrix $c_{xk}\in \mathbb{R}^{(T(n)+2n)\times 2n}$ defined as follows:
\begin{align}
c_{xk} = \frac{1}{d}\tr[U_t^\dagger \gamma_k U_t \tilde \gamma_x] \label{defcx},
\end{align}
where $d=2^n$ is the Hilbert space dimension. Here $\tilde \gamma_x$ is defined as
\begin{align}\label{eq:gamma_tilde}
\tilde \gamma_{x}&=\gamma_x \>\>\text{if } \gamma_x^\dagger =\gamma_x,\\
\tilde \gamma_{x}&=i\gamma_x \>\> \text{if }\gamma_x^\dagger =-\gamma_x,
\end{align}
where $\gamma_x$ is a Majorana string defined by the $2n$ bit-string $x$ as $\gamma_x = \gamma_1^{x_1}\hdots \gamma_{2n}^{x_{2n}}$. This definition ensures that our operator basis $\tilde \gamma_x$ is Hermitian. As shown in Fig.~\ref{cxmat}, $c$ is made up of submatrices $c^{(1)}$, defined by Eq.~(\ref{conedef}), and $c^{(2)}$, defined as $c^{(2)}_{xk}=c_{xk}$ with $\alpha_x \geq 2$. 
Moreover, $c^{(1)}$ can be written using its singular-value decomposition as $c^{(1)}=U\Sigma V^T$. As a consequence of Lemma \ref{presmaj}, $c^{(1)}$ has all but a constant number of singular values with value 1. Intuitively, this property is related to Eq.~(\ref{eqa}) which says that there are $2n-M$ Majorana operators $\{G_A \gamma_i G_A^\dagger\}$ that transform under $U_t$ to another set of Majorana operators $\{G_B^\dagger \gamma_i G_B\}$, and the fact that $c^{(1)}$ describes how $U_t$ transforms the Majorana operator $\gamma_i$ to a linear combination of Majorana operators. This is described in the following result (see Appendix \ref{appa} for the proof).

\begin{figure}[htb]
\centering
		\includegraphics[
		width= \columnwidth
		]{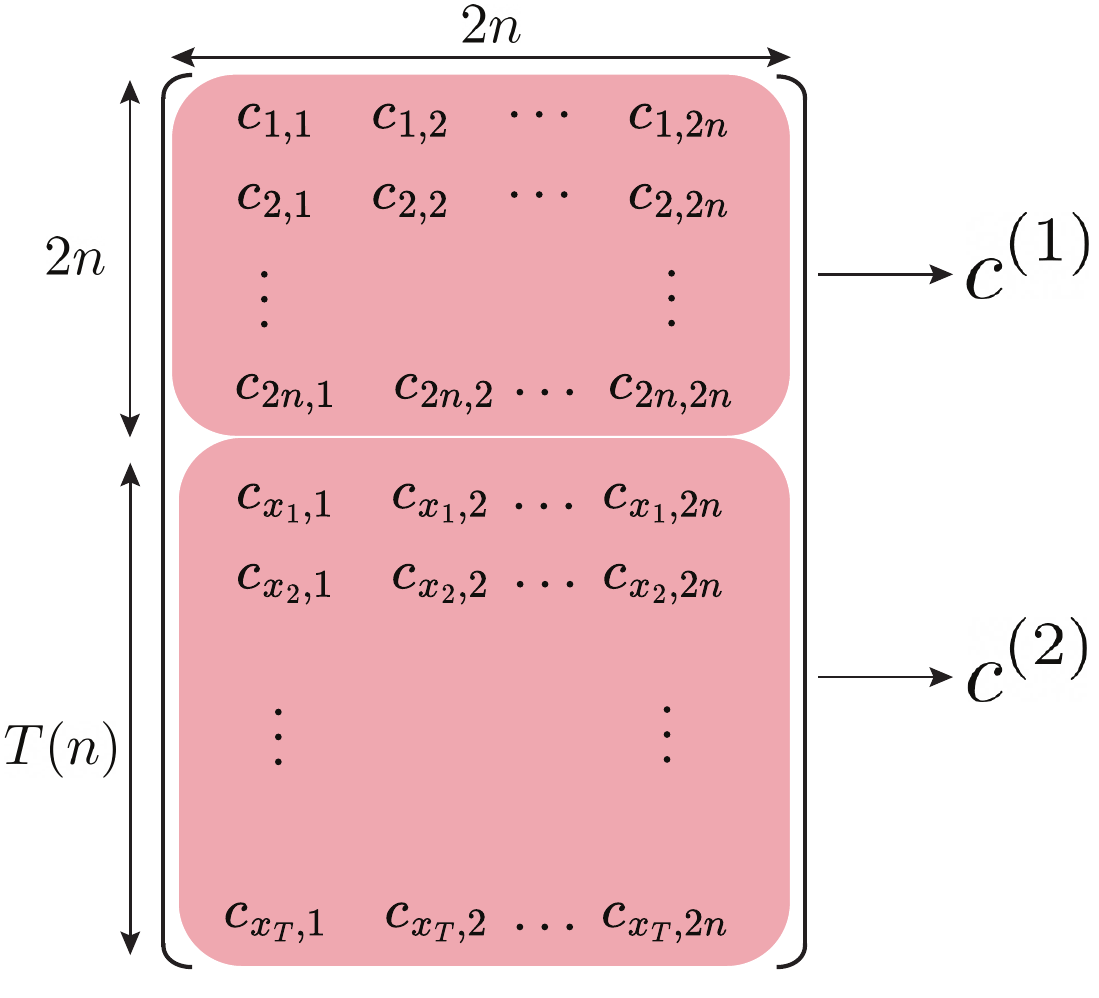}
	\caption{ The matrix $c_{xk}$ defined in Eq.~(\ref{defcx}). Here $T(n)$ is a polynomial defined in Lemma \ref{constructW}.}
	\label{cxmat}
\end{figure}

\begin{restatable}[Useful properties of the matrix $c_{xk}$]{lemma}{propc}\label{propc}
\normalfont
The matrix $c_{xk}$ has orthonormal columns, making all its singular values equal to  1. The matrix $c^{(1)}$ has at least $2n-M$ singular values equal to 1. Moreover, the matrix $  \hat c^{(1)}$, the learned version of $c^{(1)}$, has at least $2n-M$ singular values $\hat d_i$ that satisfy 
\begin{align}
\lvert \hat d_i -1 \rvert \leq \lVert E^{(1)}\rVert,\>\>\> i=M+1, \hdots, 2n \label{sing1},
\end{align}
where $\hat{c}^{(1)}_{jk}=c_{jk}^{(1)}+E_{jk}^{(1)}$, and $E^{(1)}$ is the error from tomography. 
\end{restatable}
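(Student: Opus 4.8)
The plan is to treat each column of $c$ as the coefficient vector of the Heisenberg-evolved Majorana operator $U_t^\dagger \gamma_k U_t$ expanded in the Hermitian operator basis $\{\tilde\gamma_x\}$, and then to extract the three claims from, respectively, the anticommutation relations, Lemma~\ref{presmaj}, and a standard singular-value perturbation bound.

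For the first claim I would note that $U_t^\dagger\gamma_kU_t$ is Hermitian and that, by the decomposition $U_t=G_Au_tG_B$ of Lemma~\ref{Udecomp}, it is supported only on Majorana strings of weight at most $M$: since $u_t$ is generated by even-weight products of $\gamma_1,\dots,\gamma_M$, conjugation by $u_t$ sends a single $\gamma_j$ to a weight-$\le M$ combination, and conjugation by the surrounding Gaussians preserves Majorana weight. Hence the rows of $c$ (weight $1$ through $M$) already capture the entire support of every column, so $c_{xk}=\hs{\tilde\gamma_x}{U_t^\dagger\gamma_kU_t}$ are the complete (real) expansion coefficients. Parseval's identity for the orthonormal basis $\{\tilde\gamma_x\}$ then gives $\sum_x c_{xk}c_{xk'}=\tfrac1d\tr[(U_t^\dagger\gamma_kU_t)(U_t^\dagger\gamma_{k'}U_t)]=\tfrac1d\tr[\gamma_k\gamma_{k'}]=\delta_{kk'}$, using unitarity and $\{\gamma_k,\gamma_{k'}\}=2\delta_{kk'}I$. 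Thus $c^Tc=I$, the columns are orthonormal, and every singular value equals $1$.

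For the second claim I would view $c^{(1)}$ as the matrix, in the $\{\gamma_j\}$ basis, of the linear map $\mathcal{C}$ sending a weight-$1$ operator $\phi=\sum_k\beta_k\gamma_k$ to the weight-$1$ component of $U_t^\dagger\phi U_t$, whose coefficient vector is exactly $c^{(1)}\beta$. Since $c^{(1)}$ is a row-submatrix of $c$, we have $(c^{(1)})^Tc^{(1)}\preceq c^Tc=I$, so $\norma{c^{(1)}}\le1$. Now take the orthonormal vectors $\beta^{(i)}$ representing $G_A\gamma_iG_A^\dagger$ for $i>M$, orthonormal because $G_A$ is Gaussian. Equation~(\ref{eqa}) states $U_t^\dagger(G_A\gamma_iG_A^\dagger)U_t=G_B^\dagger\gamma_iG_B$, which is already weight-$1$, so $c^{(1)}\beta^{(i)}$ is the unit coefficient vector of $G_B^\dagger\gamma_iG_B$ and $\norma{c^{(1)}\beta^{(i)}}=1=\norma{\beta^{(i)}}$. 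Because $I-(c^{(1)})^Tc^{(1)}\succeq0$ and $\langle\beta^{(i)},(I-(c^{(1)})^Tc^{(1)})\beta^{(i)}\rangle=0$, each $\beta^{(i)}$ is an eigenvector of $(c^{(1)})^Tc^{(1)}$ with eigenvalue $1$; these $2n-M$ orthonormal eigenvectors yield $2n-M$ singular values equal to $1$.

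For the third claim I would invoke Weyl's (Mirsky's) singular-value perturbation inequality: for $\hat c^{(1)}=c^{(1)}+E^{(1)}$ one has $\absa{\sigma_i(\hat c^{(1)})-\sigma_i(c^{(1)})}\le\norma{E^{(1)}}$ for every $i$, and applying this to the $2n-M$ indices where $\sigma_i(c^{(1)})=1$ gives $\absa{\hat d_i-1}\le\norma{E^{(1)}}$. I expect the main obstacle to be the second claim: one must correctly identify $c^{(1)}$ with the projected conjugation map and then combine the contraction bound $\norma{c^{(1)}}\le1$ with the norm-saturation supplied by Lemma~\ref{presmaj} to force the $\beta^{(i)}$ to be exact eigenvalue-$1$ eigenvectors. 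By contrast, the first claim is a direct Parseval-plus-anticommutation computation and the third is a black-box perturbation estimate.
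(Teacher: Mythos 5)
Your proposal is correct and takes essentially the same route as the paper's proof: orthonormality of the columns of $c$ via the trace computation $\frac{1}{d}\tr[\tilde\gamma_x\tilde\gamma_y]=\delta_{xy}$ (your weight-$\le M$ support bound from Lemma~\ref{Udecomp} is valid and even tighter than the paper's weight-$w$ bound from Lemma~\ref{majoranawgt}, and since the rows of $c$ run up to weight $w\ge M$ nothing is lost), the unit singular values of $c^{(1)}$ extracted by applying Eq.~(\ref{eqa}) to the columns $O^A e_i$ with $i>M$, and Weyl's inequality (Theorem~\ref{weylth}) for the perturbed matrix $\hat c^{(1)}$. The only divergence is cosmetic: where the paper computes the Gram matrix $Y^TY$ of $Y=c^{(1)}O^A$ and reads off an identity block, you reach the same conclusion from the contraction bound $(c^{(1)})^Tc^{(1)}\preceq I$ plus norm saturation, which forces each $\beta^{(i)}$ to be an eigenvalue-$1$ eigenvector of $(c^{(1)})^Tc^{(1)}$ --- an equivalent and slightly cleaner finish.
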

We now show that the singular values of $c^{(1)}$ with value 1 correspond to the Majorana operators satisfying Eq.~(\ref{freemajW}). This can be seen from the following computation. Consider the unitary $W_t=G_a^\dagger U_t G_b^\dagger$, where
\begin{align}
G_a \gamma_i G_a^\dagger &= \sum_k O^{a}_{ki}\gamma_k, \label{Gadef}\\
G_b \gamma_i G_b^\dagger &= \sum_k O^b_{ki}\gamma_k.\label{Gbdef}
\end{align}
We can compute the evolved Majorana operator $W_t^\dagger \gamma_i W_t$ with $i >M$ as follows:
\begin{align}
&W_t^\dagger \gamma_i W_t  \notag\\ 
=&\sum_j (c^{(1)}O^a)_{ji}G_b \gamma_j G_b^\dagger + \sum_{x: 2\leq \alpha_x \leq w} (c^{(2)}O^a)_{xi}G_b \gamma_x G_b^\dagger,
\end{align}
where $c^{(2)}_{xi}$ is a submatrix of $c_{xi}$ such that $\alpha_x \geq 2$ ($\alpha_x$ is the Hamming weight of $x$). In the second term, it is sufficient to consider strings with weights upper-bounded by a constant $w=(\kappa+1)^t$. This property follows from the facts that Gaussian unitaries do not increase the Majorana weight of a Majorana string under conjugation, and that $U_t$ only contains a constant number of non-Gaussian unitaries, which can increase the Majorana weight. For more details, see Lemma \ref{majoranawgt} in Appendix \ref{appa}.  
Assuming that the basis is ordered such that $\Sigma_{ii}=1$ for $i>M$, and using $O^a = V$ then gives us
\begin{align}
W_t^\dagger \gamma_i W_t &= \sum_j U_{ji} G_b \gamma_j G_b^\dagger,
\end{align}
where we used the fact that $\Sigma_{ii} = 1$, and that $c_{xk}$ has orthonormal columns. Finally, using Eq.~(\ref{Gbdef}) and $O^b = U^T$ gives us Eq.~(\ref{freemajW}). Since we have access to $\hat c^{(1)}=c^{(1)}+E^{(1)}$ instead of $c^{(1)}$, where $E^{(1)}$ is the tomography error, we have the following result (see Appendix \ref{appa} for the proof).

\begin{restatable}[Constructing the unitary $W_t$]{lemma}{constructW}\label{constructW}
\normalfont
Given $\hat c^{(1)}=c^{(1)} + E^{(1)}$, we can compute descriptions of Gaussian unitaries $G_a$ and $G_b$ such that $W_t=G_a^\dagger U_t G_b^\dagger$ satisfies the following property: 
\begin{align}
\norma{[W_t, \gamma_i]} \leq \epsilon_0 , \>\>\> i>M,\label{approxmaj}
\end{align}
where $\epsilon_0$ obeys the following bound: 
\begin{align}
\epsilon_0&\leq T_1(n) \norma{E^{(1)}}^{1/2},
\label{epsilon0} 
\end{align}
where $T_1(n)$ is a polynomial defined by $T_1(n) = (\sqrt{5}T(n)+2n+1)$ with $T(n) = \sum_{x: 2\leq \alpha_x \leq w} 1=\text{poly}(n)$. Here $x$ is a bitstring of length $2n$, and $w=(\kappa+1)^t$ is a constant defined in Lemma \ref{majoranawgt}.
Here $G_a$ and $G_b$ are defined in Eqs.~(\ref{Gadef}) and (\ref{Gbdef}), where $O^a=V$ and $O^b=U^T$. The orthogonal matrices $U$ and $V$ are defined from the singular-value decomposition of $\hat c^{(1) }$ as $\hat c^{(1)} = U\Sigma V^T$, where $\Sigma = \text{diag}(\hat{d}_1, \hat{d}_2, \hdots, \hat{d}_{2n})$ with $\hat{d}_i$ the singular values  and where $U$, $V$ are orthogonal matrices. Moreover, we can always modify $O^a$ and $O^b$ suitably such that they are in SO$(2n)$ and Eq.~(\ref{approxmaj}) holds.
\end{restatable}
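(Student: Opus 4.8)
The plan is to bound $\norma{[W_t,\gamma_i]}$ for $i>M$ directly from the explicit expression for $W_t^\dagger\gamma_i W_t$ derived immediately above the statement, specialized to the SVD choice $O^a=V$, $O^b=U^T$. Since $W_t$ is unitary, $\norma{[W_t,\gamma_i]}=\norma{W_t^\dagger\gamma_i W_t-\gamma_i}$, so it suffices to control the operator $\Delta_i:=W_t^\dagger\gamma_i W_t-\gamma_i$. First I would substitute $c^{(1)}=\hat c^{(1)}-E^{(1)}=U\Sigma V^T-E^{(1)}$ and use $G_b\gamma_jG_b^\dagger=\sum_kU_{jk}\gamma_k$ (the content of $O^b=U^T$). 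The Gaussian contraction then separates into a clean diagonal part and an $E^{(1)}$-remainder, yielding
\[
\Delta_i=(\hat d_i-1)\gamma_i-\sum_k(U^TE^{(1)}V)_{ki}\gamma_k+\sum_{x:2\le\alpha_x\le w}(c^{(2)}V)_{xi}\,G_b\gamma_xG_b^\dagger,
\]
a sum of three pieces whose norms I would bound separately.

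The first two pieces are linear combinations of single Majoranas and contribute only at order $\norma{E^{(1)}}$. For the first, Lemma \ref{propc} gives $\absa{\hat d_i-1}\le\norma{E^{(1)}}$ for $i>M$, so its spectral norm is at most $\norma{E^{(1)}}$. For the second, bounding each $\norma{\gamma_k}=1$ and each entry of $U^TE^{(1)}V$ by $\norma{E^{(1)}}$ (the matrix being orthogonally conjugate to $E^{(1)}$), the triangle inequality over the $2n$ indices gives a contribution at most $2n\norma{E^{(1)}}$.

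The crux is the third, weight-$\ge2$ piece, which is where the square root enters. The essential input is that the full matrix $c$ has orthonormal columns (Lemma \ref{propc}); hence for $i>M$,
\[
\norma{c^{(2)}Ve_i}_2^2=1-\norma{c^{(1)}Ve_i}_2^2 .
\]
Because $\norma{c^{(1)}Ve_i}_2\ge\norma{\hat c^{(1)}Ve_i}_2-\norma{E^{(1)}}=\hat d_i-\norma{E^{(1)}}\ge1-2\norma{E^{(1)}}$, the right-hand side is $O(\norma{E^{(1)}})$, so $\norma{c^{(2)}Ve_i}_2=O(\norma{E^{(1)}}^{1/2})$. Applying the triangle inequality over the $T(n)$ strings with $2\le\alpha_x\le w$ and using $\norma{G_b\gamma_xG_b^\dagger}=1$ converts this $\ell_2$ estimate into an operator-norm bound of order $T(n)\norma{E^{(1)}}^{1/2}$. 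This is the step I expect to be the main obstacle: unlike the other two pieces it cannot be made linear in $\norma{E^{(1)}}$, and it is exactly the global unitarity of $U_t$—encoded in the orthonormality of the columns of $c$—that forces the transformed Majorana $W_t^\dagger\gamma_i W_t$ to carry only $O(\norma{E^{(1)}})$ weight on strings of length $\ge2$. Collecting the three bounds and absorbing the lower-order linear terms via $\norma{E^{(1)}}\le\norma{E^{(1)}}^{1/2}$ in the only nontrivial regime $\norma{E^{(1)}}\le1$ produces $\epsilon_0\le(\sqrt5\,T(n)+2n+1)\norma{E^{(1)}}^{1/2}=T_1(n)\norma{E^{(1)}}^{1/2}$.

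It remains to realize $O^a,O^b$ in SO$(2n)$. The SVD only guarantees $U,V\in$ O$(2n)$, so if either determinant is $-1$ I would flip the sign of a single column of $V$ (respectively $U$) whose index is $\le M$. Such a flip keeps the matrix orthogonal and multiplies its determinant by $-1$, while leaving $Ve_i$—and the contractions of $U$ that appear in $\Delta_i$—unchanged for every $i>M$ (the flipped index lies in the decoupled-out block $\le M$). Hence all three bounds above survive verbatim, and $G_a,G_b$ can be taken in SO$(2n)$ while Eq.~(\ref{approxmaj}) continues to hold.
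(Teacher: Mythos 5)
Your proposal is correct and follows essentially the same route as the paper's proof: the same three-part decomposition of $W_t^\dagger \gamma_i W_t - \gamma_i$ under the SVD choice $O^a=V$, $O^b=U^T$, the same use of the orthonormal columns of $c$ (Lemma \ref{propc}) to force $\absa{c^{(2)}Ve_i} = O(\norma{E^{(1)}}^{1/2})$, and a determinant fix that is exactly the paper's trick in disguise, since multiplying $G_a$, $G_b$ by the Gaussian $\bar G$ with $\bar O = \text{diag}(-1,1,\hdots,1)$ amounts precisely to your sign flip of a column with index $\leq M$. The only differences are cosmetic: your reverse-triangle-inequality estimate yields $\absa{c^{(2)}Ve_i}^2 \leq 4\norma{E^{(1)}}$, slightly tighter than the paper's $5\norma{E^{(1)}}$ obtained by expanding the cross terms, and still consistent with the stated $T_1(n)$.
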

\noindent We therefore have that by learning $\hat{c}^{(1)}$ and performing its singular-value decomposition, we obtain a unitary $W_t=G_a^\dagger U_t G_b^\dagger$ that approximately commutes with the Majorana operators $\gamma_i$ with $i>M$.
In Algorithm \ref{alg:algo1}, we measure observables in states obtained from applications of the unitary $U_t$ to learn the matrix $c^{(1)}$ and use it to compute the descriptions of $G_a$ and $G_b$.
Details of this algorithm are presented in Subsec.~\ref{alg1det}. 

We now proceed to show how the Majorana decoupling condition for $W_t$ from Eq.~(\ref{approxW2}) helps us learn the unitary $U_t$. As discussed in Sec.~\ref{mainressec}, for the fermionic implementation, this condition can be used to show that the unitary $W_t$ acts almost as the identity on modes labeled $i>m$. This is described in the following result (see Appendix \ref{fermionicimp} for the proof).
\begin{restatable}[Majorana decoupling for $W_t$ in the fermionic implementation implies Pauli decoupling for modes $i>m$]{lemma}{lastlemma} \label{lastlemma}
\normalfont
Consider the fermionic implementation where the Gaussian unitaries $G_{j}$ in $U_t$ correspond to orthogonal matrices in SO$(2n)$, and the unitary $W_t$ is obtained from Algorithm \ref{alg:algo1}.
Then $W_t$ satisfies the following:
\begin{align}
\frac{1}{2}\sum_{P \in \{X, Y, Z \}}\norma{[W_t, P_i]} \leq 3n\epsilon_0,\>\>\> i >m+1,\label{lastlemmares}
\end{align}
given $\norma{[W_t, \gamma_j]} \leq \epsilon_0$ for $j>M$ (see Lemma \ref{constructW}).
\end{restatable}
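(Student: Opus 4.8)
The plan is to bound the three single-qubit commutators $\norma{[W_t,X_i]}$, $\norma{[W_t,Y_i]}$, $\norma{[W_t,Z_i]}$ individually and then add them. The one structural fact I would exploit at the outset is that, in the fermionic implementation, $G_a$ and $G_b$ can be taken in SO$(2n)$ (Lemma \ref{constructW}) and $U_t$ is parity-preserving by construction, so $W_t=G_a^\dagger U_t G_b^\dagger$ is parity-preserving. Equivalently, $[W_t,\Gamma]=0$ \emph{exactly}, where $\Gamma=\prod_{j=1}^n Z_j$ is the total fermion-parity operator. I would then combine this exact identity with the approximate relations $\norma{[W_t,\gamma_j]}\le\epsilon_0$ for all $j>M$ supplied by Lemma \ref{constructW}.

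The core of the argument is to rewrite each single-qubit Pauli on a qubit $i>m+1$ using only Majorana operators of index $>M=2m$, possibly multiplied by $\Gamma$. For $Z_i$ this is immediate, since $Z_i=-i\,\gamma_{2i-1}\gamma_{2i}$ and both indices exceed $M$. The genuine difficulty is with $X_i$ and $Y_i$: their Jordan-Wigner forms $X_i=(\prod_{j<i}Z_j)\gamma_{2i-1}$ and $Y_i=(\prod_{j<i}Z_j)\gamma_{2i}$ carry a string of $Z$'s reaching all the way down to qubit $1$, hence involve low-index Majoranas that Lemma \ref{constructW} says nothing about. This is the main obstacle, and I would resolve it with the parity identity $\prod_{j<i}Z_j=\Gamma\prod_{j\ge i}Z_j$, which lets me write $X_i=\Gamma\,\Big(\prod_{j=i}^{n}Z_j\Big)\gamma_{2i-1}$ and $Y_i=\Gamma\,\Big(\prod_{j=i}^{n}Z_j\Big)\gamma_{2i}$. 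After the factor $\Gamma$ is removed, every surviving Majorana has index at least $2i-1$, and the hypothesis $i>m+1$ guarantees $2i-1>M$; this is the step where parity-preservation (i.e.\ the fermionic implementation) is essential, consistent with why the qubit implementation instead requires the separate conjugation by $\bar U_d$.

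To conclude, I would strip the parity factor using $[W_t,\Gamma A]=\Gamma[W_t,A]$ (valid because $[W_t,\Gamma]=0$ and $\Gamma$ is unitary), and then expand the commutator of $W_t$ with a product of Majoranas by the Leibniz rule, bounding $\norma{[W_t,\gamma_{j_1}\cdots\gamma_{j_p}]}\le\sum_{k=1}^p\norma{[W_t,\gamma_{j_k}]}\le p\,\epsilon_0$, using $\norma{\gamma_{j}}=1$ and $\norma{[W_t,\gamma_{j_k}]}\le\epsilon_0$ for each index $>M$. Since each single-qubit Pauli on qubit $i>m+1$ is a product of $\Gamma$ with at most $2n$ such Majoranas, this gives $\norma{[W_t,P_i]}\le 2n\,\epsilon_0$ for every $P\in\{X,Y,Z\}$. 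Summing over the three Paulis and dividing by two yields $\tfrac12\sum_{P}\norma{[W_t,P_i]}\le 3n\,\epsilon_0$, as claimed.

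I expect the only nontrivial ingredient to be the parity rewriting of $X_i$ and $Y_i$: it is exactly what converts an operator whose Jordan-Wigner support spans all qubits into one expressible through decoupled (large-index) Majoranas alone. Everything else — the $Z_i$ decomposition, the Leibniz expansion, and the final counting — is routine, and the constant $3n$ is deliberately loose (the true bound is at most $2n$).
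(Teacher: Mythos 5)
Your proof is correct, and it takes a genuinely different route from the paper's. The paper also starts from the same structural fact (in the fermionic implementation, $G_a,G_b$ can be taken in SO$(2n)$, so $W_t$ is a sum of \emph{even-weight} Majorana strings, which is equivalent to your exact identity $[W_t,\Gamma]=0$ with $\Gamma=\prod_{j=1}^n Z_j$), but then it decomposes the \emph{unitary} rather than the observables: it splits $W_t=W_t^{\text{L}}+W_t^{\text{NL}}$, where $W_t^{\text{NL}}$ collects all strings containing some $\gamma_j$ with $j>M$, introduces the (anti)commutant maps $f_j(X)=\tfrac12[X,\gamma_j]\gamma_j$ and $\bar f_j(X)=\tfrac12\{X,\gamma_j\}\gamma_j$, telescopes $W_t^{\text{NL}}=f_{M+1}(W_t^{\text{NL}})+\sum_{k>M+1}f_k\bar f_{k-1}\cdots\bar f_{M+1}(W_t^{\text{NL}})$, and uses the contraction property $\norma{f_j\bar f_k(X)}\leq\norma{f_j(X)}$ together with $\norma{f_k(W_t)}\leq\epsilon_0/2$ to conclude $\norma{W_t^{\text{NL}}}\leq n\epsilon_0$, whence $\norma{[W_t,P_i]}=\norma{[W_t^{\text{NL}},P_i]}\leq 2n\epsilon_0$ (evenness is what makes $f_k(W_t^{\text{NL}})=f_k(W_t)$, since an odd-weight string on the first $M$ Majoranas would anticommute with $\gamma_k$). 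You instead rewrite each Pauli: your identity $\prod_{j<i}Z_j=\Gamma\prod_{j\geq i}Z_j$ turns $X_i,Y_i$ (and trivially $Z_i$) into $\Gamma$ times a product of at most $2n$ Majoranas all of index $>M$, after which the Leibniz rule and $\norma{[W_t,\gamma_j]}\leq\epsilon_0$ finish the job; your steps $[W_t,\Gamma A]=\Gamma[W_t,A]$ and $\norma{[W_t,\gamma_{j_1}\cdots\gamma_{j_p}]}\leq p\,\epsilon_0$ are sound, and $i\geq m+1$ already gives $2i-1>M=2m$ as needed. The trade-off: the paper's argument proves the stronger intermediate statement that $W_t$ is $n\epsilon_0$-close in spectral norm to an operator supported on the first $M$ Majoranas, which bounds $\norma{[W_t,O]}\leq 2n\epsilon_0$ uniformly for \emph{every} norm-one observable $O$ on register $B$, not only single-qubit Paulis; your argument is more elementary (no projection calculus), gives weight-sensitive bounds (e.g., $\norma{[W_t,Z_i]}\leq 2\epsilon_0$, and your remark that the total is really $\leq 2n\epsilon_0$ is right), and makes maximally transparent where parity preservation enters---precisely the ingredient whose absence forces the $\bar U_d$ conjugation in the qubit implementation.
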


\noindent We will refer to the property obeyed by the unitary in Eq.~(\ref{lastlemmares}) as (approximate) Pauli decoupling. As we will see later, this can be used to show that the unitary acts approximately as the identity on modes labeled $i>m$.

We now consider the qubit implementation where we allow Gaussian unitaries $G_j$ to correspond to orthogonal matrices in O$(2n)$. As discussed in Sec.~\ref{mainressec}, since such unitaries can have odd Majorana weight, the condition in Eq.~(\ref{approxW2}) does not imply that $W_t$ has no support on qubits labeled $i>m$.
Instead, we consider the unitary $\bar W_t$ defined as follows.

\begin{restatable}[ 
]{definition}{defbarw}\label{defbarw}
\normalfont
We define the unitary $\bar W_t$ as follows:
\begin{align}
\bar W_t = \bar U_d^\dagger W_t \bar U_d, \label{defwbar}
\end{align}
where $\bar U_d =  V_d U_d$. The real diagonal unitaries $U_d$  and $V_d$ are defined as
\begin{align}
V_d
&=\sum_x p(\alpha_x) \ketbra{x}{x}_{AB}, \label{vdef2}\\
U_d&=\sum_{x^\prime}p(\alpha_{x^\prime}) \ketbra{x^\prime}{x^\prime}_A.\label{udef2}
\end{align}
Here $U_d$ acts on register $A$ (which contains qubits labeled $i \in [m]$), register $B$ contains qubits labeled $i >m$, $V_d$ acts on all qubits, $\alpha_x$ is the Hamming weight of the state $\ket{x}$, and $p(\alpha)=(-1)^{\alpha (\alpha-1)/2}$.
\end{restatable}
\noindent In the case where $[W_t, \gamma_i]=0$ holds for $i>M$, we can show that $\bar W_t = \bra{0}W_t \ket{0}_A\otimes I_B$ (see Lemma \ref{propW} in Appendix \ref{appb} for more details). We now consider the practical case where $ \norma{[W_t, \gamma_i]} \leq \epsilon_0$  as in Eq.~(\ref{approxmaj}). As shown in Lemma \ref{introwtp} in Appendix \ref{appb}, $\bar W_t$ now satisfies the following:
\begin{align}
\frac{1}{2}\sum_{P \in \{X,  Y, Z \}}\norm{[\bar W_t, P_i]} \leq \epsilon_P, \>\>
\epsilon_P=(2n+3)\epsilon_0,
\label{paulidec}
\end{align}
for all qubits in register $B$.

We now aim to approximate unitaries $W_t$ ($\bar{W}_t$) for the fermionic (qubit) implementation as quantum channels on a constant number of modes (qubits). From this channel, we can then obtain a unitary $V_S$ from the Stinespring dilation applied to the learned quantum channel. To this end, for an arbitrary $n$-mode (qubit) unitary $U$, we introduce the reduced quantum channel $\mc{E}^{\mc{U}}_{m}$ acting on $m$ modes (qubits), defined as follows (see Definition 1 in Ref.~\cite{yunchao} for more details).

\begin{restatable}[Reduced quantum channel \cite{yunchao}]{definition}{locglodef}\label{locglodef}
\normalfont
For a given unitary $U$, we define the reduced channel $\mc{E}^{\mc U}_{m}(\rho)$ that acts on the first $m$ qubits as follows:
\begin{align}
\mc{E}^{\mc U}_{m}(\rho) = \tr_{\geq m+1} \left(U \rho \otimes \frac{I_{B}}{2^{n-m}} U^\dagger \right),
\end{align}
where $\mc U$ is the quantum channel corresponding to the unitary $U$, $I_B$ corresponds to modes (qubits) in register $B$, and  $\tr_{\geq m+1}$ denotes tracing over modes (qubits) $m+1, \hdots , n$ in register $B$.
\end{restatable}
Using ideas developed in Ref.~\cite{yunchao}, Eqs.~(\ref{lastlemmares}) and (\ref{paulidec}) can be used to show that the unitary acts approximately as the identity on modes (qubits) in register $B$. In fact, the reduced quantum channel can be used as a proxy for the unitary channel as shown in the following result.

\begin{restatable}[Approximating a unitary channel  as a reduced quantum channel]{lemma}{localglobal}\label{localglobal}
\normalfont
The channel $\mc{E}^{\mc U}_{m}$ is a CPTP (completely positive and trace preserving) map that satisfies
\begin{align}
\mc{D}_{\diamond}\Big( \mc{U}, \mc{E}^{\mc U}_{m} \otimes \mc I_{B} \Big) \leq n\epsilon  \label{eqcdec},
\end{align}
where $\mc{D}_{\diamond}(\mc{E}_1,\mc{E}_2)$ is defined in Eq.~(\ref{defdiamond}), and $\mc{I}_B$ is the identity channel on register $B$, given the following condition holds:
\begin{align}
\frac{1}{2} \sum_{ P \in \{X, Y, Z  \}}\norm{[U , P_i]} \leq \epsilon
\label{eqadec},
\end{align}
where $i \geq m+1$.
\end{restatable}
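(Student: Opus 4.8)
The plan is to dispatch the CPTP claim quickly and then bound the diamond distance by a hybrid argument that replaces the action of $U$ on the qubits (modes) of register $B$ one at a time. For the CPTP part, I would observe that $\mc E^{\mc U}_m$ is a composition of three CPTP maps: appending the fixed state $I_B/2^{n-m}$, conjugation by the unitary $U$, and the partial trace $\tr_{\ge m+1}$. A composition of CPTP maps is CPTP, and tensoring with the identity channel $\mc I_B$ preserves this, so $\mc E^{\mc U}_m\otimes\mc I_B$ is CPTP.

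For the diamond-distance bound I would introduce a family of interpolating channels $\Phi_S$ indexed by subsets $S\subseteq B$: in $\Phi_S$ the unitary $U$ acts on register $A$ together with a private maximally mixed ancilla qubit for each $j\in S$ (subsequently traced out, with the real qubit $j$ passed through untouched), and on the real qubit for each $j\in B\setminus S$. Then $\Phi_\emptyset=\mc U$ and $\Phi_B=\mc E^{\mc U}_m\otimes\mc I_B$, so by the triangle inequality for the diamond norm it suffices to bound each single-qubit step $\mc D_\diamond(\Phi_S,\Phi_{S\cup\{i\}})$ and sum over the at most $n-m\le n$ qubits $i\in B$.

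The key estimate is the per-qubit bound. Consecutive hybrids $\Phi_S$ and $\Phi_{S\cup\{i\}}$ are wired identically except that $U$ acts on the real qubit $i$ in the former and on a fresh maximally mixed ancilla (with real $i$ preserved) in the latter. I would introduce the Pauli-twirled operator $\bar U^{(i)}=\frac14\sum_{P\in\{I,X,Y,Z\}}P_i U P_i$, which commutes with every single-qubit Pauli on slot $i$ and therefore factorizes as $\tilde U\otimes I_i$. Since $U-P_iUP_i=[U,P_i]P_i$, one gets $\norm{U-\bar U^{(i)}}\le\frac14\sum_{P\in\{X,Y,Z\}}\norm{[U,P_i]}$. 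Once $U$ is replaced by the decoupled $\bar U^{(i)}$, it no longer matters whether slot $i$ is fed by the real qubit (then preserved) or by a fresh maximally mixed ancilla (then traced out): both wirings produce the identical output, so $\Phi_S$ and $\Phi_{S\cup\{i\}}$ differ from this common decoupled channel only through the substitution $U\to\bar U^{(i)}$. I would control that substitution with the operator inequality $\norm{X\rho X^\dagger - Y\rho Y^\dagger}_1\le(\norm{X}+\norm{Y})\,\norm{X-Y}\,\norm{\rho}_1$ applied with $X=U$, $Y=\bar U^{(i)}$ (both acting trivially on the reference, so the bound extends to purifications). This yields $\mc D_\diamond(\Phi_S,\Phi_{S\cup\{i\}})\le\frac12\sum_{P\in\{X,Y,Z\}}\norm{[U,P_i]}\le\epsilon$ by the hypothesis in Eq.~(\ref{eqadec}); summing over $i\in B$ gives the claimed $n\epsilon$.

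The main obstacle is the per-qubit step, and specifically the bookkeeping that localizes the comparison to qubit $i$: one must verify that $\Phi_S$ and $\Phi_{S\cup\{i\}}$ are genuinely identical on all registers except the handling of slot $i$, and that the reference system and the other ancillas can be absorbed into an effective input so that the single-qubit decoupling argument applies with the same unitary $U$ and the same commutator bound $\norm{[U,P_i]}$. Once this localization is set up correctly the commutator estimate and the operator inequality are routine; the only remaining care is tracking the factors of $\tfrac12$ in the diamond-norm normalization so that each qubit contributes at most $\epsilon$.
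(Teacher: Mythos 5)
Your proof is correct, and at the architectural level it matches the paper's: both bound the diamond distance by a qubit-by-qubit hybrid over register $B$ in which the action of $U$ on each real qubit $i > m$ is traded for action on a fresh maximally mixed ancilla, at cost at most $\epsilon$ per qubit, summed over the $n-m \le n$ qubits. The difference lies in how the single-qubit step is established. The paper realizes the hybrid by inserting swap channels $\mc{S}_{[2],[4]}$ between the real qubits and the ancilla register and telescoping them through $\mc U$, importing the crucial estimate $\norm{\mc S_{i,j}(\mc{U} \otimes I_r)-(\mc{U}\otimes I_r)\mc{S}_{i,j}}_{\diamond} \leq 2\epsilon$ wholesale from Eq.~(45) of Ref.~\cite{yunchao}, so its own text only manages the bookkeeping of partial traces and the triangle inequality. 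You instead prove the per-step bound from scratch via the Pauli twirl $\bar U^{(i)}=\tfrac14\sum_{P\in\{I,X,Y,Z\}}P_iUP_i$: the identity $U-P_iUP_i=[U,P_i]P_i$ gives $\norm{U-\bar U^{(i)}}\le\tfrac14\sum_{P\in\{X,Y,Z\}}\norm{[U,P_i]}\le\epsilon/2$; the twirled operator commutes with the full algebra on slot $i$ and hence factorizes as $\tilde U\otimes I_i$, so the two wirings of slot $i$ (real qubit passed through versus mixed ancilla traced out) coincide exactly after the replacement; and the H\"older-type inequality $\norm{X\rho X^\dagger-Y\rho Y^\dagger}_1\le(\norm{X}+\norm{Y})\norm{X-Y}\norm{\rho}_1$ -- valid even though $\bar U^{(i)}$ is not unitary, since $\norm{\bar U^{(i)}}\le1$ by convexity -- converts the operator bound into a channel bound. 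Two such replacements plus the factor $\tfrac12$ in Eq.~(\ref{defdiamond}) give $\mc D_\diamond(\Phi_S,\Phi_{S\cup\{i\}})\le\epsilon$, reproducing the paper's $2(n-m)\epsilon$ trace-norm budget with the same final constant. Your CPTP argument (composition of state-appending, unitary conjugation, and partial trace) is a cleaner route to the fact the paper proves by exhibiting the Kraus operators $E_{\bar z \bar x}$. In short, what the paper's route buys is brevity by citation; what yours buys is a self-contained argument with explicit constants, which is essentially the mechanism underlying the cited equation of Ref.~\cite{yunchao} anyway.
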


\noindent We now proceed to use the above result to approximate the unitaries $W_t$ and $\bar W_t$ in the fermionic and qubit implementations, respectively, as reduced quantum channels. For the fermionic implementation, the property from Eq.~(\ref{lastlemmares}) and Lemma \ref{localglobal} allows us to show that the $m$-mode channel $\mc{E}^{\mc{W}_t}_{m} \otimes \mc I$, where $\mc{E}^{\mc{W}_t}_{m}$ acts on register $A$, is close in diamond distance to the channel corresponding to the unitary $ W_t$ on $n$ modes, as described in Eq.~(\ref{eqcdec}). Similarly, for the qubit implementation, the property from Eq.~(\ref{paulidec}) and Lemma \ref{localglobal} allows us to show that the $m$-qubit quantum channel $\mc{E}^{\bar{\mc{W}}_t}_{m} \otimes \mc I$, where $\mc{E}^{\bar{\mc{W}}_t}_{m}$ acts on register $A$, is close in diamond distance to the channel corresponding to the unitary $\bar W_t$ on $n$ qubits, as described in Eq.~(\ref{eqcdec}).  

We now proceed to learn the $m$-mode (qubit) channels. Any channel $\mc{E}$ on $m$ modes (qubits) can be described via its Choi state
\begin{align}
J(\mc E)=\frac{1}{d_0}\sum_{ij}\mc{E}(\ketbra{i}{j})\otimes \ketbra{i}{j}, \label{choi1}
\end{align}
where $d_0 = 2^m$. 
Since the Choi state $J(\mc {E})$ corresponds to a CPTP map, it also satisfies the following two conditions:
\begin{align}
J(\mc E) &\geq 0, \label{cptp1}\\
\tr_A[J(\mc E)] &=I/d_0 \label{cptp2},
\end{align}
where $A$ denotes the first register in Eq.~(\ref{choi1}).
In Algorithm \ref{alg:algo2}, we use shadow tomography of Pauli observables \cite{yunchao} to learn the Choi state of the channel as $J(\hat{\mc{E}})$ up to some error by measuring
\begin{align}
f_{\alpha \beta}:= \frac{1}{2^n}\tr[S^\dagger (\bar P_\beta\otimes I_{B})S (\bar P_\alpha\otimes I_{B})],\label{paulilearneq}
\end{align}
where $S=W_t$ for the fermionic implementation and $S=\bar{W}_t$ for the qubit implementation, and $\alpha, \beta \in \{I, X, Y, Z \}^{\otimes m}$. 
Note that we use black-box access to $W_t$ or $\bar{W}_t$ when running the tomography process to measure $f_{\alpha \beta}$. As detailed in Section \ref{alg2det}, the coefficients $f_{\alpha\beta}$ can be used to reconstruct the Choi state corresponding to the reduced quantum channels.
Since the state $J(\hat{\mc{E}})$ is learned up to some error, it may not satisfy the CPTP conditions in Eqs.~(\ref{cptp1}) and (\ref{cptp2}). Therefore $J(\hat{\mc{E}})$ is projected (see Subsec.~\ref{alg2det}) to a state $J_p$ which satisfies the CPTP conditions, giving us the following diamond distance bound between the channel $\mc{E}^{\mc{W}_t}$ and the  channel corresponding to the Choi state $J_p$ denoted as $\mc{E}^{\mc{W}_t}_{\text{proj}}$:
\begin{align}
\mc{D}_{\diamond}( \mc{E}^{\mc{W}_t}, \mc{E}^{\mc{W}_t}_{\text{proj}} ) \leq C_3\epsilon_2,
\label{chandisb}
\end{align} 
where $C_3 =   d_0^{11}(3d_0^2+1)/2$ and $\epsilon_2 = \text{max}_{\alpha, \beta}\absa{\hat f_{\alpha \beta} - f_{\alpha \beta}}$. The same result holds for the channel $\mc{E}^{\bar{\mc{W}}_t}$ in the qubit implementation. The Choi state $J_p$ can be used to construct the channel's unitary Stinespring dilation $V_S$ acting on $3m$ modes (qubits) \cite{Nielsen_Chuang_2010}, as shown in Fig.~\ref{cptp}.

This concludes the high-level overview of our learning algorithm. We now describe in detail the main lemmas used to define Algorithms \ref{alg:algo1} and \ref{alg:algo2}.
\subsection{Algorithm \ref{alg:algo1}} \label{alg1det}
We now explain in detail the main ideas and methods used to formulate Algorithm \ref{alg:algo1}. We first consider the qubit implementation. Using methods in Ref.~\cite{JuanHam}, we can estimate $c_{xk}$ by measuring observables $O_k^{+}$ in some state 
prepared by the application of the unknown unitary $U_t$. This leads to the following result (see Appendix \ref{appa} for the proof).
\begin{restatable}[Finding the coefficients $c_{xk}$ (qubit implementation)]{lemma}{lemcx}\label{lemcx}
\normalfont
Let $\mathcal A$ be the $d=2^n$ dimensional Hilbert space upon which the unitary $U_t$ acts. Furthermore, let $\mathcal{B}$ be the Hilbert space of an ancilla register of the same size, and let $\mathcal{C}$ be the Hilbert space of another single ancilla qubit. 
Consider the state $\ket{\psi_x}$ defined as
\begin{align}
\ket{\psi_x} =& \frac{1}{\sqrt{2}}\big[ (U_t\otimes I)\ket{\Phi_d}_{\mathcal{A B}}\ket{0}_{\mathcal C} \notag \\&+ (U_t \otimes I)(\gamma_x^\dagger \otimes I )\ket{\Phi_d}_{\mathcal AB}\ket{1}_{\mathcal C} \big], \label{psixdef}
\end{align}
where
$\ket{\Phi_d}=\frac{1}{\sqrt{d}}\sum_{i=0}^{d-1}\ket{i, i}_{\mathcal{AB}}$
is the maximally entangled state between systems $\mathcal{A}$ and $\mathcal{B}$, and $\gamma_x=\gamma_{1}^{x_1}\hdots \gamma_{ 2n }^{x_{2n}}$. The coefficients $c_{xk}$ [defined in Eq.~(\ref{defcx})] can be obtained using expectation values of observables
\begin{align}
O_k^+ = O_k + O_k^\dagger = (\gamma_k \otimes I)_{\mathcal{AB}}\otimes X_{\mc{C}}, \label{Oplus}\\
O_k^- = iO_k -iO_k^\dagger = (\gamma_k \otimes I)_{\mathcal{AB}}\otimes Y_{\mc{C}} \label{Ominus},
\end{align}
such that $c_{xk}=\tr[\ketbra{\psi_x}{\psi_x} O_k^+]$ for $\gamma_x^\dagger = \gamma_x$, and $c_{xk} = \tr[\ketbra{\psi_x}{\psi_x} O_k^{-}]$ for $\gamma_x^\dagger =-\gamma_x$.
\end{restatable}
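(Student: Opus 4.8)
The plan is to compute the two expectation values $\tr[\ketbra{\psi_x}{\psi_x}O_k^\pm]$ directly and match them against the definition $c_{xk}=\frac{1}{d}\tr[U_t^\dagger\gamma_k U_t\tilde\gamma_x]$. First I would abbreviate the two branches of the superposition as $\ket{a}=(U_t\otimes I)\ket{\Phi_d}$ and $\ket{b}=(U_t\otimes I)(\gamma_x^\dagger\otimes I)\ket{\Phi_d}$, so that $\ket{\psi_x}=\frac{1}{\sqrt2}(\ket{a}\ket{0}_{\mathcal C}+\ket{b}\ket{1}_{\mathcal C})$ and $\ketbra{\psi_x}{\psi_x}$ expands into four blocks indexed by the ancilla register $\mathcal C$. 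Since $O_k^+$ and $O_k^-$ carry $X_{\mathcal C}$ and $Y_{\mathcal C}$ respectively, tracing over $\mathcal C$ annihilates the diagonal blocks $\ket a\bra a\otimes\ketbra{0}{0}$ and $\ket b\bra b\otimes\ketbra{1}{1}$ and keeps only the cross terms. Using $\tr[\ketbra{0}{1} X]=\tr[\ketbra{1}{0} X]=1$ and $\tr[\ketbra{0}{1} Y]=i=-\tr[\ketbra{1}{0} Y]$, I obtain $\tr[\ketbra{\psi_x}{\psi_x}O_k^+]=\frac12(\bra b(\gamma_k\otimes I)\ket a+\bra a(\gamma_k\otimes I)\ket b)$ and $\tr[\ketbra{\psi_x}{\psi_x}O_k^-]=\frac{i}{2}(\bra b(\gamma_k\otimes I)\ket a-\bra a(\gamma_k\otimes I)\ket b)$.

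Next I would evaluate the two overlaps with the ricochet identity for the maximally entangled state, $\bra{\Phi_d}(M\otimes I)\ket{\Phi_d}=\frac1d\tr[M]$. This gives $\bra a(\gamma_k\otimes I)\ket b=\frac1d\tr[U_t^\dagger\gamma_k U_t\gamma_x^\dagger]$, and, after one use of cyclicity of the trace, $\bra b(\gamma_k\otimes I)\ket a=\frac1d\tr[U_t^\dagger\gamma_k U_t\gamma_x]$. Substituting back yields the symmetric combination $\tr[\ketbra{\psi_x}{\psi_x}O_k^+]=\frac{1}{2d}(\tr[U_t^\dagger\gamma_k U_t\gamma_x]+\tr[U_t^\dagger\gamma_k U_t\gamma_x^\dagger])$ and the antisymmetric combination $\tr[\ketbra{\psi_x}{\psi_x}O_k^-]=\frac{i}{2d}(\tr[U_t^\dagger\gamma_k U_t\gamma_x]-\tr[U_t^\dagger\gamma_k U_t\gamma_x^\dagger])$.

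Finally I would split into the two Hermiticity cases dictated by the definition of $\tilde\gamma_x$. If $\gamma_x^\dagger=\gamma_x$, the two traces in the $O_k^+$ expression coincide and the symmetric combination collapses to $\frac1d\tr[U_t^\dagger\gamma_k U_t\gamma_x]=c_{xk}$, as claimed. If $\gamma_x^\dagger=-\gamma_x$, the difference in the $O_k^-$ expression becomes $2\gamma_x$, so the antisymmetric combination collapses to $\frac{i}{d}\tr[U_t^\dagger\gamma_k U_t\gamma_x]=\frac1d\tr[U_t^\dagger\gamma_k U_t(i\gamma_x)]=c_{xk}$, where the factor $i$ is exactly the one absorbed into $\tilde\gamma_x$. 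As a sanity check, both expressions are manifestly real, consistent with $c_{xk}\in\mathbb R$ since $U_t^\dagger\gamma_k U_t$ and $\tilde\gamma_x$ are Hermitian.

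I expect the only real obstacle to be bookkeeping rather than anything conceptual: keeping the phases arising from $\tr_{\mathcal C}[\ketbra{i}{j} P]$ aligned with the correct cross term, and verifying that the $X_{\mathcal C}$ versus $Y_{\mathcal C}$ choice selects precisely the symmetric versus antisymmetric trace combination, so that the Hermitian and anti-Hermitian cases each land on $c_{xk}$ without a spurious sign or a stray factor of $i$.
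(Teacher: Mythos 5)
Your proposal is correct and follows essentially the same route as the paper's proof: both isolate the ancilla cross terms, apply the ricochet identity $\bra{\Phi_d}(M\otimes I)\ket{\Phi_d}=\tr[M]/d$, and split on the Hermiticity of $\gamma_x$ to land on $c_{xk}=\tr[\ketbra{\psi_x}{\psi_x}O_k^+]$ or $\tr[\ketbra{\psi_x}{\psi_x}O_k^-]$ exactly as in Appendix A. The only cosmetic difference is that the paper routes the computation through the single non-Hermitian operator $O_k=(\gamma_k\otimes I)_{\mathcal{AB}}\otimes\ketbra{1}{0}_{\mathcal{C}}$ and then takes real or imaginary parts of $\tr[\rho_x O_k]$, whereas you evaluate $\langle O_k^{\pm}\rangle$ directly from the cross-term expansion of $\ketbra{\psi_x}{\psi_x}$.
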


\noindent We can apply the above result to measure the matrix elements of $c^{(1)}$ as expectation values of observables $O_k^{+}$ in states $\ket{\psi_j}:=|\psi_{x^{(j)}} \rangle$, where $x^{(j)}$ is a weight-1 bit string with $x_j=1$. The tomography scheme is defined as follows. We first reorder the Hilbert spaces as $\mc C \otimes \mc A \otimes \mc B$ so that observables $O_k^{+}$ can be written as Majorana strings of weight two. This defines new Majorana operators $\hat \gamma_i$ on the Hilbert space  $\mc C \otimes \mc A \otimes \mc B$ as follows:
\begin{align}
\hat{\gamma}_1 &=X_{\mc{C}},\\
\hat{\gamma}_2 &=Y_{\mc{C}},\\
\hat{\gamma}_i&=Z_{\mc{C}} \gamma_{i-2}, \>\>\> i=3, \hdots, 4n+2,
\end{align}
where $\gamma_i$ are the Majorana operators defined in the same way as in Eqs.~(\ref{jw1}) and (\ref{jw2}) on $\mc{A}\otimes \mc{B}$ containing qubits $1, \hdots, 4n$. We can then use the shadow tomography scheme based on the fermionic Gaussian unitary ensemble in Ref.~\cite{Zhao_2021} to obtain estimates of $c^{(1)}_{jk}$, giving us the following result (see Appendix \ref{appa} for the proof).

\begin{restatable}[Estimating the matrix $c^{(1)}$ through shadow tomography for the qubit implementation]{lemma}
{shadowestf}\label{shadowestf}
\normalfont
Using shadow tomography with the fermionic Gaussian unitary ensemble \cite{Zhao_2021}, we can estimate the matrix $c^{(1)}_{jk}=\tr[U_t^\dagger \gamma_k U_t \gamma_j]/d$ by measuring the expectation values of the operators $O_k^{+}$ in state $\ket{\psi_j}$. With probability $\geq 1-\delta$, we obtain the matrix $\hat{c}^{(1)}=c^{(1)}+E^{(1)}$ such that $\norma{E^{(1)}}\leq \norma{E^{(1)}}_2 \leq \epsilon$. For each row $j \in [2n]$ of $c^{(1)}_{jk}$, we need $N_c$ copies of the state $\ket{\psi_j}$, where
\begin{align}
N_c = \left(1+\frac{\epsilon}{6n}\right) \log(8n^2/\delta) \frac{4n^2 (4n+1)}{\epsilon^2}. \label{Mcomp}
\end{align}
Moreover, the required classical post-processing to compute the expectation values can be done efficiently \cite{Zhao_2021}.
\end{restatable}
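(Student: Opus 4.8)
The plan is to reduce the estimation of the whole matrix $c^{(1)}$ to the estimation of a family of weight-two Majorana observables, and then to invoke the variance and post-processing guarantees of the fermionic Gaussian shadow protocol of Ref.~\cite{Zhao_2021}. First I would use Lemma~\ref{lemcx} together with the reordering to $\mc C \otimes \mc A \otimes \mc B$ to write each entry as $c^{(1)}_{jk} = \tr[\ketbra{\psi_j}{\psi_j} O_k^+]$, and then check that in the $\hat\gamma$ basis the Hermitian observable $O_k^+ = (\gamma_k)_{\mc A}\, X_{\mc C}$ is, up to a phase, the weight-two Majorana string $-i\,\hat\gamma_2\hat\gamma_{k+2}$ on the $(2n+1)$-mode system carrying $4n+2$ Majorana operators; indeed $\hat\gamma_2\hat\gamma_{k+2} = Y_{\mc C} Z_{\mc C}\gamma_k = iX_{\mc C}\gamma_k = iO_k^+$. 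This places us exactly in the regime where the classical-shadow protocol built from the fermionic Gaussian unitary ensemble applies, and where the classical post-processing (estimators expressed through Pfaffians of the sampled orthogonal matrices) is efficient, both of which I would cite directly from Ref.~\cite{Zhao_2021}.

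The quantitative core is the variance bound. For a weight-$2k$ Majorana observable on $\tilde n$ modes, the shadow-protocol variance is at most $\binom{2\tilde n}{2k}\binom{\tilde n}{k}^{-1}$; specializing to $k=1$ and $\tilde n = 2n+1$ gives $\binom{4n+2}{2}\binom{2n+1}{1}^{-1} = (2n+1)(4n+1)/(2n+1) = 4n+1$. Hence each entry $c^{(1)}_{jk}$ is estimated from the shadows of $\ket{\psi_j}$ with single-observable variance at most $4n+1$, and, crucially, all $2n$ observables $\{O_k^+\}_{k\in[2n]}$ in a fixed row $j$ are read off from the \emph{same} pool of $N_c$ classical shadows of $\ket{\psi_j}$, which is why the sample count is stated per row rather than per entry.

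To turn this into the stated $N_c$ I would apply a Bernstein-type concentration inequality to the estimator of each entry. Demanding per-entry additive error $\epsilon'' = \epsilon/(2n)$ and per-entry failure probability $\delta' = \delta/(4n^2)$ yields, through the Bernstein bound whose linear-in-$\epsilon''$ correction produces the prefactor $1+\epsilon''/3 = 1+\epsilon/(6n)$, the requirement $N_c \geq (1+\epsilon/(6n))\log(2/\delta')\,(4n+1)/\epsilon''^2 = (1+\epsilon/(6n))\log(8n^2/\delta)\,4n^2(4n+1)/\epsilon^2$, using $\log(2/\delta') = \log(8n^2/\delta)$. A union bound over all $4n^2$ entries (the $2n$ rows times the $2n$ columns) then guarantees $\max_{j,k}\absa{E^{(1)}_{jk}} \leq \epsilon/(2n)$ with probability at least $1-\delta$.

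Finally I would convert the entrywise guarantee into the claimed norm bound: since there are $4n^2$ entries each of error at most $\epsilon/(2n)$, the Frobenius norm obeys $\norma{E^{(1)}}_2 \leq \sqrt{4n^2}\cdot \epsilon/(2n) = \epsilon$, and $\norma{E^{(1)}} \leq \norma{E^{(1)}}_2$ holds because the spectral norm never exceeds the Frobenius norm. I expect the main obstacle to be pinning down the variance estimate in the enlarged $(2n+1)$-mode system, namely verifying that $O_k^+$ is genuinely a weight-two Majorana operator after the reordering so that the $\binom{2\tilde n}{2}\binom{\tilde n}{1}^{-1}$ bound applies, and then tracking the constants through the concentration and union-bound bookkeeping, in particular the shared-samples feature within each row and the precise origin of the $1+\epsilon/(6n)$ prefactor, so that they reproduce $N_c$ exactly rather than merely up to constants.
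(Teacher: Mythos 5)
Your proposal is correct and follows essentially the same route as the paper's proof: reorder to $\mc C \otimes \mc A \otimes \mc B$ so that $O_k^+ = i\hat\gamma_{k+2}\hat\gamma_2$ is a weight-two Majorana observable on $\bar n = 2n+1$ qubits with shadow norm $\binom{4n+2}{2}/\binom{2n+1}{1} = 4n+1$, invoke the guarantee of Ref.~\cite{Zhao_2021} with the rescalings $\epsilon \to \epsilon/2n$ and failure probability split so that $\log(2L/\delta)$ becomes $\log(8n^2/\delta)$, and convert the entrywise bound to $\norma{E^{(1)}} \leq \norma{E^{(1)}}_2 \leq \epsilon$. Your per-entry union bound over $4n^2$ entries with $\delta' = \delta/(4n^2)$ is just a cosmetic reorganization of the paper's per-row bookkeeping ($L = 2n$ observables per state, then $\delta \to \delta/2n$ over rows) and reproduces Eq.~(\ref{Mcomp}) exactly.
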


For the fermionic implementation, the same result holds except that we use 
states
that can be obtained from a parity-preserving quantum circuit, and the observables $O_k^{\pm}$ are modified accordingly. For more details, see see Appendix \ref{fermionicimp}. 

We remark that,  while the shadow tomography step used to construct the matrix $c^{(1)}$ does not flag cases where it produces an inaccurate reconstruction of the matrix, we can make the failure probability $\delta$ of this step  to be exponentially small in $n$ because of the dependence of $N_c$
on $\delta$ in Eqs.~(\ref{Mcomp}).

\begin{algorithm}[h]
\label{alg:algo1}
 \caption{Algorithm for learning the matrix $c^{(1)}$ and descriptions of orthogonal matrices $O^a$ and $O^b$ defining $G_a$ and $G_b$, respectively.}
 \KwIn{Accuracy $\epsilon$, failure probability $\delta$.\\
Qubit implementation: $N_c$ copies of the state $\ket{\psi_j}$ for  each $j \in [2n]$ , where an upper bound on $N_c$ and the definition of $\ket{\psi_j}$ are given in Eqs.~(\ref{Mcomp}) and (\ref{psixdef}), respectively.\\
Fermionic implementation: $N_c^{\text{f}}$ copies of the state $\ket{ \psi_{j}^{\text{f}}}$ for  each $j \in [2n]$ , where $N_c^{\text{f}}$ and $\ket{\psi_j^{\text{f}}}$ are defined in Eqs.~(\ref{expncf}) and (\ref{psijfdef}), respectively.\\
 }

\KwOut{ Matrices $V$ and $U$ obtained from a classical description of the  matrix $\hat c^{(1)}$ such that $\hat{c}^{(1)} = U\Sigma V^T$, and $  \norma{\hat c^{(1)}-c^{(1)}}  \leq \epsilon$ with probability $\geq 1-\delta$.}

For each $j \in [2n]$, perform shadow tomography using the fermionic Gaussian unitary ensemble to estimate Majorana observables $O_k^{+}$ for all $k \in [2n]$, defined in Eq.~(\ref{Oplus}) for the qubit implementation and Eq.~(\ref{oplusferm}) for the fermionic implementation. Construct the matrix $\hat c^{(1)}$ (see Lemma \ref{lemcx} for more details) ;

Compute the singular value decomposition of $\hat c^{(1)} = U \Sigma V^T$, and reorder the basis such that the singular values are written in ascending order;

 \Return  $ V$ and $U$.
\end{algorithm}

\subsection{Algorithm \ref{alg:algo2}} \label{alg2det}
We now explain in detail the main ideas and methods used to formulate Algorithm \ref{alg:algo2}.
The Choi state of any channel $\mc E$ acting on $m$ modes (qubits), defined in Eq.~(\ref{choi1}), can be written as follows:
\begin{align}
J(\mc E)&=\frac{1}{d_0}\sum_{ijkl}\sum_{\alpha \beta} c_{\alpha, ij}c_{\beta, lk}\tr[\mc{E}(\bar P_\alpha)\bar P_\beta] \ketbra{k}{l}\otimes \ketbra{i}{j},
\end{align}
where $d_0=2^m$, $ijkl$ are indices over the computational basis elements of the $m$-mode (qubit) Hilbert space, $\alpha \beta$ are indices used to describe the Pauli string $\bar P_\alpha \in \{I, X, Y, Z \}^{\otimes m}$, and $c_{\alpha, ij}=\tr[\ketbra{i}{j}\bar P_\alpha]/d_0$. As shown in Lemma \ref{noisyChoi} in Appendix \ref{appc}, we can use the result
\begin{align}
\frac{1}{d_0} \tr[\mc{E}(\bar P_\alpha)\bar P_\beta] = f_{\alpha \beta}
\end{align}
to express $J(\mc {E})$ in terms of the matrix elements $f_{\alpha \beta}$, defined in Eq.~(\ref{paulilearneq}). 

We first consider the qubit implementation. We can measure the matrix $f_{\alpha \beta}$ in a similar way to how we measured the matrix $c^{(1)}$, i.e.~by constructing Pauli observables whose expectation values in states obtained from the application of the unknown unitary $U_t$ give $f_{\alpha \beta}$. This is described for the qubit implementation in the following result.

\begin{restatable}[Learning Pauli observables with shadow tomography for the qubit implementation]{lemma}{PauliLearning}\label{PauliLearning} 
\normalfont
The entries of the matrix $f_{\alpha \beta}$ defined as
\begin{align}
f_{\alpha \beta}=\frac{1}{2^n}\tr[S^\dagger (\bar P_\beta\otimes I_{B})S (\bar P_\alpha\otimes I_{B})] \label{paulilearneq2},
\end{align}
where $S=\bar W_t$ from Eq.~(\ref{defwbar}), $\bar P_{\alpha} \in \{I, X, Y, Z \}^{\otimes m}$ are Pauli strings supported on the first $m$ qubits and $\alpha, \beta $ are indices for the set of Pauli strings, can be learned using shadow tomography as follows. We estimate the expectation values of observables
\begin{align}
\bar O_\beta = (\bar{P}_\beta \otimes I)_{\mc{AB}} \otimes \ketbra{1}{0}_{\mc{C}}, 
\end{align}
in states
\begin{align}
\lvert\bar \psi_\alpha \rangle =& \frac{1}{\sqrt{2}}\big[(\bar W_t \otimes I)\ket{\Phi_d}_{\mc{AB}}\ket{0}_{\mc{C}}  \notag\\&+ (\bar W_t \otimes I)_{\mc{AB}}(\bar P_\alpha \otimes I)_{\mc{AB}}\ket{\Phi_d}\ket{1}_{\mc{C}}\big], \label{psialpha}
\end{align}
where $\ket{\Phi_d}$ is the maximally entangled state $\frac{1}{d}\sum_i \ket{i,i}_{\mc{AB}}$, and then construct each row of $\hat f_{\alpha \beta}$ (where $\alpha, \beta \in \{I, X, Y, Z\}^{\otimes m}$) such that $\text{max}_{\alpha, \beta}\absa{\hat f_{\alpha \beta}-f_{\alpha \beta}} \leq \epsilon$ with probability $\geq 1-\delta$. The protocol needs $\bar N_c$ copies of the state $\ket{\bar \psi_\alpha}$, where
\begin{align}
\bar N_c = C_1 \frac{\log(C_2/\delta)}{\epsilon^2}, \label{constlem11}
\end{align}
with $C_1=68(3^m)$, $C_2=2^{2m+1}$.
\end{restatable}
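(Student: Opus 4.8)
The plan is to prove the lemma in two stages that mirror the treatment of the matrix $c^{(1)}$ in Lemmas \ref{lemcx} and \ref{shadowestf}: first a channel--state duality identity that re-expresses each entry $f_{\alpha\beta}$ as an expectation value of the observable $\bar O_\beta$ in the state $\ket{\bar\psi_\alpha}$, and then an application of Pauli classical shadows to estimate all of these expectation values simultaneously with the stated sample complexity.

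For the first stage, I would expand $\ketbra{\bar\psi_\alpha}{\bar\psi_\alpha}$ into its four cross terms in the ancilla register $\mc C$ and observe that, because $\bar O_\beta$ contains $\ketbra{1}{0}_{\mc C}$, only the off-diagonal term $(\bar W_t\otimes I)\ket{\Phi_d}\bra{\Phi_d}(\bar P_\alpha\otimes I)(\bar W_t^\dagger\otimes I)$ survives the trace over $\mc C$. Using the maximally entangled-state identity $\bra{\Phi_d}(M\otimes I)\ket{\Phi_d}=\tr[M]/d$ with $d=2^n$ together with the Hermiticity of the Pauli strings $\bar P_\alpha$, this collapses to
\begin{align}
\tr[\ketbra{\bar\psi_\alpha}{\bar\psi_\alpha}\,\bar O_\beta]
&=\frac{1}{2\cdot 2^n}\tr[\bar W_t^\dagger(\bar P_\beta\otimes I_{B})\bar W_t(\bar P_\alpha\otimes I_{B})] \notag\\
&=\tfrac12 f_{\alpha\beta},
\end{align}
so that $f_{\alpha\beta}=2\,\tr[\ketbra{\bar\psi_\alpha}{\bar\psi_\alpha}\bar O_\beta]$. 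Since $f_{\alpha\beta}$ is real (a trace of a product of two Hermitian operators), I can recover it through the Hermitian decomposition $\ketbra{1}{0}_{\mc C}=\tfrac12(X_{\mc C}-iY_{\mc C})$, reducing the task to estimating expectation values of the Hermitian Pauli observables $(\bar P_\beta\otimes I)\otimes X_{\mc C}$ and $(\bar P_\beta\otimes I)\otimes Y_{\mc C}$, each of Pauli weight at most $m+1$.

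For the second stage, I would prepare copies of $\ket{\bar\psi_\alpha}$ using black-box access to $\bar W_t$ (built from $W_t$ and $\bar U_d$) and run classical shadows with random single-qubit Pauli measurements, as in Ref.~\cite{yunchao}. For a fixed $\alpha$, all $\beta$-observables are read off from the same shadow samples, so $\bar N_c$ copies of $\ket{\bar\psi_\alpha}$ suffice. The count then follows from the median-of-means guarantee for classical shadows: estimating $M$ bounded observables to additive error $\epsilon$ with failure probability $\delta$ requires $O\!\left(\epsilon^{-2}\log(M/\delta)\max_i\norma{O_i}_{\text{shadow}}^2\right)$ samples, where for a Pauli observable supported on $w$ qubits $\norma{O_i}_{\text{shadow}}^2\le 3^{w}$. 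Taking $w\le m$ for the Pauli factor (with $\mc C$ measured in a fixed basis) and counting the $4^m$ Pauli strings $\bar P_\beta$ together with their real and imaginary components gives a number of observables of order $4^m$ and yields the stated constants $C_1=68\,(3^m)$ and $C_2=2^{2m+1}$.

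The conceptual heart, and the step I would be most careful about, is the duality identity of the first stage: getting the normalization (the factors of $2$ and $2^n$), the surviving cross term, and the reality of $f_{\alpha\beta}$ exactly right, since everything downstream depends on $f_{\alpha\beta}$ being faithfully represented as a Pauli expectation value. The sample-complexity bound of the second stage is comparatively routine bookkeeping once the known classical-shadow concentration and Pauli shadow-norm estimates are invoked; there the only care is to track the weight of the observables (hence the $3^m$ factor in $C_1$) and the number of observables (hence the $2^{2m+1}$ in $C_2$).
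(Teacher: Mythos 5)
Your proposal is correct and follows essentially the same route as the paper's proof: the same cross-term computation with the maximally entangled state giving $\tr[\ketbra{\bar\psi_\alpha}{\bar\psi_\alpha}\bar O_\beta]=\tfrac12 f_{\alpha\beta}$, the same Hermitian split $\ketbra{1}{0}_{\mc C}=\tfrac12(X_{\mc C}-iY_{\mc C})$ combined with the reality of $f_{\alpha\beta}$, and the same local-Clifford classical-shadow bound $68\cdot 3^{k}\log(2L/\delta)/\epsilon^2$ with reuse of each state $\ket{\bar\psi_\alpha}$ across all $\beta$ and a union bound over $\alpha$. The only immaterial deviations are that the paper, exploiting reality, measures only $\bar O_\beta^{+}$ rather than both quadratures, and bounds the observable weight by $m+1$ directly rather than via your fixed-basis treatment of $\mc C$ --- differences that affect only constants.
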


See Appendix \ref{appc} for the proof.
For the fermionic implementation, the same result holds as above except that we use states
that can be prepared by a parity-preserving quantum circuit, and the observables $O_{\beta}^{+}$ are modified accordingly. For more details, see Appendix \ref{fermionicimp}. 

We remark that, while the shadow tomography step used to construct the matrix $f_{\alpha \beta}$ does not flag cases where it produces an inaccurate reconstruction of the matrix, we can make the failure probability of this step $\delta$ to be exponentially small in $n$ because of the dependence of $\bar N_c$
on $\delta$ in Eqs.~(\ref{constlem11}).

Now that we have the learned version of the Choi state $J(\hat{\mc{E}})$ obtained by $\hat f_{\alpha \beta}$ from Algorithm \ref{alg:algo2}, we construct the projected Choi state $J_p$ that satisfies the CPTP conditions in Eqs.~(\ref{cptp1}) and (\ref{cptp2}). The projection scheme is based on ideas in Ref.~\cite{PLS}. The state $J(\hat{\mc{E}})$ is first projected onto a completely positive map denoted by $J_1$. The state $J_1$ is then projected onto a trace-preserving map denoted by $J_2$. Since $J_2$ may have negative eigenvalues, we construct the final state $J_p$ defined by
\begin{align}
J_p = (1-p)J_2 + \frac{p}{d_0^2}\mathbb{1}\otimes \mathbb{1},
\end{align}
 where $p$ is the solution to the equation $(1-p)\lambda_{\text{min}} + p/d_0^2=0$, and $\lambda_{\text{min}}$ is the minimum eigenvalue of $J_2$. One can find this eigenvalue efficiently since the channel has constant dimension. This choice of $p$ ensures $J_p$ has non-negative eigenvalues. We can then show that, given $\norma{J(\hat{\mc{E}})-J(\mc{E})} \leq \epsilon_1$, the projected Choi state $J_p$ obeys $\norma{J(\mc{E}) - J_p}_1 \leq  C_r \epsilon_1$, where $C_r = 3d_0^4+d_0^2$, giving us the channel distance bound in Eq.~(\ref{chandisb}). For more details of the projection scheme, see Lemma \ref{regularizeChoi} in Appendix \ref{appc}. The Choi state $J_p$ on modes (qubits) labeled $1, \hdots, m$, can then be used to construct the unitary Stinespring dilation $V_S$, as shown in Fig.~\ref{cptp}. Using descriptions of $G_a$, $G_b$ (and $\bar U_d$ for the qubit implementation), we obtain the description of the unitary $U_t^{(\ell)}$ that approximates $U_t$. This concludes our exposition of the learning algorithm.

 \begin{algorithm}[h]
\label{alg:algo2}
\caption{Algorithm for learning the Choi state $J(\hat{\mc{E}})$ corresponding to the reduced quantum channel. }
\KwIn{Accuracy $\epsilon$, failure probability $\delta$,\\
Qubit implementation: $\bar N_c $, defined in Eq.~(\ref{constlem11}), copies of the states $\lvert\bar \psi_\alpha \rangle$, defined in Eq.~(\ref{psialpha}), for  each $\alpha \in \{I, X, Y, Z \}^{\otimes m}$.\\
Fermionic implementation: $\bar N_c^{\text{f}}$, defined in Eq.~(\ref{ncfbar}), copies of the states $\lvert\bar \psi_\alpha^{\text{f}} \rangle$, defined in Eq.~(\ref{psialphaf}), for  each $\alpha \in \{I, X, Y, Z \}^{\otimes m}$.\\
}
\KwOut{ A classical description of the state $J( \hat{\mc{E}})$ such that $\max_{\alpha \beta} \lvert \hat f_{\alpha \beta} 
- f_{\alpha \beta} \rvert \leq \epsilon$, and $\norma{J(\hat{\mc{E}}) - J(\mc{E})} \leq d_0^6 \epsilon$, where $J(\mc{E})$ is the Choi state corresponding to the channel $ \mc{E}=  \mc{E}^{{\mc{W}_t}}_m$ for the fermionic implementation, and the channel $ \mc{E}=  \mc{E}_m^{\bar{\mc{W}_t}}$ for the qubit implementation.}

For each $\alpha$, perform shadow tomography using the local Clifford unitary ensemble to estimate Pauli observables $\bar O_\beta^{+}$ defined in Eq.~(\ref{step2obsa}) and Eq.~(\ref{obsf2}) for the qubit implementation and the fermionic implementation, respectively, to construct the matrix $\hat f_{\alpha \beta}$ (see Lemma \ref{PauliLearning} and Appendix \ref{fermionicimp} for more details) ;

Compute the classical description of the learned Choi state $J(\hat{\mc{E}})$ from the matrix elements $\hat f_{\alpha \beta}$;

\Return  $J(\hat{\mc{E}})$\
\end{algorithm}

\section{Matchgate Hierarchy}
\label{mchier}
In this section, we show that fermionic unitaries with a constant number of non-Gaussian gates are, in general, not within the matchgate hierarchy \cite{cudby2023gaussian}. For $n$ qubits (modes), we define an infinite family of gates $\mc{M}_k$ called the matchgate hierarchy. We define the set $\Gamma_1 = \{ \gamma_\mu : \mu \in [2n]\}$. Each set $\mc{M}_k$ can then be defined recursively as follows:
\begin{align}
\mc{M}_1&=\{M \in U(2^n): M=\sum_\mu a_\mu \gamma_\mu, a_\mu \in \mathbb{R} \},\\
\mc{M}_k &= \{ M \in U(2^n) : M \Gamma_1 M^\dagger \subseteq \mc{M}_{k-1}\},\>\>\>k \geq 2. 
\end{align}
We note here that $\mc{M}_2$ corresponds to Gaussian unitaries. 
Recent work  \cite{cudby2023gaussian} has shown that there is an efficient algorithm for learning
unitaries in any finite level of the matchgate hierarchy. We show that arbitrary fermionic unitaries with just two non-Gaussian gates (belonging to the third level of the matchgate hierarchy) lie outside any finite level of the matchgate hierarchy. This is stated in the following lemma.

\begin{restatable}[Example of $U_t$ outside the matchgate hierarchy]{lemma}{outsideMH}\label{outsideMH}
\normalfont
The unitary $U_t=KG(\theta)K$ with two non-Gaussian gates $K$, where 
\begin{align}
K&=\exp(i\pi \gamma_{1}\gamma_{2}\gamma_{3}\gamma_{4}/4),\\
G(\theta)&=\exp(\theta \gamma_1 \gamma_5 ),
\end{align}
$\theta = \pi/p$, and $p$ is an odd integer, does not belong to any finite level of the matchgate hierarchy.
\end{restatable}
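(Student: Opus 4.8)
The plan is to argue by contradiction, exploiting the recursive definition $\mathcal{M}_k=\{M:M\Gamma_1 M^\dagger\subseteq\mathcal{M}_{k-1}\}$. If $U_t\in\mathcal{M}_k$ for some finite $k$, then in particular $U_t\gamma_5 U_t^\dagger\in\mathcal{M}_{k-1}$. Hence it suffices to show that the single conjugated operator $U_t\gamma_5 U_t^\dagger$ lies outside $\mathcal{M}_j$ for \emph{every} finite $j$. The nesting $\mathcal{M}_j\subseteq\mathcal{M}_{j+1}$, which one checks directly from the definition, lets me assume $k\geq 2$ without loss of generality.

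First I would compute the conjugation action explicitly. Writing $P=\gamma_1\gamma_2\gamma_3\gamma_4$ (Hermitian, $P^2=I$) so that $K=(I+iP)/\sqrt2$, a short calculation gives $K\gamma_\mu K^\dagger=\gamma_\mu$ for $\mu\notin\{1,2,3,4\}$ and $K\gamma_\mu K^\dagger=iP\gamma_\mu$ otherwise. Introducing the weight-three Hermitian unitary $Q:=i\gamma_2\gamma_3\gamma_4$ (so $Q^2=I$), this action restricts to a $\pi/2$ rotation in the plane $\mathrm{span}\{\gamma_1,Q\}$, namely $K\gamma_1 K^\dagger=-Q$ and $KQK^\dagger=\gamma_1$, while $K\gamma_5 K^\dagger=\gamma_5$. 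The Gaussian rotation obeys $G(\theta)\gamma_5 G(\theta)^\dagger=\sin(2\theta)\gamma_1+\cos(2\theta)\gamma_5$ and leaves $Q$ invariant. Composing the three layers of $U_t=KG(\theta)K$ (innermost $K$, then $G(\theta)$, then outermost $K$) then yields
\begin{align}
U_t\gamma_5 U_t^\dagger=\cos(2\theta)\,\gamma_5-\sin(2\theta)\,Q=:B(2\theta),
\end{align}
where $B(\psi):=\cos\psi\,\gamma_5-\sin\psi\,Q$.

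The crux is a self-similarity of $B(\psi)$ under conjugation. Since $\gamma_5$ and $Q$ anticommute and both square to the identity, $B(\psi)$ is a Hermitian unitary, and a direct computation gives the angle-doubling identity $B(\psi)\gamma_5 B(\psi)=B(2\psi)$. Moreover, because weight-one and weight-three Majorana monomials are linearly independent, $B(\chi)\in\mathcal{M}_1$ (a real combination of single Majoranas) if and only if its $Q$-component vanishes, i.e.\ $\sin\chi=0$. Iterating the angle-doubling identity through the recursive definition shows that $B(\psi)\in\mathcal{M}_k$ forces $B(2^{k-1}\psi)\in\mathcal{M}_1$, hence $\sin(2^{k-1}\psi)=0$. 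Applying this to $\psi=2\theta=2\pi/p$ gives $\sin(2^{k}\pi/p)=0$, i.e.\ $p\mid 2^{k}$, which is impossible for odd $p>1$ and any finite $k$. Therefore $B(2\theta)=U_t\gamma_5 U_t^\dagger$ belongs to no finite level, contradicting $U_t\gamma_5 U_t^\dagger\in\mathcal{M}_{k-1}$, and the claim follows.

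I expect the main obstacle to be the bookkeeping that isolates the invariant plane $\mathrm{span}\{\gamma_5,Q\}$: verifying $KQK^\dagger=\gamma_1$ requires carefully tracking anticommutation signs in a product of three weight-three strings, and one must confirm that $G(\theta)$ genuinely fixes $Q$ and that the three conjugation layers compose without generating additional Majorana strings. The conceptual heart---the reason $p$ odd is essential---is that the doubling map $\psi\mapsto 2\psi$ never carries $2\pi/p$ into $\pi\mathbb{Z}$ when $p$ is odd, so the orbit never collapses into $\mathcal{M}_1$; for even $p$ the construction can degenerate (e.g.\ $\cos(2\theta)$ or $\sin(2\theta)$ vanishing), which is precisely the case the hypothesis excludes.
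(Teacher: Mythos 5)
Your proposal is correct and takes essentially the same route as the paper: both arguments descend the hierarchy by iterated conjugation of a single fixed Majorana operator, obtain an angle-doubling closed form (your $B(2^{j}\psi)$ is the exact analogue of the paper's $F_j$, which is likewise a combination of a weight-one and a weight-three Majorana string), and derive the contradiction from $\sin(2^{j}\pi/p)\neq 0$ for odd $p>1$. The only cosmetic differences are that you conjugate $\gamma_5$ where the paper uses $\gamma_2$, and that you descend one extra level to $\mathcal{M}_1$ and invoke linear independence of Majorana monomials, whereas the paper stops at $\mathcal{M}_2$ and observes that the resulting operator fails to be Gaussian.
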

\noindent This result can be proved by contradiction. For any unitary $U_t$ to lie in some finite level, say $k$, of the matchgate hierarchy, the unitary $F_1=U_t \gamma_\mu U_t^\dagger$, for any $\mu \in [2n]$, must lie in $\mc{M}_{k-1}$. We can define the unitaries 
\begin{align}
F_k:=F_{k-1}\gamma_\mu F_{k-1}^\dagger, \>\>\>k \geq 2.
\end{align}
Extending the same argument shows that $F_{k-2}$ must lie in $\mc{M}_2$ i.e., $F_{k-2}$ is  Gaussian. Explicit computation for $\mu=2$  shows that $F_{k-2}$ is not Gaussian, showing that $U_t$ does not belong to any finite level of the matchgate hierarchy. See Appendix \ref{matchgateApp} for the proof of this lemma.

\section{CONCLUSION}

In Ref.~\cite{mele2024}, it was shown that there is an efficient algorithm to learn quantum states obtained from fermionic Gaussian unitaries with a constant number of non-Gaussian gates. In this work, we have solved an open problem from Ref.~\cite{mele2024} by providing an efficient algorithm for learning these types of fermionic Gaussian unitaries. We have also shown that such unitaries generally do not fall under the matchgate hierarchy. A few directions for future work could include improving our algorithm by reducing the number of ancilla qubits required to implement the learned unitary or reducing the resource requirements for the learning algorithm. It may be interesting to apply our techniques for learning fermionic channels which in some cases are known to be efficiently simulatable \cite{bravyi2011classicalsimulationdissipativefermionic}. Another possible future direction is that of studying property testing for fermionic unitaries such as those we studied  \cite{bittel2025optimaltracedistanceboundsfreefermionic,lyu2024fermionicgaussiantestingnongaussian}.
 It would also be interesting to explore applications of our learning algorithm to single-parameter and multi-parameter quantum sensing \cite{liu20b}. Our work contributes to a clear understanding about the kinds of quantum processses that can be learned, verified, and benchmarked efficiently, paving the way for accurate and verifiable quantum computation.

\section{Acknowledgments}
S.A.~acknowledges helpful correspondence with Josh Cudby. S.A.~and A.V.G.~acknowledge support from the U.S.~Department of Energy, Office of Science, Accelerated Research in Quantum Computing, Fundamental Algorithmic Research toward Quantum Utility (FAR-Qu). S.A.~and A.V.G.~were also supported in part by the DoE ASCR Quantum Testbed Pathfinder program (awards No.~DE-SC0019040 and No.~DE-SC0024220), NSF QLCI (award No.~OMA-2120757), NSF STAQ program, AFOSR MURI,  DARPA SAVaNT ADVENT, and NQVL:QSTD:Pilot:FTL. S.A.~and A.V.G.~also acknowledge support  from the U.S.~Department of Energy, Office of Science, National Quantum Information Science Research Centers, Quantum Systems Accelerator.  
M.E.S.M.~acknowledges support from the U.S.~Department of Defense through a QuICS Hartree Fellowship. 

\textit{Note added:} While we were polishing the manuscript, a related paper was posted to arXiv \cite{iyer2025mildlyinteractingfermionicunitariesefficiently}. Ref.~\cite{iyer2025mildlyinteractingfermionicunitariesefficiently}  solves the same problem as the one solved by Theorem~\ref{resdet} for the qubit implementation, i.e.\ when the unknown operator has Gaussian unitaries in O$(2n)$. Moreover, in Ref.~\cite{iyer2025mildlyinteractingfermionicunitariesefficiently}, a property testing procedure for testing the closeness of an unknown unitary to unitaries with a constant number of non-Gaussian gates is also considered.

\bibliography{ref}
\let\oldaddcontentsline\addcontentsline
\renewcommand{\addcontentsline}[3]{}
\medskip

\bibliographystyle{apsrev4-2}

\appendix

\section{Details of Algorithm \ref{alg:algo1}}
\label{appa}
In this Appendix, we present proofs for several lemmas related to Algorithm \ref{alg:algo1}. The following lemma describes the decomposition result (Theorem 4 in Ref.~\cite{mele2024}) with minor modifications.
\Udecomp*
\begin{proof}
As in Eq.~(B3) of Ref.~\cite{mele2024}, we can rewrite $U_t$ from Eq.~(\ref{promiseUt}) as $U_t = \tilde{G}_t \prod_{t^\prime =1}^{t} \tilde K_{t^\prime}$, where $\tilde K_{t^\prime} = \tilde{G}_{t^\prime -1}^\dagger K_{t^\prime}\tilde{G}_{t^\prime -1}$ and $\tilde{G}_{t^\prime} = G_{t^\prime}\hdots G_0$. We can then write $U_t$ as follows:
\begin{align}
U_t = \tilde G_t G_{\text{aux}} \prod_{t^\prime =1}^{t}(G_{\text{aux}}^\dagger \tilde K_{t^\prime} G_{\text{aux}})G_{\text{aux}}^\dagger.
\end{align}
This equation holds for an arbitrary $G_{\text{aux}}$.
It is possible to find a $G_{\text{aux}}$ such that the Majoranas that generate each $K_{t^\prime}$ transform under $\tilde G_{t-1}G_{\text{aux}}$ such that they are mapped to the first $M=\kappa t $ Majorana modes. This translates to the condition
\begin{align}
\mathbf{e}_q^T O_{\text{aux}}^T \mathbf{v}_j=0, \label{Oauxeq}
\end{align}
where $q \in \{M+1, \hdots, 2n \}$, $\mathbf{v}_j \in \{ \tilde O_{t^\prime -1}^T \mathbf{e}_{\mu(t^\prime)} \} $,  and $\tilde O_{t^\prime}$ corresponds to the Gaussian $\tilde G_{t^\prime}$ via Eq.~(\ref{canonG}). Here $\mu(t^\prime)$ indexes the Majorana operators in $K_{t^\prime}$ e.g.,~for $K_1=\exp(i\theta\gamma_1 \gamma_2 \gamma_5 \gamma_7)$, we have $\mu(1) \in \{1, 2, 5, 7\}$. Since $K_{t^\prime}$ is generated by a Majorana string with weight $\kappa$, and $t^\prime \in [t]$, we have $j \in [\kappa t]$. We can choose $O_{\text{aux}}$ such that it maps the span of $\{ \mathbf{v}_j \}$ to the first $M=\kappa t$ basis vectors, satisfying Eq.~(\ref{Oauxeq}). Moreover, without loss of generality, we can choose $O_{\text{aux}}$ to be in SO$(2n)$. Concretely, we can define $O_{\text{aux}}^T= \sum_{k=1}^{M} \mathbf{e}_k \mathbf{s}_k^T + \sum_{k=M+1}^{2n} \mathbf{e}_k \mathbf{\bar s}_k^T$, where $\{ \mathbf{s}_i \}$ are orthonormal basis vectors that span $\{\mathbf{v}_j \}$, and $\{ \mathbf{\bar s}_i\}$ are orthonormal vectors outside the span of $\{\mathbf{v}_j \}$. In case $O_{\text{aux}}^T$ is not in SO$(2n)$, we can redefine $O_{\text{aux}}^T \rightarrow O_1 O_{\text{aux}}^T$ where $O_1 = \text{diag}(-1, 1, \hdots, 1)$, ensuring $O_{\text{aux}}$ is in SO$(2n)$ and satisfies Eq.~(\ref{Oauxeq}). We can then write
\begin{align}
U_t = G_A u_t G_B,
\end{align}
where $G_A = \tilde G_t G_{\text{aux}}$, $G_B = G_{\text{aux}}^\dagger$, and $u_t=\prod_{t^\prime}^{t}G_{\text{aux}}^\dagger\tilde{K}_{t^\prime}G_{\text{aux}}$. In the case all $G_{t^\prime}$ correspond to SO$(2n)$, it follows that both $G_A$ and $G_B$ correspond to SO$(2n)$.
\end{proof}

We proceed by establishing a bound on the weight of Majorana strings appearing in $U_t^\dagger \gamma_i U_t$ for $i \in [M]$. This result is necessary to prove the subsequent lemmas.
\begin{lemma}[A bound on the Majorana weight]\label{majoranawgt}
\normalfont
The transformed Majorana operators $\bar{\gamma}_i:= U_t^\dagger \gamma_i U_t$ are sums of  Majorana strings with weight upper bounded by $w:=(\kappa+1)^t$. \label{weightlem}
\end{lemma}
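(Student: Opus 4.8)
The plan is to track how the Majorana weight of $U_t^\dagger \gamma_i U_t$ grows as we conjugate $\gamma_i$ through the layers of $U_t$ one gate at a time, using the decomposition in Eq.~(\ref{promiseUt}). First I would recall two elementary facts about how the two types of gates affect weight under conjugation. A Gaussian unitary $G_j$ acts on each Majorana operator linearly, sending $\gamma_k \mapsto \sum_\ell O_{\ell k}\gamma_\ell$ as in Eq.~(\ref{canonG}); applied to a Majorana string of weight $w'$, it maps it to a linear combination of strings each of weight \emph{at most} $w'$ (conjugating each factor and expanding, with cancellations only ever reducing weight). Thus Gaussian layers never increase the maximal Majorana weight. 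A non-Gaussian gate $K_j = \exp(\theta \prod_{\mu \in S}\gamma_\mu)$, generated by an even-weight product of at most $\kappa$ Majoranas, acts on $\gamma_k$ by the Baker–Campbell–Hausdorff / Heisenberg relation $K_j^\dagger \gamma_k K_j = \cos(\cdot)\gamma_k + \sin(\cdot)\,(\text{const})\,\gamma_k\prod_{\mu\in S}\gamma_\mu$ when $\gamma_k$ anticommutes with the generator, and leaves $\gamma_k$ fixed otherwise. Applied to a weight-$w'$ string, each factor $\gamma_k$ can pick up the generator $\prod_{\mu\in S}\gamma_\mu$ of weight $\le \kappa$, so the worst-case new weight is bounded by $w' + \kappa$; after simplifying repeated Majoranas ($\gamma_\mu^2=I$) the weight can only drop further, but $w'+\kappa$ is a safe upper bound.

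The key step is then to combine these two bounds multiplicatively rather than additively, which is where the $(\kappa+1)^t$ form comes from. Conjugating by the innermost factors and working outward, a single $K_j$ layer maps a sum of strings of weight $\le w'$ to a sum of strings of weight $\le w' + \kappa$; but I would argue the sharper multiplicative bound by noting that each Majorana factor in a weight-$w'$ string is individually replaced by a (sum of) string(s) of weight at most $1+\kappa$, so the whole string maps to weight at most $(1+\kappa)w'$. This is the crucial observation: because $K_j$ acts independently on each of the $w'$ Majorana factors and each factor can blow up by a factor of $(\kappa+1)$, the weight multiplies by $(\kappa+1)$ per non-Gaussian layer. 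Gaussian layers contribute a factor of $1$. Starting from $\gamma_i$, which has weight $1$, and passing through the $t$ non-Gaussian gates $K_1,\dots,K_t$ (interspersed with Gaussian gates $G_0,\dots,G_t$ that do not increase weight), the maximal weight is bounded by $1 \cdot (\kappa+1)^t = (\kappa+1)^t = w$, as claimed.

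To make this rigorous I would set up an induction on the number of gates, defining $\gamma_i^{(s)}$ to be $\gamma_i$ conjugated through the outermost $s$ gates of $U_t$ and letting $w_s$ denote the maximal Majorana weight appearing in $\gamma_i^{(s)}$. The induction hypothesis is $w_s \le (\kappa+1)^{(\text{number of }K\text{-gates among the first }s)}$, with base case $w_0 = 1$. The inductive step splits into the Gaussian case ($w_{s+1}\le w_s$) and the non-Gaussian case ($w_{s+1}\le (\kappa+1)w_s$), each justified by the weight-transformation facts above. Since there are exactly $t$ non-Gaussian gates, the final bound is $(\kappa+1)^t$.

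The main obstacle I anticipate is justifying the \emph{multiplicative} bound $(\kappa+1)w'$ rather than the easier additive bound $w'+\kappa$; the additive bound would only give weight $1 + t\kappa$, which is also polynomial in the constants but does not match the stated $(\kappa+1)^t$. The multiplicative factor requires care because the generator of $K_j$ can overlap (share Majorana indices) with the existing string, and those shared indices cancel via $\gamma_\mu^2 = I$ rather than adding weight. I would handle this by arguing that cancellation only \emph{decreases} weight, so the conservative count in which every factor independently and disjointly acquires a weight-$\kappa$ generator — giving at most $(\kappa+1)$ per factor and $(\kappa+1)w'$ overall — remains a valid upper bound even in the presence of overlaps. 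The bookkeeping of which Majorana factors anticommute with the generator, and the verification that the $\cos/\sin$ expansion introduces no strings of weight exceeding this bound, is the routine-but-delicate part of the argument.
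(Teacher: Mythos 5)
Your proposal is correct and takes essentially the same route as the paper: the paper derives the identical per-gate formula $K_l^\dagger \gamma_i K_l = \cos(2s)\,\gamma_i + (i)\sin(2s)\,R_l\gamma_i$ (via the Heisenberg equation of motion rather than BCH) and then combines it with the weight-preservation of Gaussian conjugation exactly as you do, your induction merely spelling out the multiplicative step that the paper leaves implicit in one sentence. One minor remark: your worry about the additive bound is backwards --- since each $K_l$ is generated by a \emph{single} Majorana string $R_l$ with $R_l^2=\pm I$ and $R_l\gamma_{x_j}=\pm\gamma_{x_j}R_l$, the $R_l$ factors in each expanded term collapse to $R_l^{0}$ or $R_l^{1}$, so the additive bound $w'+\kappa$ per gate is in fact valid and yields $1+t\kappa \le (\kappa+1)^t$ by Bernoulli's inequality, which proves the lemma a fortiori; a \emph{smaller} upper bound never fails to ``match'' a stated one.
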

\begin{proof}
First, consider the non-Gaussian unitary $K_l$ as defined in Eq.~(\ref{promiseUt}). Let $K_l$ be generated by $R_l$, a product of $\kappa$ Majorana operators. Since Majorana strings are in the Pauli group, they either commute or anticommute with each other. This means that we need to consider the two cases $R_l = R_l^\dagger$ and $R_l = -R_l^\dagger$.

Let's consider the case $R_l = R_l^\dagger$. We can then write $K_l = e^{-iR_l s}$ (where $s$ is some unknown real parameter). The evolved Majorana operator $ \bar \gamma_i(s):=K_l^\dagger \gamma_i K_l$ is the solution to the differential equation
\begin{align}
\frac{d}{dy}\bar \gamma_i(y) = i e^{iR_l y}[R_l, \gamma_i]e^{-iR_l y}.
\end{align}
In the trivial case wherein $[R_l, \gamma_i]=0$, we have $\bar{\gamma}_i(s)=\gamma_i$,  leaving the Majorana weight unchanged. In the case $\{R_l ,\gamma_i \}=0$, we have
\begin{align}
\frac{d}{dy}\bar{\gamma}_i(y)=2iR_l \bar{\gamma}_i(y) \label{dif1},
\end{align}
where we used the fact that $\{ R_l ,\gamma_i\}=0 \implies [R_l, \gamma_i] = 2R_l \gamma_i$. 
Eq.~(\ref{dif1}) has the solution
\begin{align}
\bar{\gamma}_i(s)=\cos2s \gamma_i + i\sin2s R_l \gamma_i,
\end{align}
where we used the fact that $R_l^2=1$ (from the condition that $R_l=R_l^\dagger$ and $R_l R_l^\dagger =1$).
This shows that $\bar{\gamma}_i(s)$ is a sum of operators with Majorana weight $\leq \kappa+1$.

We now consider the case $R_l = -R_l^\dagger$. We can then write $K_l = e^{R_l s}$. The evolved Majorana operator $\bar{\gamma}_i = K_l^\dagger \gamma_i K_l$ then obeys the differential equation
\begin{align}
\frac{d}{dy}\bar{\gamma}_i(y)=e^{R_l y}[R_l, \gamma_i]e^{-R_l y}.\label{dif2}
\end{align}
In the trivial case wherein $[R_l, \gamma_i]=0$, we have $\bar{\gamma}_i(s)=\gamma_i$,  leaving the Majorana weight unchanged. In the case $\{R_l, \gamma_i \}=0$, Eq.~(\ref{dif2}) then becomes
\begin{align}
\frac{d}{dy}\bar{\gamma}_i(y)=2R_l \bar{\gamma}_i(y),
\end{align}
which has the solution 
\begin{align}
\bar{\gamma}_i(s)=\cos2s \gamma_i + \sin 2s R_l \gamma_i,
\end{align}
where we use the fact that $R_l^2=-1$ (from the condition that $R_l=-R_l^\dagger$ and $R_l R_l^\dagger =1$). This shows that $\bar{\gamma}_i(s)$ is a sum of operators with Majorana weight $\leq \kappa+1$. We therefore have that
\begin{align}
K_l^\dagger \gamma_i K_l  = \sum_{\substack{x|\alpha_x\leq \kappa+1 }} \alpha_x \gamma_x,\label{bound1}
\end{align}
where $x$ is a bit-string of length $2n$, $\gamma_x:=\gamma^{x_1}_1\cdots \gamma^{x_{2n}}_{2n}$, and $\alpha_x$ denotes the Hamming weight of $x$.

Now consider the unitary-evolved operator $ U_t^\dagger \gamma_i U_t$, where $U_t = G_t K_t \cdots G_1 K_1 G_0$. Since evolution under Gaussian unitaries doesn't change the weight of a Majorana operator, Eq.~(\ref{bound1}) shows that the operator $U_t^\dagger \gamma_i U_t$ is a sum of Majorana strings with weight $\leq w:= (\kappa+1)^t$. 
\end{proof}

We now prove Lemma \ref{propc} regarding the singular values of the matrices $c$, $c^{(1)}$, and $\hat{c}^{(1)}=c^{(1)}+E^{(1)}$, where $E^{(1)}$ denotes the tomography error in measuring $c^{(1)}$ using Algorithm \ref{alg:algo1}. 
\propc*
\begin{proof}
We first describe the properties of $\gamma_x$ and $\tilde \gamma_x$, where $\gamma_x = \gamma_1^{x_1}\hdots \gamma_{2n}^{x_{2n}}$, and $\tilde{\gamma}_x$ is defined as
\begin{subequations}\label{defgamx}
\begin{align}
\tilde \gamma_x &= \gamma_x \>\>\>\text{for } \gamma_x^\dagger = \gamma_x,\\
\tilde \gamma_x &= i\gamma_x \>\>\> \text{for }\gamma_x^\dagger = -\gamma_x,
\end{align}
\end{subequations}
making $\tilde \gamma_x$ Hermitian.
Since $\gamma_x$ satisfy the property
\begin{align} \label{tracegx}
\tr[\gamma_x \gamma_y^\dagger]=d \delta_{x,y},
\end{align}
where $d=2^n$, using Eqs.~(\ref{defgamx}) and (\ref{tracegx}), we get
\begin{align}
\tr[\tilde \gamma_x \tilde\gamma_y]=d \delta_{x,y}. \label{trtil}
\end{align}
For $x\neq y$, $\tr[\tilde \gamma_x \tilde \gamma_y]=0$ follows from Eq.~(\ref{tracegx}). The relation $\tr[\tilde \gamma_x^2]=d$ follows from considering the cases $\gamma_x^\dagger =\gamma_x $ and $\gamma_x^\dagger = -\gamma_x$ separately. For $\gamma_x^\dagger =\gamma_x$, $\tr[\tilde \gamma_x^2]=\tr[\gamma_x \gamma_x]=\tr[\gamma_x \gamma_x^\dagger]=d$. For  $\gamma_x^\dagger = -\gamma_x$, $\tr[\tilde \gamma_x^2]=\tr[(i\gamma_x)(i\gamma_x)]=\tr[\gamma_x(-\gamma_x)]=\tr[\gamma_x \gamma_x^\dagger]=d$.

We first show that the columns of $c$ are orthonormal. From the definition of $c_{xk}$, we have
\begin{align}
U_t^\dagger \gamma_k U_t = \sum_{x\in\{0,1\}^{2n}} c_{xk}\tilde{\gamma}_x. \label{gamdecmp}
\end{align}
We then have
\begin{align}
(U_t^\dagger \gamma_j U_t)(U_t^\dagger \gamma_k U_t)&=\sum_{xy}c_{xj}c_{yk}\tilde \gamma_x \tilde \gamma_y,\\
U_t^\dagger \gamma_j \gamma_k U_t &= \sum_{xy}c_{xj}c_{yk}\tilde{\gamma}_{x}\tilde \gamma_{y}. \label{maintr}
\end{align}
Taking $j=k$, and taking the trace of both sides gives us 
\begin{align}
\sum_{x}c_{xj}^2 =1,
\end{align}
where we used the fact that $\tr[\tilde \gamma_x \tilde \gamma_y]=d\delta_{xy}$ from Eq.~(\ref{trtil}). Using $j\neq k$, and taking the trace of both sides of Eq.~(\ref{maintr}) gives us
\begin{align}
\sum_x c_{xj}c_{xk}=0,
\end{align}
where we use Eq.~(\ref{trtil}).
Since $c$ has orthonormal columns, $c^T c=I$, showing that all singular values of $c$ are 1.

We now focus on the singular values of ${c}^{(1)}$. Let $ c^{(2)}$ contain the rows of $ c$ that are not inside $ c^{(1)}$. The matrices $\hat c^{(1)}$, $\hat c^{(2)}$, $E^{(1)}$, and $E^{(2)}$ are defined in the same way,  i.e.~$\hat c^{(1)} = c^{(1)} + E^{(1)}$ and $\hat c^{(2)} = c^{(2)} + E^{(2)}$.  We first define $G_A$ and $G_B$ using orthogonal matrices $O^A$ and $O^B$ as follows:
\begin{align}
G_A \gamma_i G_A^\dagger &= \sum_{k=1}^{2n} O^A_{ki}\gamma_k,\\
G_B \gamma_i G_B^\dagger &= \sum_{k=1}^{2n} O^B_{ki}\gamma_k.
\end{align}
We can expand both sides of Eq.~(\ref{eqa}) in terms of $O^A$, $O^B$, $c^{(1)}$, and $c^{(2)}$ as follows:  
\begin{align}
\sum_j (c^{(1)}O^A)_{ji}\gamma_j +   \sum_{x:\alpha_x\geq 2} (c^{(2)}O^A)_{xi}\tilde{\gamma}_x &= \sum_j O_{ij}^B \gamma_j,
\end{align} 
giving us the equations
\begin{align}
(c^{(1)}O^A)_{ji}&=O^B_{ij}, \label{vec1}\\
(c^{(2)}O^A)_{xi}&=0. \label{vec2}
\end{align}
Now note that the matrix $cO^A$ also has orthonormal columns from the following computation: 
\begin{align}
&(U_t^\dagger G_A \gamma_j G_A^\dagger U_t)(U_t^\dagger G_A \gamma_k G_A^\dagger U_t) \notag\\
&=\sum_{x,y}(cO^A)_{xj}(cO^A)_{yk}\tilde \gamma_x \tilde \gamma_y,\\
&U_t^\dagger G_A \gamma_j \gamma_k G_A^\dagger U_t \notag\\
&=\sum_{xy}(cO^A)_{xj}(cO^A)_{yk}\tilde{\gamma}_x \tilde \gamma_y, \label{trace11}
\end{align}
where we use $U_t^\dagger G_A \gamma_i G_A^\dagger U_t = \sum_k O^A_{ki}U_t^\dagger \gamma_k U_t=\sum_k O_{ki}^A \sum_x c_{xk}\tilde \gamma_x= \sum_x (cO^A)_{xi}\tilde{\gamma}_x$. Considering the case $j=k$ and taking the trace of both sides of Eq.~(\ref{trace11}) and using Eq.~(\ref{trtil}) gives us $\sum_x (cO^A)_{xj}(cO^A)_{xj}=1$. Taking the case $j\neq k$ and taking the trace of both sides of Eq.~(\ref{trace11}) and using Eq.~(\ref{trtil}) gives us $\sum_x (cO^A)_{xj}(cO^A)_{xk}=0$. This means that we can write $cO^A$ as follows: 
\begin{align}
cO^A= \begin{pmatrix}
w_1, \hdots, w_M, v_{M+1}, \hdots, v_{2n}
\end{pmatrix},
\end{align}
where vectors $w_i$ and $v_i$ are real orthonormal column vectors with row-index $x$, where $x$ is a binary string of weight $\leq w$ defined in Lemma \ref{majoranawgt}. Now Eqs.~(\ref{vec1}) and (\ref{vec2}) say that the columns $i=M+1, \hdots, 2n$ of $cO^A$ are orthonormal and are nonzero only on the first $2n$ rows of $cO^A$. We then have
\begin{align}
c^{(1)}O^A= \begin{pmatrix}
\bar w_1, \cdots \bar w_M, \bar v_{M+1}, \cdots \bar v_{2n}
\end{pmatrix},
\end{align}
where $\bar w_i$ and $\bar v_i$ are the truncated versions of $w_i$ and $ v_i$, respectively, such that they contain the first $2n$ elements. Eqs.~(\ref{vec1}) and (\ref{vec2}) say that the vectors $v_i$ are supported on the first $2n$ slots and are orthonormal, giving us the conditions 
\begin{align}
\bar{v}_i^T \bar v_j= v_i^T v_j =\delta_{ij},\>\>\> i, j=M+1, \hdots, 2n.
\end{align}
We now compute $Y^T Y$ with $Y=c^{(1)}O^A$ as follows:
\begin{align}
Y^T Y = \begin{pmatrix}
A &B\\
C&D
\end{pmatrix},
\end{align}
where 
\begin{align}
A_{ij}&=\bar w_{i}^T \bar w_j,\>\>\> i=1, \hdots, M, j=1, \hdots, M,\\
B_{ij}&=\bar w_i^T \bar v_j,\>\>\> i=1, \hdots, M, j=M+1, \hdots, 2n,\\
C_{ij}&= \bar v_i^T \bar w_j, \>\>\> i=M+1, \hdots, 2n, j=1, \hdots, M,\\
D_{ij}&=\bar v_i^T \bar v_j, \>\>\> i=M+1, \hdots, 2n, j=M+1, \hdots, 2n.
\end{align}
Since $v_i$ is only supported on the first $2n$ slots, and $cO^A$ has orthonormal columns, we have $B_{ij}=\bar{w}_i^T \bar{v}_j=w_i^T v_j=0$. Similarly, $C_{ij}=\bar  v_i^T \bar w_j = v_i^T w_j=0$. Finally, $D_{ij}=\bar{v}_i^T \bar{v}_i=\delta_{ij}$, giving us the block diagonal matrix
\begin{align}
Y^T Y=\begin{pmatrix}
A&0\\
0&I
\end{pmatrix},
\end{align}
giving us the result that $c^{(1)}O^A$ has at least $2n-M$ singular values with value 1. Since multiplication by orthogonal matrices doesn't change singular values, it follows that $c^{(1)}$ also has at least $2n-M$ singular values $1$. 

We now consider the case with errors. We first state Weyl's theorem as follows: 
\begin{restatable}[Weyl's theorem \cite{stewart}]{theorem}{weylth}\label{weylth} Let $A$ be a rectangular matrix with singular values $\sigma_1, \hdots, \sigma_n$, and let $\tilde A= A + E$ be a perturbation of $A$ such that $\tilde{A}$ has singular values $\tilde \sigma_1, \hdots \tilde \sigma_n$. Then the following holds:
\begin{align}
\lvert \sigma_i - \tilde \sigma_i\rvert \leq \norma{E}\>\>\> (i=1,2, \hdots, n).
\end{align}
\end{restatable}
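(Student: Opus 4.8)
The plan is to prove Weyl's theorem through the variational (min-max) characterization of singular values, which reduces the matrix-perturbation statement to an elementary subspace argument. First I would recall the Courant-Fischer characterization: taking $A$ to be tall without loss of generality (padding with zero rows otherwise, which changes neither the singular values nor $\norma{E}$), the $k$-th largest singular value admits the max-min form
\[
\sigma_k(A) = \max_{\dim V = k}\; \min_{x \in V,\, \norma{x}=1}\; \norma{Ax}.
\]
This holds because $\norma{Ax}^2 = \langle x, A^\dagger A x\rangle$ and $\sigma_k(A)^2$ is the $k$-th largest eigenvalue of the Hermitian positive-semidefinite operator $A^\dagger A$, for which the classical Courant-Fischer eigenvalue characterization applies directly.

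The central observation is then a uniform, single-vector bound: for every unit vector $x$,
\[
\big|\,\norma{Ax} - \norma{\tilde A x}\,\big| \le \norma{(A-\tilde A)x} = \norma{Ex} \le \norma{E},
\]
by the reverse triangle inequality and the definition of the spectral norm. In particular $\norma{\tilde A x} \ge \norma{Ax} - \norma{E}$ uniformly over all unit vectors $x$. The point is that the perturbation affects the relevant Rayleigh-type quantity by at most $\norma{E}$ \emph{independently} of which $x$ is chosen, which is exactly what lets it commute past both the minimum and the maximum.

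Next I would push this uniform bound through the max-min. Fixing any $k$-dimensional subspace $V$ and minimizing over unit $x \in V$ gives $\min_{x\in V,\,\norma{x}=1}\norma{\tilde A x} \ge \big(\min_{x\in V,\,\norma{x}=1}\norma{Ax}\big) - \norma{E}$, since the additive constant $\norma{E}$ passes outside the minimum. Taking the maximum over all $k$-dimensional $V$ (where again $\norma{E}$ passes outside) yields $\sigma_k(\tilde A) \ge \sigma_k(A) - \norma{E}$. Swapping the roles of $A$ and $\tilde A$, which is legitimate because $A = \tilde A + (-E)$ and $\norma{-E}=\norma{E}$, gives $\sigma_k(A) \ge \sigma_k(\tilde A) - \norma{E}$. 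Combining the two inequalities produces $|\sigma_k(A)-\tilde\sigma_k| \le \norma{E}$ for each index $k$, which is the claim.

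The only genuinely delicate point—and hence the main obstacle—is justifying the max-min characterization and making sure the index set matches in the rectangular case; once the variational formula is in hand, the remainder is a short, purely algebraic manipulation. An alternative route I could take is to pass to the Hermitian dilation $\mathcal{H}(A) = \bigl(\begin{smallmatrix}0 & A\\ A^\dagger & 0\end{smallmatrix}\bigr)$, whose eigenvalues are $\pm\sigma_i(A)$ together with zeros, and then invoke the standard Weyl inequality for Hermitian matrices applied to $\mathcal{H}(\tilde A) = \mathcal{H}(A) + \mathcal{H}(E)$ using $\norma{\mathcal{H}(E)} = \norma{E}$; I would regard the min-max route as slightly more self-contained and therefore preferable here.
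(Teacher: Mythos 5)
Your proof is correct, but note that the paper does not actually prove this statement: it is imported as a known result with a citation to Stewart's matrix perturbation theory, and is used as a black box inside the proof of Lemma \ref{propc} to transfer the exact singular-value statement for $c^{(1)}$ to the perturbed matrix $\hat c^{(1)}=c^{(1)}+E^{(1)}$. So there is no internal proof to compare against, and what you have supplied is a correct, self-contained derivation of the cited fact. Your min-max route is sound in every step: the Courant--Fischer characterization $\sigma_k(A)=\max_{\dim V=k}\min_{x\in V,\,\norma{x}=1}\norma{Ax}$ follows from the eigenvalue characterization of the Hermitian matrix $A^\dagger A$; the uniform single-vector bound $\lvert\,\norma{Ax}-\norma{\tilde A x}\,\rvert\le\norma{Ex}\le\norma{E}$ is precisely what allows the additive constant to pass outside both the minimum and the maximum; and applying the resulting one-sided inequality symmetrically (using $A=\tilde A+(-E)$ and $\norma{-E}=\norma{E}$) gives the two-sided bound. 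Your treatment of the rectangular case is also careful where it needs to be: padding with zero rows leaves $A^\dagger A$ and hence all $n$ singular values unchanged, leaves $\norma{E}$ unchanged, and keeps the index sets of $A$ and $\tilde A$ aligned provided both are padded consistently, which your setup ensures. The Hermitian-dilation alternative you sketch is equally valid, since the dilation's spectrum consists of $\pm\sigma_i$ together with $\lvert m-n\rvert$ zeros, so for $k\le\min(m,n)$ the $k$-th largest eigenvalue is exactly $\sigma_k$ and the Hermitian Weyl inequality applies with $\norma{\mathcal{H}(E)}=\norma{E}$; your preference for the min-max route as more self-contained is reasonable, and either argument fully justifies the result as the paper uses it.
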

\noindent Using the above shows that the matrix $\hat c^{(1)}_{xk}=c_{xk}+E^{(1)}_{xk}$ has at least $2n-M$ singular values that obey Eq.~(\ref{sing1}).

\end{proof}

We now prove Lemma \ref{constructW}, which shows that the unitary $W_t=G_a^\dagger U_t G_b^\dagger$ approximately satisfies the Majorana decoupling condition in Eq.~(\ref{freemajW}).
\constructW*
\begin{proof}
We first order the basis such that for $\hat c^{(1)} = U \Sigma V^T$, $\Sigma e_i =\hat{d}_ie_i$, and $\hat d_i$ satisfies Eq.~(\ref{sing1}) for $i=M+1, \hdots, 2n$. We define $\tilde U_t = G_a^\dagger U_t$, where $G_a$ is defined by the equation $G_a \gamma_i G_a^\dagger = \sum_k O^{a}_{ki}\gamma_k$ and $O^a=V$. We can then compute $\tilde U_t^\dagger \gamma_i \tilde U_t$ as follows:
\begin{align}
\tilde{U}_t^\dagger \gamma_i \tilde{U}_t =&\sum_k O^a_{ki}U_t^\dagger \gamma_k U_t \label{firstlineutil}\\
 =& \sum_k O_{ki}^a   \sum_{x} c_{xk}{\gamma}_x\\
 = &\sum_j (c^{(1)}O^a)_{ji}\gamma_j + \sum_{x: 2\leq \alpha_x\leq w} (c^{(2)}O^a)_{xi}\tilde \gamma_x\\
= & \sum_j (\hat c^{(1)}O^a)_{ji}\gamma_j  - \sum_j (E^{(1)}O^a)_{ji}\gamma_j \notag \\&+ \sum_{x: 2\leq \alpha_x\leq w} (c^{(2)}O^a)_{xi}\tilde \gamma_x \label{first}\\
= &\sum_j U_{ji}\hat d_i \gamma_j -\sum_j (E^{(1)}V)_{ji}\gamma_j \notag\\&+ \sum_{x: 2\leq \alpha_x\leq w} (c^{(2)}V)_{xi}\tilde{\gamma}_x \label{sec}\\
= &\sum_j \hat d_i U_{ji} \gamma_j - \sum_j (E^{(1)}V e_i)_j \gamma_j  \notag \\&+ \sum_{x: 2\leq \alpha_x\leq w} (c^{(2)}V e_i)_{x}\tilde \gamma_x, \label{last1}
\end{align}
where $\alpha_x$ denotes the Hamming weight of $x$, and $w$ is a constant defined in Lemma \ref{majoranawgt}. In Eq.~(\ref{firstlineutil}), we use Eq.~(\ref{gamdecmp}). In Eq.~(\ref{first}), we use $\hat c^{(1)} = c^{(1)}+E^{(1)}$. In Eq.~(\ref{sec}), we use $O^a_{ki}=(V e_i)_k$, where $\Sigma e_i = \hat{d}_i e_i$.  In Eq.~ (\ref{last1}), we use the fact that $V_{ki} = (Ve_i)_k$, where $e_i$ is the $i$th computational basis state.

We now compute $W_t^\dagger\gamma_i W_t$, where $W_t = \tilde U_t G_b^\dagger$ with $G_b \gamma_i G_b^\dagger =\sum_k O^b_{ki}\gamma_k$ and $O^b = U^T$, as follows:
\begin{align}
&W_t^\dagger\gamma_i W_t \notag \\ 
=&\sum_k (O^b U)_{ki}\hat d_i \gamma_k - \sum_k (E^{(1)}V e_i)_{j}G_b \gamma_j G_b^\dagger \notag\\& + \sum_{x:\alpha_x\geq 2} (c^{(2)}V e_i)_{x}G_b \tilde \gamma_x G_b^\dagger \label{useU}\\
=&\hat d_i \gamma_i - \sum_j (E^{(1)}Ve_i)_{j}G_b \gamma_j G_b^\dagger + \sum_{x:\alpha_x\geq 2} (c^{(2)}Ve_i)_{x}G_b \tilde \gamma_x G_b^\dagger.
\end{align}
We can then obtain the bound
\begin{align}
&\lVert  W_t^\dagger \gamma_i W_t - \gamma_i \rVert \notag \\
 \leq& \lvert \hat d_i -1 \rvert + \sum_j \lvert  (E^{(1)}V e_i)_j  \rvert + \sum_{x: 2\leq \alpha_x\leq w} \lvert  (c^{(2)}Ve_i)_x \rvert \\
\leq& \absa{\hat d_i -1} + \absa{E^{(1)}V e_i}\sum_j 1+ \absa{c^{(2)}V e_i}  \sum_{x: 2\leq \alpha_x\leq w} 1\\
\leq & \absa{\hat d_i -1} +2n \absa{E^{(1)}V e_i} + \absa{c^{(2)}Ve_i}T(n)\\
 \leq & \absa{\hat d_i -1} + 2n \norma{E^{(1)}} + \absa{c^{(2)}Ve_i}T(n), \label{almost}
\end{align}
where we use the triangle inequality, the facts that $\lVert \tilde \gamma_x \rVert =1$ and $\lVert .\rVert$ is unitarily invariant. We define $T(n):= \sum_{x:  2\leq\alpha_x\leq w} 1$. We use here the fact that the Majorana weight of $\tilde{\gamma}_x$ in the above equations is bounded by the constant $w$ from Lemma \ref{majoranawgt}. To simplify the last term in the expression above, let us first consider the following:
\begin{align}
\absa{ c V e_i}^2 =& \absa{c^{(1)}V e_i}^2 + \absa{c^{(2)}V e_i}^2\\
=& \absa{\hat c^{(1)} V e_i - E^{(1)}V e_i}^2 + \absa{c^{(2)}V e_i}^2 \label{cvei}\\
=&\absa{\hat c^{(1)}Ve_i}^2 + \absa{E^{(1)}Ve_i}^2 \notag\\&- 2(\hat c^{(1)}V e_i)\cdot (E^{(1)}V e_i) + \absa{c^{(2)}V e_i}^2,
\end{align}
where we use $\hat c^{(1)} = c^{(1)} + E^{(1)}$ in Eq.~(\ref{cvei}), giving us
\begin{align}
\absa{c^{(2)}V e_i}^2 =& \absa{c V e_i}^2 - \absa{\hat c^{(1)}V e_i}^2 - \absa{E^{(1)}V e_i}^2 \notag \\&+2(\hat c^{(1)}V e_i)\cdot (E^{(1)}V e_i) \label{c2Vexp}.
\end{align}
Using the results
\begin{align}
\absa{c Ve_i}^2 &\leq \norma{c}^2 \leq 1, \label{l1}\\
(1-\norma{E^{(1)}})^2 &\leq \absa{\hat c^{(1)}Ve_i}^2 = \hat{d}_i^2 \leq (1+\norma{E^{(1)}})^2, \label{l2}\\
\absa{E^{(1)}Ve_i}^2 &\geq 0,\label{l3}\\
-\hat d_i \norma{E^{(1)}}&\leq (\hat c^{(1)}V e_i)\cdot (E^{(1)}V e_i) \leq \hat d_i \norma{E^{(1)}},\label{l4}
\end{align}
where inequality (\ref{l1}) follows from Lemma \ref{propc}, inequality (\ref{l2}) follows from $\hat c^{(1)}=U\Sigma V^T$ (and the condition that $\norma{E^{(1)}} \leq 1$) and Lemma \ref{propc}, and inequality (\ref{l4}) follows from Cauchy's inequality. Using these inequalities, Eq.~(\ref{c2Vexp}) becomes
\begin{align}
\absa{c^{(2)}V e_i}^2 &\leq 1 -(1-\norma{E^{(1)}})^2 + 2\hat d_i \norma{E^{(1)}} \notag\\
& \leq \norma{E^{(1)}}(2-\norma{E^{(1)}}) + 2\norma{E^{(1)}}(1+\norma{E^{(1)}}) \notag\\
& \leq 5\norma{E^{(1)}},
\end{align}
where we used the condition $\norma{E^{(1)}} \leq 1$. Using the above in inequality (\ref{almost}) gives us
\begin{align}
\norma{[W_t, \gamma_i]}=&\norma{W_t^\dagger\gamma_i W_t -\gamma_i} \notag \\
\leq & (2n+1)\norma{E^{(1)}}  +  (5\norma{E^{(1)}})^{1/2}T(n) \notag\\
\leq & T_1(n) \norma{E^{(1)}}^{1/2},
\end{align}
where $T_1(n)$ is a polynomial defined by $T_1(n) = (\sqrt{5}T(n)+2n+1)$.

We now consider the case where either $U$ or $V$ are outside of SO$(2n)$. The idea is that we can set $G_a \rightarrow G_a (\bar G)^p$ and $G_b \rightarrow (\bar G)^q G_b$, where $p=1$ $(q=1)$ if $U$ ($V$) is outside SO$(2n)$ and $p=0$ $(q=0)$ if $U$ ($V$) is inside SO$(2n)$. We can then consider the unitary $\hat W_t = (\bar G^\dagger)^p W_t (\bar G^\dagger)^q$, where $\bar G$ corresponds to the orthogonal matrix $\bar O = \text{diag}(-1, 1, \hdots, 1)$ using Eq.~(\ref{canonG}). Computing $\hat W_t^\dagger \gamma_i \hat W_t$ for $i>M$ as follows
\begin{align}
\hat W_t^\dagger \gamma_i \hat W_t &=( \bar G)^q W_t^\dagger (\bar G)^p \gamma_i (\bar G^\dagger)^p W_t (\bar G^\dagger)^q\\
&=( \bar G)^q W_t^\dagger \gamma_i W_t (\bar G^\dagger)^q\\
&= (\bar G)^q \gamma_i (\bar G^\dagger)^q+ (\bar G)^q   W_t^\dagger [\gamma_i,  W_t] (\bar G^\dagger)^q\\
&=  \gamma_i+ (\bar G)^q W_t^\dagger [\gamma_i,  W_t] (\bar G^\dagger)^q
\end{align}
gives us $\norma{[\hat W_t, \gamma_i]} \leq \epsilon_0$. Here we used the fact that $\norma{[W_t, \gamma_i]}\leq\epsilon_0$, that $\norma{.}$ is unitarily invariant, and that $\bar G \gamma_i \bar G^\dagger = \gamma_i$ for $i>M$ from the definition of $\bar G$.
\end{proof}

We now prove Lemma \ref{lemcx}, which shows that, in the qubit implementation, the matrix elements $c_{xk}$ can be obtained by measuring observables in states prepared using the unitary $U_t$. This lemma is used in defining Algorithm \ref{alg:algo1} which constructs the matrix $c^{(1)}$, a submatrix of $c_{xk}$. For the fermionic implementation, the analogous result is proved in Appendix \ref{fermionicimp}.     

\lemcx*
\begin{proof}
Let $\rho_x=\ketbra{\psi_x}$.
We show that the operator
\begin{align}
O_k = (\gamma_k \otimes I)_{\mathcal{AB}}\otimes \ketbra{1}{0}_{\mathcal{C}}, \label{okqbit}
\end{align}
can be used to estimate $c_{xk}$ from the following computation. We can write $\rho_x O_k$ as follows: 
\begin{align}
\rho_x O_k =& \frac{1}{2}\big[  (U_t \otimes I) \ketbra{\Phi_d}{\Phi_d} (\gamma_x U_t^\dagger \gamma_k \otimes I) \otimes \ketbra{0}{0}  \notag \\&+(U_t\gamma_x^\dagger \otimes I) \ketbra{\Phi_d}{\Phi_d} (\gamma_x U_t^\dagger \gamma_k \otimes I) \otimes \ketbra{1}{0} \big].
\end{align}
Computing the trace on both sides gives us
\begin{align}
\tr[\rho_x O_k] &= \frac{1}{2}\tr\left[ (U_t\otimes I)\ketbra{\Phi_d}{\Phi_d}(\gamma_x U_t^\dagger \gamma_k \otimes I)\right] \notag\\
&=\frac{1}{2}\tr \left[  (\gamma_x U_t^\dagger \gamma_k U_t \otimes I) \ketbra{\Phi_d}{\Phi_d}\right]\notag\\
&=\frac{1}{2}\bra{\Phi_d}\gamma_x U_t^\dagger \gamma_k U_t \otimes I \ket{\Phi_d}\notag\\
&=\frac{1}{2d}\sum_{i} \bra{i}\gamma_x U_t^\dagger \gamma_k U_t \ket{i}\notag\\
&=\frac{1}{2d}\tr_{\mc A}\left[ U_t^\dagger \gamma_k U_t \gamma_x\right]. \label{expecs1}
\end{align}
Let $a_{xk}=\tr[\rho_x O_k]$. First, consider the case $\gamma_x^\dagger = \gamma_x$. We have
\begin{align}
a_{xk}^* 
&=\frac{1}{2d}\tr[(U_t^\dagger \gamma_k U_t \gamma_x)^{\dagger}]\notag\\
&=\frac{1}{2d}\tr[U_t^\dagger \gamma_k U_t \gamma_x^\dagger]\notag\\
&=a_{xk},
\end{align}
showing $\tr[\rho_xO_k]$ is real. We can compute $\tr[\rho_x O_k^+]$ using Eq.~(\ref{Oplus}) as follows:
\begin{align}
\tr[\rho_x O_k^+] &= \tr[\rho_x O_k]+ \tr[\rho_x O_k^\dagger] \notag\\
&=\tr[\rho_x O_k]+ \tr[(\rho_x O_k)^\dagger] \notag \\
&=2\Re \tr[\rho_x O_k] \notag\\
&=2\tr[\rho_x O_k]\label{lastop},
\end{align}
where we used the fact that $\tr[\rho_x O_k]$ is real. We now consider $\gamma_x^\dagger = -\gamma_x$. We then have
\begin{align}
a_{xk}^* &=\frac{1}{2d}\tr[U_t^\dagger \gamma_k U_t \gamma_x^\dagger]\notag \\
&=-\frac{1}{2d}\tr[U_t^\dagger \gamma_k U_t \gamma_x]\notag\\
&=-a_{xk},
\end{align}
showing $\tr[\rho_x O_k]$ is imaginary. We can compute $\tr[\rho_x O_k^-]$ using Eq.~(\ref{Ominus}) as follows: 
\begin{align}
\tr[\rho_x O_k^-]&=i\tr[\rho_x O_k]-i\tr[\rho_x O_k^\dagger]\notag \\
&=i\tr[\rho_x O_k]-i\tr[(\rho_xO_k)^\dagger]\notag \\
&=i\tr[\rho_x O_k]-i\tr[\rho_x O_k]^*\notag \\
&= -2\Im \tr[\rho_x O_k]\notag\\
&=2i\tr[\rho_x O_k], \label{lastOm}
\end{align}
where we use the fact that $\tr[\rho_x O_k]$ is imaginary. Finally, the coefficient $c_x$ can be written as 
\begin{align}
c_{xk}&=2\tr[\rho_x O_k]=\tr[\rho_x O_k^+] \>\>\> \text{for }\gamma_x^\dagger =\gamma_x,\\
c_{xk}&=2i\tr[\rho_x O_k] = \tr[\rho_x O_k^-]\>\>\> \text{for }\gamma_x^\dagger = -\gamma_x,
\end{align}
where we use the definition of $c_x$ in Eq.~(\ref{defcx}), as well as Eqs.~(\ref{lastop}) and (\ref{lastOm}). We can specialize to the case where $\alpha_x=1$ and $x_j=1$, which makes $c^{(1)}_{jk}$ a real matrix. In this case, we only measure $O_k^+$ since $O_k\propto (O_k^+ - iO_k^-)$ and $c_{jk}^{(1)}\propto \tr[\rho_j O_k]$ (since $c^{(1)}_{jk}$ is a real matrix).
\end{proof}

We now prove Lemma \ref{shadowestf} that gives, for the qubit implementation, guarantees on the error in measuring the matrix $c^{(1)}$ using shadow tomography. The analogous result for the fermionic implementation is proved in Appendix \ref{fermionicimp}.
\shadowestf*
\begin{proof}
We first reorder the Hilbert spaces as $\mc C \otimes \mc A \otimes \mc B$ so that observables $O_k^{+}$ can be written as Majorana strings of weight two. We then define new Majorana operators $\hat{\gamma_i}$
as follows:
\begin{align}
\hat{\gamma}_1 &=X_{\mc{C}},\\
\hat{\gamma}_2 &=Y_{\mc{C}},\\
\hat{\gamma}_i&=Z_{\mc{C}} \gamma_{i-2}, \>\>\> i=3, \hdots, 4n+2,
\end{align}
where $\gamma_i$ are the Majorana operators defined in the same way as in Eqs.~(\ref{jw1}) and (\ref{jw2}) on $\mc{A}\otimes \mc{B}$ containing qubits $1, \hdots, 4n$.
This gives us the following representation of the operators $O_{k}^{\pm}$ defined in Eqs.~(\ref{Oplus}) and (\ref{Ominus}):
\begin{align}
O_k^+&= i\hat{\gamma}_{k+2}\hat{\gamma}_2 \label{okpdef},\\
O_k^-&=-i\hat \gamma_{k+2}\hat \gamma_1.
\end{align}
From  Eq.~(12) in Theorem 2 of Supp.~Mat.~in Ref.~\cite{Zhao_2021}, estimating a Majorana observable $O_j$ with Majorana weight $2k$ with error $\epsilon$ with probability $\geq 1-\delta$ requires $N_c$ copies of the state, where $N_c$ is
\begin{align}
N_c=\left( 1+\frac{\epsilon}{3}\right)\frac{\log(2L/\delta)}{\epsilon^2}\text{max}_{1\leq j \leq L} \norma{O_i}^2_{\mc{U}}, \label{zharores}
\end{align}
where $\norma{O_i}^2_{\mc{U}} =\binom{2\bar n}{2k}/\binom{\bar n}{k}$, where $\bar n$ is the number of qubits, and $2k$ is the Majorana weight of the observable $O_j$ (equal to 2 in our case). Since each state $\ket{\psi_j}$ is used to construct each row of $c^{(1)}_{jk}$, we have $L=2n$ observables. Moreover, because we want the entire error matrix $E^{(1)}$ to have Frobenius norm $\leq \epsilon$ with probability $\geq 1-\delta$, we set $\delta \rightarrow \delta/2n$, and $\epsilon \rightarrow \epsilon/2n$ to obtain Eq.~(\ref{Mcomp}).

\end{proof}

\section{The Pauli decoupling theorem for $\bar W_t$ and guarantee for the reduced quantum channel} \label{appb}

In this Appendix, we first show that $\bar W_t$ from Definition \ref{defbarw} satisfies the Pauli decoupling condition in Eq.~(\ref{paulidec}). We then state and prove Lemma \ref{localglobal}, which shows that the Pauli decoupling property satisfied by $W_t$ in Eq.~(\ref{lastlemmares}) and $\bar W_t$ in Eq.~(\ref{paulidec}) ensures that these unitaries can be approximated by their corresponding reduced quantum channels from Definition \ref{locglodef}. 

Let's first consider the case where $[W_t, \gamma_i]=0$ for $i > M$. We can then show that $\bar W_t$ acts only on the first $m$ qubits. This is shown in the following lemma.

\begin{restatable}[Properties of $\bar W_t$ for exact Majorana decoupling]{lemma}{propW}\label{propW}
\normalfont
In the case where $[W_t, \gamma_i]=0$ for $i=M+1, \hdots, 2n$, the following holds: 
\begin{align}
\bar{W}_t = \bra{\bar{0}}W_t\ket{\bar{0}}_A\otimes {I}_B,
\end{align}
where register $A$ contains qubits labeled $1, \hdots, m$, register $B$ contains qubits labeled $m+1, \hdots, n$, $ I_B $ is the identity on qubits in register $B$, $\ket{\bar{0}}$ is the state $\ket{0^{n-m}}$ defined on qubits in register $B$, and $\bra{\bar{0}}W_t \ket{\bar{0}}$ is an operator defined on register $A$.
\end{restatable}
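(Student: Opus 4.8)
The plan is to move the exact Majorana decoupling condition into the frame defined by $\bar U_d$ and then reduce the claim to an elementary commutant argument. First I would observe that the diagonal unitaries in Definition~\ref{defbarw} are Clifford: since $(-1)^{\alpha_x(\alpha_x-1)/2} = (-1)^{\binom{\alpha_x}{2}} = \prod_{i<j}(-1)^{x_i x_j}$, we have $V_d = \prod_{1\le i<j\le n}\text{CZ}_{ij}$ and $U_d = \prod_{1\le i<j\le m}\text{CZ}_{ij}$. Because all controlled-$Z$ gates commute and square to the identity, the factors acting within register $A$ cancel in the product $V_d U_d$, leaving $\bar U_d = \prod_{i<j,\, j>m}\text{CZ}_{ij}$, a product of controlled-$Z$ gates each touching at least one qubit of register $B$.

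Next, since conjugation preserves commutators, the hypothesis $[W_t,\gamma_i]=0$ for $i>M=2m$ is equivalent to $[\bar W_t,\tilde\gamma_i]=0$ for the transformed operators $\tilde\gamma_i := \bar U_d^\dagger \gamma_i \bar U_d$. The heart of the proof, and the step I expect to be most delicate, is to compute these $\tilde\gamma_i$ for $i>2m$ using the rule $\text{CZ}_{ab} X_a \text{CZ}_{ab} = X_a Z_b$ together with the $\text{CZ}$-invariance of $Z$. For $k>m$ the qubit $k$ is paired by $\bar U_d$ with every other qubit, so the Jordan-Wigner string gets reversed: one finds $\tilde\gamma_{2k-1} = X_k\prod_{j>k}Z_j$ and $\tilde\gamma_{2k} = Y_k\prod_{j>k}Z_j$, the left-hand $Z$-strings of $\gamma_{2k-1},\gamma_{2k}$ cancelling against the $\text{CZ}$ dressing. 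Crucially, these $2(n-m)$ operators are supported entirely on register $B$, are Hermitian, square to the identity, and pairwise anticommute, so they form a complete set of Majorana generators for $B$ and hence generate the full algebra $I_A\otimes\mathcal L(B)$.

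It then follows that $\bar W_t$ commutes with all of $I_A\otimes\mathcal L(B)$, so by the tensor-product commutant relation $\bar W_t = O_A\otimes I_B$ for some operator $O_A$ on register $A$. Finally I would pin down $O_A$ by taking the $B$-matrix element in the all-zeros state: a short check shows $\bar U_d$ acts as the identity on the subspace $\mathcal H_A\otimes\ket{\bar 0}_B$ (appending zeros leaves Hamming weights unchanged, and $p(\cdot)^2=1$), so that $O_A = \bra{\bar 0}_B\bar U_d^\dagger W_t\bar U_d\ket{\bar 0}_B = \bra{\bar 0}_B W_t\ket{\bar 0}_B$, which is exactly $\bra{\bar 0}W_t\ket{\bar 0}_A$. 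This yields $\bar W_t = \bra{\bar 0}W_t\ket{\bar 0}_A\otimes I_B$, as claimed. The main obstacle is the Majorana conjugation computation of the previous paragraph; the commutant step and the matrix-element identification are then routine.
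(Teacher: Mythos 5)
Your proposal is correct, and each step checks out: $p(\alpha_x)=(-1)^{\binom{\alpha_x}{2}}=\prod_{i<j}(-1)^{x_i x_j}$ does identify $V_d=\prod_{1\le i<j\le n}\mathrm{CZ}_{ij}$ and $U_d=\prod_{1\le i<j\le m}\mathrm{CZ}_{ij}$, so the $A$-internal factors cancel and $\bar U_d=\prod_{i<j,\, j>m}\mathrm{CZ}_{ij}$; since each qubit $k>m$ is then paired with every other qubit, conjugating the Jordan--Wigner operators indeed gives $\tilde\gamma_{2k-1}=X_k\prod_{j>k}Z_j$ and $\tilde\gamma_{2k}=Y_k\prod_{j>k}Z_j$, supported entirely on $B$; and the commutant and matrix-element steps are routine. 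However, yours is a genuinely different argument from the paper's. The paper never invokes a commutant: it first shows $W_t$ is block diagonal in the $B$ computational basis, because the projectors $\Pi_{\bar z}\propto\prod_{j>m}\left[1+(-1)^{z_j}(-i\gamma_{2j-1}\gamma_{2j})\right]$ commute with $W_t$; it then relates each diagonal block $\bra{\bar x}W_t\ket{\bar x}$ to $\bra{\bar 0}W_t\ket{\bar 0}$ by conjugating with the Majorana strings $\Gamma_x=\gamma_1^{x_1}\gamma_3^{x_2}\cdots\gamma_{2n-1}^{x_n}$ (which commute with $W_t$ for support in $B$), and the resulting sign bookkeeping $\Gamma_x^2=p(\alpha_x)$ is what \emph{produces} the diagonal unitaries $U_d, V_d$ of Definition~\ref{defbarw}. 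In short, the paper derives $\bar U_d$ from the block structure, while you take Definition~\ref{defbarw} as given and verify it by transporting the decoupling condition into the $\bar U_d$ frame. Your route buys a conceptual dividend: it exhibits $\bar U_d$ as a CZ-type Clifford that reverses the Jordan--Wigner strings of $\gamma_i$ for $i>2m$, localizing these Majoranas on $B$, after which $\bar W_t=O_A\otimes I_B$ is immediate; the paper's route, by contrast, explains where $U_d$ and $V_d$ come from in the first place. One point worth making explicit in your write-up: that $2(n-m)$ pairwise-anticommuting Hermitian involutions on $B$ generate all of $\mathcal{L}(\mathcal{H}_B)$ is not automatic for an abstract Clifford-algebra representation (it can be reducible), but with your explicit forms it is a one-line check, e.g.\ $\tilde\gamma_{2k-1}\tilde\gamma_{2k}=iZ_k$ recovers every $Z_k$ with $k>m$, and then $\tilde\gamma_{2k-1}\prod_{j>k}Z_j=X_k$, so all Paulis on $B$ are generated.
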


\begin{proof}
We first consider the following expression for $W_t$:
\begin{align}
W_t = \sum_{\bar{x}, \bar{y}} \bra{\bar{x}}W_t \ket{\bar{y}}_A \otimes \ketbra{\bar{x}}{\bar{y}}_B,
\end{align}
where $\ket{\bar{x}(\bar{y})}$ is a computational basis state on qubits $m+1, \hdots, n$. The first observation is that $\bra{\bar{x}}W_t\ket{\bar{y}}=0$ unless $\bar{x}=\bar{y}$. This is because $\Pi_{\bar{z}}:=I^{[1]}\otimes\ketbra{\bar{z}}{\bar{z}}$, where $I^{[1]}$ is the identity on the qubit block [1] which consists of qubits $1, \hdots, m$, commutes with $W_t$ for all $\bar{z}$. This follows because
\begin{align}
\Pi_{\bar{z}}=\frac{1}{2^{n-m}}[1+(-1)^{z_{m+1}}(-i\gamma_{2m+1} \gamma_{2m+2})] \notag \\ \cdots [1+(-1)^{z_n}(-i\gamma_{2n-1}\gamma_{2n})],
\end{align}
and $[W_t, \gamma_i]=0$ for $i=2m+1, \hdots, 2n$. Then $\bra{\bar{x}}W_t \ket{\bar{y}} = \bra{\bar{x}}W_t \Pi_{\bar{y}}\ket{\bar{y}}=\bra{\bar{x}}\Pi_{\bar{y}}W_t \ket{\bar{y}}=0$ unless $\bar{x}=\bar{y}$. This allows us to simplify $W_t$ as follows:
\begin{align}
W_t = \sum_{\bar{x}} \bra{\bar{x}}W_t \ket{\bar{x}}_A\ketbra{\bar{x}}{\bar{x}}_B. \label{wtdef}
\end{align}
We now relate $\bra{\bar{x}}W_t \ket{\bar{x}}$ to $\bra{\bar{0}}W_t \ket{\bar{0}}$ as follows:
\begin{align}
\bra{\bar{x}}W_t \ket{\bar{x}}&=\sum_{x^\prime y^\prime}\bra{x^\prime \bar{x}}W_t \ket{y^\prime \bar{x}}\ketbra{x^\prime}{y^\prime}\\
&=\sum_{x^\prime y^\prime}\bra{0}\Gamma_{x^\prime \bar{x}}^\dagger W_t \Gamma_{y^\prime \bar{x}}\ket{0}\ketbra{x^\prime}{y^\prime},
\end{align}
where $\ket{0}:=\ket{0_1 \cdots 0_n}$ and
\begin{align}
\Gamma_x:=\gamma_1^{x_1} \gamma_3^{x_2} \hdots \gamma_{2n-1}^{x_n}. \label{Gammadef}
\end{align}
We further simplify the matrix element $\bra{0}\Gamma_{x^\prime \bar{x}}^\dagger W_t \Gamma_{y^\prime \bar{x}}\ket{0}$ as
follows: 
\begin{align}
&\bra{0}\Gamma_{x^\prime \bar{x}}^\dagger W_t \Gamma_{y^\prime \bar{x}}\ket{0} \\ =&  \Gamma_{y^\prime \bar{x}}^2 \Gamma_{x^\prime \bar{x}}^2 \bra{0}\Gamma_{x^\prime \bar{x}}W_t \Gamma_{y^\prime \bar{x}}^\dagger \ket{0}\\
=&\Gamma_{y^\prime \bar{x}}^2 \Gamma_{x^\prime \bar{x}}^2 \bra{0}\Gamma_{x^\prime}\Gamma_{\bar{x}}W_t \Gamma_{\bar{x}}^\dagger \Gamma_{y^\prime}^\dagger
\ket{0}\\
 =&\Gamma_{y^\prime \bar{x}}^2 \Gamma_{x^\prime \bar{x}}^2 \bra{0}\Gamma_{x^\prime}\Gamma_{\bar{x}}\Gamma_{\bar{x}}^\dagger W_t \Gamma_{y^\prime}^\dagger \ket{0}\\
 =& \Gamma_{y^\prime \bar{x}}^2 \Gamma_{x^\prime \bar{x}}^2 \bra{0}\Gamma_{x^\prime}W_t \Gamma_{y^\prime}^\dagger \ket{0}\\
 =& \Gamma_{y^\prime \bar{x}}^2 \Gamma_{x^\prime \bar{x}}^2 \Gamma_{x^\prime}^2 \Gamma_{y^\prime}^2 \bra{0}\Gamma_{x^\prime}^\dagger W_t \Gamma_{y^\prime}\ket{0}\\
 =&\Gamma_{y^\prime \bar{x}}^2 \Gamma_{x^\prime \bar{x}}^2 \Gamma_{x^\prime}^2 \Gamma_{y^\prime}^2 \bra{x^\prime \bar{0}}W_t \ket{y^\prime \bar{0}},
\end{align}
where we use the following facts: 
\begin{align}
\Gamma_x = \Gamma_x^2 \Gamma_x^\dagger,\\
\Gamma_x^2 = (\Gamma_x^2)^\dagger = \pm 1,\\
\Gamma_{x^\prime \bar{y}} = \Gamma_{x^\prime}\Gamma_{\bar{y}},\\
[W_t, \Gamma_{\bar{x}}^\dagger]=0,\\
\Gamma_x \Gamma_x^\dagger =1,
\end{align}
where we define $\Gamma_{x^\prime} = \Gamma_{x^\prime 0 \hdots 0}$  and $\Gamma_{\bar{x}}=\Gamma_{0 \cdots 0 \bar{x}}$, with $x^\prime = x_1, \hdots , x_m$ and $\bar{x} = x_{m+1}, \hdots, x_n$. This gives us
\begin{align}
\bra{\bar{x}}W_t \ket{\bar{x}} = \sum_{x^\prime y^\prime} \Gamma_{y^\prime \bar{x}}^2 \Gamma_{x^\prime \bar{x}}^2 \Gamma_{x^\prime}^2 \Gamma_{y^\prime}^2 \bra{x^\prime \bar{0}}W_t \ket{y^\prime \bar{0}}\ketbra{x^\prime}{y^\prime}.
\end{align}

We can then write
\begin{align}
\bra{\bar{x}}W_t \ket{\bar{x}} = U_d V_{\bar{x}} \bra{\bar{0}}W_t \ket{\bar{0}}V_{\bar{x}}U_d,
\end{align}
where $V_{\bar x}$ and $U_d$ are diagonal unitaries on registers $A$ and $B$, respectively, as follows: 
\begin{align}
V_{\bar{x}}\ket{x^\prime} = \Gamma_{x^\prime \bar{x}}^2 \ket{x^\prime},\\
U_d\ket{x^\prime}=\Gamma_{x^\prime}^2 \ket{x^\prime}.
\end{align}
Using Eq.~(\ref{wtdef}), we then get
\begin{align}
W_t = U_d V_d \left( \bra{\bar{0}}W_t \ket{\bar{0}} \otimes \bar{I} \right)V_d U_d,
\end{align}
where 
\begin{align}
V_d=\sum_{\bar{x}} V_{\bar{x}}\otimes \ketbra{\bar{x}}{\bar{x}}_B.
\end{align}
Moreover, we can simplify the form of the unitary $V$ as follows:
\begin{align}
V_d &= \sum_{\bar{x}} \sum_{x^\prime y^\prime} \bra{x^\prime }V_{\bar{x}}\ket{y^\prime}\ket{x^\prime \bar{x}} \bra{y^\prime \bar{x}}\\
&=\sum_{\bar{x}}\sum_{x^\prime y^\prime} \Gamma_{y^\prime \bar{x}}^2 \braket{x^\prime}{y^\prime} \ketbra{x^\prime \bar{x}}{y^\prime \bar{x}}\\
&=\sum_{x^\prime \bar{x}}\Gamma_{x^\prime \bar{x}}^2 \ketbra{x^\prime \bar{x}}{x^\prime \bar{x}}\\
&=\sum_x \Gamma_x^2 \ketbra{x}{x}.
\end{align}
\end{proof}

We now consider the practical case where $\norm{[W_t, \gamma_i]}=\epsilon_0$ with $i \in [M]$. We first gather a few useful properties about the unitaries $U_d$ and $V_d$ that define $\bar U_d$ in $\bar W_t = \bar U_d^\dagger W_t \bar U_d$ via Eqs.~(\ref{udef2}) and (\ref{vdef2}) in the following lemma.

\begin{lemma}[Some properties of unitaries $U_d$ and $V_d$]
\normalfont
The unitaries $U_d$ and $V_d$ satisfy the following properties.
\begin{enumerate}[label=(\alph*)]
\item The diagonal entries of $U_d$ and $V_d$ satisfy $\bra{x^\prime}U_d\ket{x^\prime}=\Gamma_{x^\prime}^2$ and $\bra{x}V_d \ket{x}=\Gamma_x^2$, respectively, where $x^\prime$ is the computational basis state on qubits $1, \hdots, m$, $x$ is the computational basis state on qubits $1, \hdots, n$, and $\Gamma_x$ is defined in Eq.~(\ref{Gammadef}). Moreover, we can show that $\Gamma_x^2=p(\alpha_x)$, where $\alpha_x$ is the Hamming weight of $x$, and $p(\alpha)=(-1)^{\alpha(\alpha-1)/2}$ obeys the following recursive relation: 
\begin{align}
p(\alpha)=(-1)^{\alpha-1}p(\alpha-1) \label{pdefr},
\end{align}
with $p(0)=1$.
Both $U_d$ and $V_d$ can be efficiently implemented using Hamiltonian simulation of a 2-local Hamiltonian.
\item For $k=1, \hdots, n$,  we have that
\begin{align}
[Z_k, V_d]&=0 \label{eqx2},\\
\Big(\prod_{\substack{l\neq k}}^{n}Z_l \Big) X_k V_d X_k &= V_d. \label{eqx1}
\end{align}

\item For $\norma{[W_t, \gamma_i]} \leq \epsilon_0$, we have the following error bounds: 
\begin{align}
\norm{[W_t, (Z_{m+1}^{x_{m+1}}\cdots Z_n^{x_n})]} \leq 2
\alpha_{y}
\epsilon_0\label{normbound},
\end{align}
where $y=x_{m+1}, \hdots, x_n$  and $\alpha_x$ is the Hamming weight of $x$.
\end{enumerate}
\end{lemma}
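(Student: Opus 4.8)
The plan is to establish the three parts sequentially, each reducing to an elementary computation with Majorana operators and the diagonal structure of $U_d$ and $V_d$. For part (a), the key observation is that $\Gamma_x = \gamma_1^{x_1}\gamma_3^{x_2}\cdots\gamma_{2n-1}^{x_n}$ is a product of exactly $\alpha_x$ distinct Majorana operators (those $\gamma_{2k-1}$ with $x_k=1$). First I would establish the standard identity that a product of $q$ distinct anticommuting involutions squares to $(-1)^{q(q-1)/2}$; this follows by induction on $q$, moving the leading factor of the second copy past the remaining $q-1$ factors of the first copy to pick up $(-1)^{q-1}$, and using $\gamma_i^2=I$. Setting $q=\alpha_x$ yields $\Gamma_x^2 = (-1)^{\alpha_x(\alpha_x-1)/2} = p(\alpha_x)$, matching the prescribed diagonal entries of $U_d$ and $V_d$ from Definition \ref{defbarw}. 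The recursion in Eq.~(\ref{pdefr}) follows by computing $p(\alpha)/p(\alpha-1) = (-1)^{[\alpha(\alpha-1)-(\alpha-1)(\alpha-2)]/2}=(-1)^{\alpha-1}$, with $p(0)=1$ immediate. For the efficient-implementation claim, I would write the Hamming weight as the number operator $\hat N = \sum_i (1-Z_i)/2$ and use $\binom{\hat N}{2}=\sum_{i<j}\hat n_i \hat n_j$, so that $V_d = \exp(i\pi\sum_{i<j}\hat n_i\hat n_j)$ is time evolution under a $2$-local (diagonal $ZZ$ and $Z$) Hamiltonian, and similarly for $U_d$ restricted to the first $m$ qubits.

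For part (b), the commutator $[Z_k,V_d]=0$ in Eq.~(\ref{eqx2}) is immediate since both operators are diagonal in the computational basis. The identity in Eq.~(\ref{eqx1}) I would prove by comparing diagonal entries. Conjugating by $X_k$ flips bit $k$, so the $\ket{y}$-diagonal entry of $X_k V_d X_k$ is $p(\alpha_{y\oplus e_k})$, where $\alpha_{y\oplus e_k}=\alpha_y\pm 1$ according to whether $y_k=0$ or $1$; the recursion from part (a) gives $p(\alpha_y+1)=(-1)^{\alpha_y}p(\alpha_y)$ and $p(\alpha_y-1)=(-1)^{\alpha_y-1}p(\alpha_y)$. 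Multiplying by the sign $(-1)^{\alpha_y-y_k}$ contributed by $\prod_{l\neq k}Z_l$ on $\ket{y}$ then cancels the extra sign in both cases (giving $(-1)^{2\alpha_y}$ or $(-1)^{2(\alpha_y-1)}$), returning $p(\alpha_y)$, i.e.\ the $\ket{y}$-entry of $V_d$. The careful bookkeeping of these signs across the two parity cases is the main obstacle, though it is purely mechanical.

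For part (c), I would use the identity $Z_k = -i\gamma_{2k-1}\gamma_{2k}$ together with the fact that, for $k>m$, both indices satisfy $2k-1,2k > M=2m$, so the hypothesis $\norma{[W_t,\gamma_i]}\leq\epsilon_0$ applies to each. The commutator Leibniz rule $[W_t,\gamma_{2k-1}\gamma_{2k}]=[W_t,\gamma_{2k-1}]\gamma_{2k}+\gamma_{2k-1}[W_t,\gamma_{2k}]$ together with $\norma{\gamma_i}=1$ gives $\norma{[W_t,Z_k]}\leq 2\epsilon_0$. Writing $Z_{m+1}^{x_{m+1}}\cdots Z_n^{x_n}$ as a product of the $\alpha_y$ factors with $x_k=1$ and applying the telescoping identity $[W_t,P_1\cdots P_r]=\sum_j P_1\cdots P_{j-1}[W_t,P_j]P_{j+1}\cdots P_r$, with each $P_j$ unitary of unit norm, then yields $\norma{[W_t,Z_{m+1}^{x_{m+1}}\cdots Z_n^{x_n}]}\leq 2\alpha_y\epsilon_0$, as claimed in Eq.~(\ref{normbound}).
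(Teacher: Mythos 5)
Your proof is correct and follows essentially the same route as the paper's: part (a) by the standard reordering sign count giving $\Gamma_x^2=(-1)^{\alpha_x(\alpha_x-1)/2}$, part (b) by comparing diagonal entries of $\big(\prod_{l\neq k}Z_l\big)X_kV_dX_k$ with $V_d$ using the recursion $p(\alpha)=(-1)^{\alpha-1}p(\alpha-1)$ in both parity cases, and part (c) via $Z_k=-i\gamma_{2k-1}\gamma_{2k}$ together with the Leibniz/telescoping bound over the $\alpha_y$ nontrivial factors. Your only addition beyond the paper is spelling out the explicit $2$-local implementation $V_d=\exp\big(i\pi\sum_{i<j}\hat{n}_i\hat{n}_j\big)$, a detail the paper asserts without proof, and it checks out since $\binom{\hat{N}}{2}=\sum_{i<j}\hat{n}_i\hat{n}_j$ for projectors $\hat{n}_i$.
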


\begin{proof}
\item (a) This follows from the definition of the unitaries in Eqs.~(\ref{vdef2}) and (\ref{udef2}). The recursive formula for $\Gamma_x^2$ follows from the definition of $\Gamma_x$.
\item (b) Since $V_d$ and $Z_k$ are diagonal unitaries, they commute.
We now prove Eq.~(\ref{eqx1}). We can write $V_d$ from Eq.~(\ref{vdef2}) as follows: 
\begin{align}
V_d &= \sum_{x\setminus x_k} \Gamma_{\beta_k}^2 \ketbra{\beta_k}{\beta_k} + \Gamma_{\beta_k^\prime}^2 \ketbra{\beta_k^\prime}{\beta_k^\prime},
\end{align}
where $\beta_k =x_1 \hdots x_{k-1}0_k x_{k+1}\hdots x_n$ and $\beta_k^\prime = x_1 \hdots x_{k-1}1_k x_{k+1}\hdots x_n$. First note that
\begin{align}
X_k V_d X_k = \sum_{x\setminus x_k} \Gamma_{\beta_k^\prime}^2 \ketbra{\beta_k}{\beta_k}+ \Gamma_{\beta_k}^2 \ketbra{\beta_k^\prime}{\beta_k^\prime}.\label{xvx}
\end{align}
We can compute the left-hand side of Eq.~(\ref{eqx1}) as follows: 
\begin{align}
\Big(\prod_{l=1, l\neq k}^{n}Z_l \Big) X_k V_d X_k = \sum_{x\setminus x_k}\Gamma_{\beta_k^\prime}^2 (-1)^{\alpha_{k}}\ketbra{\beta_k}{\beta_k} \notag\\ + \Gamma_{\beta_k}^2 (-1)^{\alpha_k} \ketbra{\beta_k^\prime}{\beta_k^\prime},
\end{align}
where we use $\alpha_k:=\alpha_{b}$ with $b=\beta_k$. Finally, using the relation $\Gamma_{\beta_k^\prime}^2 (-1)^{\alpha_k}= \Gamma_{\beta_k}^2$ from Eq.~(\ref{pdefr}), we obtain Eq.~(\ref{eqx1}).

\item(c) Using the commutator identity $[A, BC]=[A, B]C+B[A, C]$, the triangle inequality for the spectral norm, the fact that $\norm{U}=1$ for a unitary $U$, and $Z_i=-i\gamma_{2i-1}\gamma_{2i}$, we get Eq.~(\ref{normbound}).
\end{proof}

We now prove Lemma \ref{introwtp}, which shows that $\bar W_t$ satisfies the Pauli decoupling property in Eq.~(\ref{paulidec}) when $[W_t, \gamma_i] \approx 0$ for $i>M$.

\begin{lemma}[Locality property of $\bar{W}_t$] \label{introwtp}
\normalfont
We consider here the unitary $\bar{W}_t=U_dV_d W_t V_d U_d$, where $W_t$ satisfies $\norm{[W_t, \gamma_i]} \leq \norma{\epsilon_0}$ for $i=2m+1, \hdots, 2n$ from Eq.~(\ref{approxmaj}) in Lemma \ref{constructW}. Then $\bar{W}_t$ satisfies the following properties:
\begin{align}
\norma{[\bar{W}_t, Z_k]} &\leq 2\epsilon_0, \label{eqz}\\
\norma{[\bar{W}_t, X_k]} &\leq (2n+1)\epsilon_0,\label{eqx}\\
\norma{[\bar{W}_t, Y_k]}&\leq 2(n+1)\epsilon_0 \label{eqy},
\end{align}
for $k=m+1, \hdots, n$. Here $\epsilon_0$ is defined in Eq.~(\ref{epsilon0}). Moreover, we can then prove that $\bar W_t$ satisfies the Pauli decoupling property as follows:
\begin{align}
\frac{1}{2}\sum_{P \in \{X,  Y, Z \}}\norm{[\bar W_t, P_i]} \leq \epsilon_P, \>\>
\epsilon_P=(2n+3)\epsilon_0,
\tag{\ref{paulidec}}
\end{align}
where $P_i$ acts on the qubit labeled $i \in \{ m+1, \hdots, n\}$.
\end{lemma}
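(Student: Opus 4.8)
The plan is to establish the three commutator bounds in Eqs.~(\ref{eqz}), (\ref{eqx}), and (\ref{eqy}) one at a time and then add them to obtain Eq.~(\ref{paulidec}). I will use throughout that $U_d$ and $V_d$ are real diagonal unitaries with $\pm 1$ entries [Eqs.~(\ref{vdef2}), (\ref{udef2})], hence Hermitian involutions, so that $\bar W_t = \bar U_d^\dagger W_t \bar U_d = U_d V_d W_t V_d U_d$, that the spectral norm is unitarily invariant, and that the hypothesis gives $\norma{[W_t, \gamma_i]} \leq \epsilon_0$ for every $i > M = 2m$.

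The $Z_k$ bound is the easy case. Since $U_d$ and $V_d$ are diagonal, they commute with $Z_k$, so $[\bar W_t, Z_k] = U_d V_d [W_t, Z_k] V_d U_d$ and unitary invariance gives $\norma{[\bar W_t, Z_k]} = \norma{[W_t, Z_k]}$. Writing $Z_k = -i\gamma_{2k-1}\gamma_{2k}$ and expanding the commutator with the Leibniz rule, both Majorana indices exceed $M$ when $k \geq m+1$, so each term is at most $\epsilon_0$ and $\norma{[\bar W_t, Z_k]} \leq 2\epsilon_0$; this is Eq.~(\ref{normbound}) for a weight-one string.

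The main work, and the only delicate point, is $X_k$: because $X_k$ flips the parity of qubit $k$ it does \emph{not} commute with the Hamming-weight phase $V_d$, so the easy route fails. The idea is to conjugate the obstruction away using Eq.~(\ref{eqx1}). Since $k>m$, $X_k$ commutes with $U_d$, and $X_k^2=I$ lets me rewrite $\norma{[\bar W_t, X_k]} = \norma{\bar W_t - X_k \bar W_t X_k}$. Substituting $\bar W_t = U_d V_d W_t V_d U_d$, inserting $X_k X_k = I$ around $W_t$, and applying $X_k V_d X_k = (\prod_{l \neq k} Z_l) V_d$ from Eq.~(\ref{eqx1}) to both copies of $V_d$ (using $[Z_l, V_d]=0$) collapses the expression to $U_d V_d\,(Q W_t Q^\dagger)\,V_d U_d$ with $Q = (\prod_{l\neq k} Z_l)X_k$. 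Unitary invariance then yields the clean reduction
\begin{align}
\norma{[\bar W_t, X_k]} = \norma{W_t - Q W_t Q^\dagger} = \norma{[W_t, Q]}.
\end{align}
The payoff is that the Jordan-Wigner strings cancel: with $X_k = (\prod_{i<k}Z_i)\gamma_{2k-1}$ one finds $Q = (\prod_{l=k+1}^n Z_l)\gamma_{2k-1}$, a product built entirely from Majorana modes $\gamma_j$ with $j>M$. A final Leibniz expansion, bounding each factor by $\norma{[W_t,Z_l]}\leq 2\epsilon_0$ and $\norma{[W_t,\gamma_{2k-1}]}\leq \epsilon_0$, gives $\norma{[\bar W_t, X_k]} \leq (2(n-k)+1)\epsilon_0 \leq (2n+1)\epsilon_0$.

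The $Y_k$ case is parallel: a one-line computation from $Y_k = iX_kZ_k$, $Z_kX_k = -X_kZ_k$ and $[Z_k,V_d]=0$ shows $Y_k V_d Y_k = (\prod_{l\neq k}Z_l)V_d$, so the same reduction applies with $Q' = (\prod_{l=k+1}^n Z_l)\gamma_{2k}$ and yields $\norma{[\bar W_t, Y_k]} \leq 2(n+1)\epsilon_0$. Summing the three bounds gives $\tfrac12\sum_{P}\norma{[\bar W_t, P_i]} \leq \tfrac12\big((2n+1)+2(n+1)+2\big)\epsilon_0 \leq (2n+3)\epsilon_0$, which is Eq.~(\ref{paulidec}). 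I expect the only real bookkeeping to be in the $X_k$ reduction — tracking how the two copies of $V_d$ and the $Z$-strings commute through the conjugation so that $Q$ emerges as a pure Majorana product — after which everything follows from the given single-Majorana commutator bounds.
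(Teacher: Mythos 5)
Your proof is correct, and it rests on the same two pillars as the paper's own argument: the conjugation identity $X_k V_d X_k = \big(\prod_{l\neq k} Z_l\big) V_d$ from Eq.~(\ref{eqx1}) and the single-Majorana hypothesis $\norma{[W_t,\gamma_i]}\leq \epsilon_0$ for $i>M$, expanded by the Leibniz rule as in Eq.~(\ref{normbound}). The packaging, however, differs in a way worth recording. The paper conjugates $W_t$ by $X_k$ first and carries explicit error operators $O_1, O_2$ (and their conjugates $\tilde O_1, \tilde O_2$) through the subsequent $V_d, U_d$ conjugations; you instead conjugate $\bar W_t$ as a whole and obtain the \emph{exact} identity $\norma{[\bar W_t, X_k]} = \norma{[W_t, Q]}$ with $Q = \big(\prod_{l=k+1}^{n} Z_l\big)\gamma_{2k-1}$ a Hermitian involution supported entirely on Majorana modes with index above $M$ (note $R=\prod_{l\neq k}Z_l$ commutes with both $V_d$ and $X_k$, so $Q=Q^\dagger$ and $Q^2=I$ as your reduction requires), after which a single Leibniz expansion finishes the job. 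This buys a cleaner derivation and the marginally sharper bound $(2(n-k)+1)\epsilon_0$. More substantively, you treat $Y_k$ by the same direct reduction, via $Y_k V_d Y_k = X_k V_d X_k$ and $Q' = \big(\prod_{l>k} Z_l\big)\gamma_{2k}$, which actually establishes the stated bound $\norma{[\bar W_t, Y_k]}\leq 2(n+1)\epsilon_0$ of Eq.~(\ref{eqy}); the paper's shortcut through $Y=iXZ$ combined with the triangle inequality on Eqs.~(\ref{eqz}) and (\ref{eqx}) only yields $(2n+3)\epsilon_0$, so your route repairs a small constant-factor slack in the as-written derivation of Eq.~(\ref{eqy}). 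Either way the final Pauli-decoupling constant in Eq.~(\ref{paulidec}) is safe, since $\tfrac{1}{2}\big(2+(2n+1)+2(n+1)\big)\epsilon_0 = \big(2n+\tfrac{5}{2}\big)\epsilon_0 \leq (2n+3)\epsilon_0$.
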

\begin{proof}
We first prove Eq.~(\ref{eqz}). We have
\begin{align}
\bar{W}_t Z_k &= U_d V_d W_t V_d U_d Z_k\\
&=U_d V_d W_t Z_k V_d U_d\\
&=U_d V_d Z_k W_t V_d U_d + \tilde{O}_3\\
&=Z_k U_d V_d W_t V_d U_d + \tilde{O}_3,
\end{align}
where we used the facts that $U_d$ acts on qubits $1, \hdots, m$, $Z_k$ commutes with $V_d$ from Eq.~(\ref{eqx2}), and that $\tilde{O}_3=U^{(1)} [W_t, Z_k]U^{(2)}$, for some unitaries $U^{(1)}$ and $U^{(2)}$, with $\norm{[W_t, Z_k]}\leq 2\epsilon_0$ from Eq.~(\ref{normbound}). This gives us Eq.~(\ref{eqz}).

We now prove Eq.~(\ref{eqx}). We first compute $X_k W_t X_k$ as follows: 
\begin{align}
X_k W_t X_k &= (Z_1 \cdots Z_{k-1}) \gamma_{2k-1}W_t \gamma_{2k-1}(Z_1\cdots Z_{k-1})\\
&= (Z_1 \cdots Z_{k-1})W_t (Z_1 \cdots Z_{k-1})+O_1\\
&=(Z_1 \cdots Z_{k-1})U_{z,k+1}U_{z, k+1}W_t (Z_1 \cdots Z_{k-1}) \notag \\&+ O_1,
\end{align}
where $O_1=U^{(3)} [W_t, \gamma_{2k-1}]U^{(4)}$ for some unitaries $U^{(3)}$, $U^{(4)}$, and $U_{z, k+1}=\prod_{l=k+1}^{n}Z_l$. Continuing the calculation gives us
\begin{align}
X_k W_t X_k=&(Z_1\cdots Z_{k-1})U_{z, k+1}W_t U_{z, k+1}(Z_1\cdots Z_{k-1}) \notag\\&+O_1+O_2\\
=& \Big(\prod_{l=1, l\neq k}^n Z_l \Big)W_t \Big(\prod_{l=1, l\neq k}^n Z_l \Big) + O_1 + O_2, \label{xwx}
\end{align}
where $O_2=U^{(5)} [U_{z,k+1}, W_t]U^{(6)}$ for some unitaries $U^{(5)}$, $U^{(6)}$. We have $\norm{O_1}\leq \epsilon_0$ from Eq.~(\ref{approxmaj}), and $\norm{O_2}\leq 2n\epsilon_0$ from Eq.~(\ref{normbound}). We can then compute the commutator in Eq.~(\ref{eqx}) as follows. First, consider
\begin{align}
\bar{W}_t X_k &= U_d V_d W_t V_d U_d X_k\\
&=U_d X_k(X_kV_d X_k)(X_k W_t X_k)(X_k V_d X_k)U_d, \label{wxk}
\end{align}
where we insert the identity $X_k^2=1$ in Eq.~(\ref{wxk}) and use the fact that $U_d$ commutes with $V_d$. Inserting the expression for $X_k W_t X_k$ from Eq.~(\ref{xwx}) and gives us
\begin{align}
\bar{W}_t X_k &=  U_d X_k (X_k V_d X_k) \big(\prod_{l=1, l\neq k}^n Z_l \big)W_t \\ &\big(\prod_{l=1, l\neq k}^n Z_l \big) (X_k V_d X_k)U_d + \tilde{O}_1 + \tilde{O}_2 \notag \\
&= X_k U_d V_d W_t V_d U_d + \tilde{O}_1 + \tilde{O}_2, 
\end{align}
where $\tilde{O_{i}}=U^{(i5)}O_i U^{{(i6)}}$ for some unitaries $U^{(i5)}$, $U^{(i6)}$.  We use the fact that $\big(\prod_{l=1, l\neq k}^n Z_l \big) (X_k V_d X_k)  = V_d$ from Eq.~(\ref{eqx1}), and $(X_k V_d X_k) \big(\prod_{l=1, l\neq k}^n Z_l \big)   = V_d$ since both $X_k V_d X_k$ and $\big(\prod_{l=1, l\neq k}^n Z_l \big)$ are diagonal operators (see Eq.~(\ref{xvx}) for the explicit form of $X_k V_d X_k$). This gives us $[\bar{W}_t, X_k]=\tilde{O}_1+\tilde{O}_2$, where $\lvert \lvert \tilde{O}_1 \rvert \rvert \leq \epsilon_0$, $\lvert \lvert {\tilde{O}_2 }\rvert \rvert\leq 2n\epsilon_0$, proving Eq.~(\ref{eqx}).
Finally, using Eqs.~(\ref{eqx}) and (\ref{eqy}) and the fact that $Y=iXZ$ gives us Eq.~(\ref{eqy}). Finally, we can obtain Eq.~(\ref{paulidec}) using Eqs.~(\ref{eqz}--\ref{eqy}).
\end{proof}

We now prove Lemma \ref{localglobal}, which shows that the Pauli decoupling property, shown to hold for $W_t$ and $\bar W_t$ in Eqs.~(\ref{lastlemmares}) and (\ref{paulidec}), respectively,  leads to a reduced quantum channel that is close to the action of the unitary channel. This is proved in the following lemma.

\localglobal*
\begin{proof}
We first show here that $\mc{E}^{U}_{m}$ is a CPTP map. Let $\rho$ be a quantum state over $m$ qubits. The channel $\mc{E}^{U}_{m}$ can be rewritten as follows: 
\begin{align}
\mc{E}^{\mc U}_m(\rho) &= \frac{1}{2^{n-m}}\tr_{\geq m}[U (\rho \otimes I)U^\dagger]\\
&=\frac{1}{2^{n-m}}\sum_{\bar z \bar x} \bra{\bar z}U \ketbra{\bar x}{\bar x}(\rho \otimes I)U^\dagger \ket{\bar z}\\
&=\sum_{\bar z \bar x} E_{\bar z \bar x}\rho E_{\bar z \bar x}^\dagger,
\end{align}
where $E_{\bar z \bar x}=\bra{\bar z}U\ket{\bar x}/\sqrt{2^{n-m}}$ and $\ket{\bar x}=\ket{x_{m+1}\hdots x_n}$ is a state on the qubits $m+1, \hdots, n$. Since $\sum_{\bar z \bar x}E_{\bar z \bar x}^\dagger E_{\bar z \bar x}=I^{\otimes m}$, it follows that $\mc{E}^{U}_{m}$ is a CPTP map (see Corollary 2.27 in \cite{watrous} for details).

We now prove Eq.~(\ref{eqcdec}). Let's first define qubit blocks such that qubits $1, \hdots, m$ are denoted as $[1]$, qubits $m+1, \hdots, n$ are denoted as $[2]$, qubits $n, \hdots, 2n$ are denoted as $[3]$, and qubits $2n+1, \hdots, 3n-m$ are denoted as $[4]$. 

We denote concatenated qubit blocks as $[i,\hdots, j]$, where $[i], [j]$ are the qubit blocks we defined earlier. Consider a state $\rho^{[1,2,3]}$ over registers $[1,2,3]$.  We first note that 
\begin{widetext}
\begin{align}
&\norm{(\mc{U}^{[1,2]} \otimes \mc I^{[3]}) \rho^{[1,2,3]} - (\mc{E}^{\mc U, [1]}_m \otimes \mc I^{[2, 3]}  ) \rho^{[1,2,3]}}_1\\
=&\Big\lVert  \tr_{[4]}\Big[ ( \mc{U}^{[1,2]} \otimes \mc I^{[3,4]})  \left(\rho^{[1,2,3]} \otimes \frac{I^{\otimes n-m}}{2^{n-m}}\right) \Big] - (\mc{E}^{\mc U, [1]}_m \otimes \mc I^{[2, 3]}  ) (\rho^{[1,2,3]}) \Big\rVert_1\\
=&\Big\lVert    \tr_{[2]  }\Big[ (\mc S_{[2], [4]} \circ (\mc{U}^{[1,2]} \otimes \mc I^{[3,4]})) \left(\rho^{[1,2,3]} \otimes \frac{I^{\otimes n-m}}{2^{n-m}}\right) \Big] \notag\\&-(\mc{E}^{\mc U, [1]}_m \otimes \mc I^{ [4,3]} ) (\rho^{[1,4,3]}) \Big\rVert_1 ,
\end{align}
where the swap unitary $S$ between qubit blocks $2$ and $4$ is defined as follows: 
\begin{align}
\mc{S}_{[2],[4]}=\mc S_{m+1, 2n+1} \circ \cdots \circ \mc S_{n, 3n-m},
\end{align}
and the channel corresponding to the swap is denoted by $\mc{S}$. Now we can use the triangle inequality to get

\begin{align}
&\norm{(\mc{U}^{[1,2]} \otimes \mc I^{[3]}) \rho^{[1,2,3]} - (\mc{E}^{\mc U, [1]}_m \otimes \mc I^{[2, 3]}  ) \rho^{[1,2,3]}}_1\\
\leq & \Big\lVert    \tr_{[2]}\Big[ (\mc S_{m+1, 2n+1} \circ \cdots \circ (\mc U^{[1,2]} \otimes \mc I^{[3,4]} ) \circ \mc S_{n, 3n-m} ) \left(\rho^{[1,2,3]} \otimes \frac{I^{\otimes n-m}}{2^{n-m}}\right) \Big] \notag\\&-(\mc{E}^{\mc U, [1]}_m \otimes \mc I^{[4,3]} ) \rho^{[1,4,3]} \Big\rVert_1 + \Big \lVert \mc S_{m+1, 2n+1} \circ \cdots \circ \mc C_1  \left(\rho^{[1,2,3]} \otimes \frac{I^{\otimes n-m}}{2^{n-m}}\right)    \Big \rVert_1, \label{conteq}
\end{align}

\end{widetext}
where $\mc C_1= \mc S_{n,3n-m}\circ (\mc U^{[1,2]} \otimes \mc I^{[3]} )-(\mc U^{[1,2]} \otimes \mc I^{[3]} ) \circ \mc S_{n,3n-m} $, and we used the fact that partial trace cannot increase $\norm{.}_1$. Since all $\mc{S}_{i,j}$ are unitary channels, the last term in Eq.~\eqref{conteq} can be simplified to
\allowdisplaybreaks
\begin{align}
&\big \lVert \mc (S_{m+1, 2n+1} \circ \cdots \circ \mc C_1)  (\rho_1)    \big \rVert_1 \notag\\
=& \big \lVert \mc C_1  (\rho_1)   \big \rVert_1\\
=&\Big\lVert \tr_{[5]}\Big[\mc{C}_1 \otimes \mc{I}^{[5]}  \left(\rho_1 \otimes \frac{I^{\otimes 3n-m}}{2^{3n-m}}\right) \Big] \Big\rVert_1\\
\leq& \Big\lVert  \mc{C}_1 \otimes \mc{I}^{[5]}  \left(\rho_1 \otimes \frac{I^{\otimes 3n-m}}{2^{3n-m}}\right)  \Big\rVert_1\\
\leq& 2\mc{D}_{\diamond}(\mc S_{n, 3n-m}\circ (\mc{U}^{[1,2]}\otimes \mc I^{[3]}), (\mc{U}^{[1,2]}\otimes \mc I^{[3]})\circ \mc S_{n, 3n-m}  ) \label{useyunchao}\\
\leq& 2\epsilon,
\end{align}
where $\rho_1 = \left(\rho \otimes \frac{I^{\otimes n-m}}{2^{n-m}}\right)$, the qubit block $[5]$ is defined by the qubits $3n-m+1, \hdots, 2(3n-m+1)$, and in Eq.~(\ref{useyunchao}), we use the following result from Eq.~(45) in Ref.~\cite{yunchao}: 
\begin{align}
\norm{\mc S_{i,j}(\mc{U} \otimes I_r)-(\mc{U}\otimes I_r)\mc{S}_{i,j} }_{\diamond} \leq 2\epsilon, \label{eqbdec}
\end{align}
where $\mc S_{i,j}$ is the unitary channel corresponding to the swap operator between qubit $i$ in the first $n$ qubits and qubit $j$ in the ancilla register of arbitrary size $r$. To bound the first term in Eq.~\eqref{conteq}, we repeat the same procedure as before with the other swap operators. Repeating the same step for the $j$th time gives the $j$th error term as follows:
\begin{align}
&\Big\lVert \mc S_{m+1, 2n+1}\circ \cdots \circ C_j \circ \cdots \circ S_{n, 3n-m} (\rho_1) \Big \rVert_1 \\=&\Big\lVert \mc S_{m+1, 2n+1}\circ \cdots \circ C_j (\rho_j) \Big \rVert_1\\
=&\Big\lVert C_j (\rho_j) \Big \rVert_1\\
\leq & 2\epsilon, 
\end{align}
where $\rho_j$ is a normalized density matrix since it is $\rho_1$  acted upon some unitary channels. Summing all the error terms then gives us the result from Eq.~(\ref{conteq}) as
\begin{align}
&\norm{(\mc{U}^{[1,2]} \otimes \mc I^{[3]}) \rho^{[1,2,3]} - (\mc{E}^{\mc U, [1]}_m \otimes \mc I^{[2, 3]}  ) \rho^{[1,2,3]}}_1 \notag\\
\leq  & \Big\lVert \tr_{[2]}\Big[ (\mc U^{[1,2]} \otimes \mc I^{[3,4]} ) \circ \mc S_{[2],[4]}   \left(\rho^{[1,2,3]} \otimes \frac{I^{\otimes n-m}}{2^{n-m}}\right)\Big] \notag\\&- (\mc{E}^{\mc U, [1]}_m \otimes \mc I^{[4,3] })(\rho^{[1,4,3]}) \Big\rVert_1 +2(n-m)\epsilon \label{arguef}\\
\leq & 2(n-m)\epsilon.
\end{align}
In line (\ref{arguef}), the first term is zero because $\tr_{[2]}(.)$ is the definition of $(\mc{E}^{\mc U, [1]}_m \otimes \mc I^{[4,3] })\rho^{[1,4,3]}$. Finally, using the definition of $\mc{D}_{\diamond}(.)$ from Eq.~(\ref{defdiamond}) gives us the result in Eq.~(\ref{eqcdec}). 
\end{proof}

\section{Details of Algorithm \ref{alg:algo2}}\label{appc}
In this Appendix, we gather a few technical lemmas to prove the guarantees provided in Algorithm \ref{alg:algo2}. In Appendix \ref{falpbet}, we prove Lemma \ref{PauliLearning} that shows how to measure $f_{\alpha \beta}$ used to construct the Choi state of the quantum channel $\mc{E}^{\bar{\mc{W}}_t}_{m}$ for the qubit implementation. The analogous result for the fermionic implementation is provided in Appendix \ref{fermionicimp}. In Appendix \ref{appdz}, we prove technical lemmas regarding the Choi states corresponding to the reduced quantum channels learned in Algorithm \ref{alg:algo2} for both fermionic and qubit implementations. We also provide the learning guarantee for the qubit implementation (see Appendix \ref{fermionicimp} for the learning guarantee in the fermionic implementation).

\subsection{Learning the matrix $f_{\alpha \beta}$}\label{falpbet}
\PauliLearning*
\begin{proof}
We can compute $\bar \rho_\alpha O_\beta$, where $\bar \rho_{\alpha}=\lvert \bar \psi_\alpha \rangle \langle \bar \psi_\alpha \rvert$, as follows:
\begin{align}
\bar \rho_\alpha \bar O_\beta =&\frac{1}{2} \big[ (\wtp \otimes I)\ketbra{\Phi_d}{\Phi_d}(\bar P_\alpha \wtpd \bar P_\beta \otimes I)\otimes \ketbra{0}{0}_{\mc C} \notag \\&+ (\wtp \bar{P}_\alpha \otimes I)\ketbra{\Phi_d}{\Phi_d}(\bar P_\alpha \wtpd \bar P_\beta \otimes I)\otimes \ketbra{1}{0}_{\mc C} \big].
\end{align}
Taking the trace of the above gives us
\begin{align}
\tr[\bar \rho_\alpha \bar O_\beta] &= \frac{1}{2}\tr[(\wtp \otimes I) \ketbra{\Phi_d}{\Phi_d}(\bar P_\alpha \wtpd \bar P_\beta \otimes I)] \notag\\
&=\frac{1}{2}\tr[(\bar P_\alpha \wtpd \bar P_\beta \wtp \otimes I)\ketbra{\Phi_d}{\Phi_d}]\notag\\
&=\frac{1}{2}\bra{\Phi_d}\bar P_{\alpha}\wtpd \bar P_\beta \wtp \otimes I \ket{\Phi_d}\notag\\
&=\frac{1}{2d}\sum_i \bra{i}\bar P_{\alpha}\wtpd \bar P_\beta \wtp \ket{i}\notag\\
&=\frac{1}{2}\frac{1}{2^n}\tr[\bar P_{\alpha}\wtpd \bar P_\beta \wtp]\notag\\
&=\frac{1}{2}f_{\alpha \beta},
\end{align}
where we use $d=2^n$. We can then write the operator $\bar O_\beta$ as 
\begin{align}
\bar O_\beta = \frac{1}{2}(\bar O_\beta^+ -i\bar O_\beta^-),
\end{align}
where $\bar O_\beta^{\pm}$ are Hermitian operators defined as
\begin{align}
\bar O_\beta^+ &= (\bar P_\beta \otimes I)_{\mc{AB}}\otimes X_{\mc C},\label{step2obsa}\\
\bar O_\beta^- &=(\bar P_\beta \otimes I)_{\mc{AB}} \otimes Y_{\mc C}, \label{step2obsb}
\end{align}
giving us $f_{\alpha \beta}=\tr\small[\bar \rho_\alpha  \bar O_\beta^+ \small] -i\tr \small[ \bar \rho_{\alpha }\bar O_\beta^- \small]$. Since $f_{\alpha \beta}$ is a real matrix, we have $f_{\alpha \beta} = \tr\small[\bar \rho_\alpha \bar O_\beta^+\small]$.  Ref.~\cite{huangshad} on shadow tomography using the local Clifford unitary ensemble shows that estimating the expectation value of tensored single-qubit Paulis acting non-trivially on $k$ qubits in some state, with probabability $\geq 1-\delta$ and error  $\epsilon$, needs $68.3^k \log(2L/\delta)\epsilon^2$ copies of the state. Here $L$ is the number of observables. For additional details, see Eqs.~(S13) and (S50) in Ref.~\cite{huangshad}. Since we want to construct the matrix $\hat f_{\alpha \beta}$ such that $\text{max}_{\alpha \beta}\absa{\hat f_{\alpha \beta}-f_{\alpha \beta}}\leq \epsilon$, we need to estimate observables $\bar O_\beta^{+}$ for each state $\lvert\bar \psi_\alpha \rangle$ with error $\epsilon$ with probability $\geq 1- \delta/4^{m}$ (since the index $\alpha $ has $\leq 4^m$ many values). The number of copies $\bar N_c$ of each state $\lvert\bar \psi_\alpha \rangle$  is
\begin{align}
\bar N_c = \frac{68}{\epsilon^2}3^m \log(2^{2m+1}/\delta),
\end{align}
where we use the facts that there are $ \leq 4^m$ observables for each $\alpha$ and that the observables $\bar O_\beta^{+}$ have Pauli weight $\leq m+1$.
\end{proof}
\subsection{Technical lemmas for Choi states}
\label{appdz}
In this subsection, we state and prove key technical lemmas regarding the Choi state learned in Algorithm \ref{alg:algo2}. The results presented here apply for channels corresponding to the unitaries $W_t$ (for the fermionic implementation) and $\bar W_t$ (for the qubit implementation).

\begin{lemma}[Learning the reduced quantum channel $\mc{E}^{\mc{Q}}_m$ from shadow tomography] \label{noisyChoi} 
\normalfont

We can learn the Choi state of the reduced quantum channel $\mc{E}^{\mc{Q}}_m$, corresponding to the unitary $Q$, such that the distance between the learned Choi state $J(\hat{\mc{E}})$ and the Choi state $J(\mc E)$ corresponding to $\mc{E}^{\mc{Q}}_m$ is bounded as
\begin{align}
\norma{J(\hat{\mc E}) - J(\mc E)} \leq d_0^6\delta, \label{jchoinorm}
\end{align}
where $\delta =   \text{max}_{\alpha \beta} \lvert q_{\alpha, \beta} - \hat q_{\alpha, \beta}\rvert $,  $q_{\alpha, \beta}$ is defined as
\begin{align}
q_{\alpha \beta}:= \frac{1}{2^n}\tr[Q^\dagger (\bar P_\beta\otimes I_{B})Q (\bar P_\alpha\otimes I_{B})],
\end{align}
and $\hat q$ is the learned version of $q$.

\end{lemma}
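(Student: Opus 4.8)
The plan is to exploit the fact that the Choi state depends \emph{linearly} on the measured coefficients, so that the estimation error propagates in a controlled way. First I would observe that $q_{\alpha\beta}$ is exactly $f_{\alpha\beta}$ upon identifying $Q=S$, so the relation $\tfrac{1}{d_0}\tr[\mc{E}(\bar P_\alpha)\bar P_\beta]=q_{\alpha\beta}$ established in Subsec.~\ref{alg2det} lets me rewrite the Choi-state expansion given there as
\begin{align}
J(\mc E)=\sum_{ijkl}\sum_{\alpha\beta}c_{\alpha,ij}\,c_{\beta,lk}\,q_{\alpha\beta}\,\ketbra{k}{l}\otimes\ketbra{i}{j},
\end{align}
where $c_{\alpha,ij}=\tr[\ketbra{i}{j}\bar P_\alpha]/d_0$. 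The learned state $J(\hat{\mc E})$ has the identical form with each $q_{\alpha\beta}$ replaced by its estimate $\hat q_{\alpha\beta}$, since the learning procedure (Lemma \ref{PauliLearning}) only touches these coefficients.

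Next I would subtract the two expansions. Because everything else in the sum is fixed, the difference is linear in the estimation errors $E_{\alpha\beta}:=\hat q_{\alpha\beta}-q_{\alpha\beta}$:
\begin{align}
J(\hat{\mc E})-J(\mc E)=\sum_{ijkl}\sum_{\alpha\beta}c_{\alpha,ij}\,c_{\beta,lk}\,E_{\alpha\beta}\,\ketbra{k}{l}\otimes\ketbra{i}{j}.
\end{align}
I would then bound the spectral norm by the triangle inequality applied termwise, using three elementary facts: (i) $\absa{c_{\alpha,ij}}=\absa{\bra{j}\bar P_\alpha\ket{i}}/d_0\le 1/d_0$, since every matrix entry of a Pauli string has modulus at most $1$; (ii) $\absa{E_{\alpha\beta}}\le\delta$ by the definition $\delta=\max_{\alpha\beta}\absa{q_{\alpha\beta}-\hat q_{\alpha\beta}}$; and (iii) $\norma{\ketbra{k}{l}\otimes\ketbra{i}{j}}=1$, as these are rank-one operators of unit norm.

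Combining these, each summand contributes at most $\delta/d_0^2$. The bookkeeping that remains is counting the index ranges: the tuple $(i,j,k,l)$ runs over $d_0^4$ values, while the pair $(\alpha,\beta)$ runs over $(4^m)^2=d_0^4$ values, giving $d_0^8$ terms in total. Hence
\begin{align}
\norma{J(\hat{\mc E})-J(\mc E)}\le d_0^8\cdot\frac{\delta}{d_0^2}=d_0^6\delta,
\end{align}
which is Eq.~(\ref{jchoinorm}).

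The argument carries no analytic subtlety; it is essentially pure bookkeeping, and the only place requiring care is tracking the powers of $d_0$ together with the Pauli-string normalization $1/d_0$ hidden in each $c_{\alpha,ij}$. I note that the stated $d_0^6\delta$ is deliberately loose: recognizing that $\sum_{ij}c_{\alpha,ij}\ketbra{i}{j}=\bar P_\alpha^{T}/d_0$ collapses the computational-basis sums and would instead give the sharper $d_0^2\delta$. Since $d_0=2^m$ is a constant in our setting, the crude bound is entirely sufficient, so the main ``obstacle'' is simply to present the termwise estimate cleanly rather than to extract the optimal constant.
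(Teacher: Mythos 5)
Your proposal is correct and follows essentially the same route as the paper's proof: the same Pauli-basis expansion of the Choi state, the same termwise triangle inequality with $\absa{c_{\alpha,ij}}\le 1/d_0$ and unit-norm rank-one operators, and the same index counting $d_0^4\cdot d_0^4\cdot\delta/d_0^2=d_0^6\delta$. The only difference is that the paper derives the relation $d_0\,q_{\alpha\beta}=\tr[\mc{E}^{\mc{Q}}_m(\bar P_\alpha)\bar P_\beta]$ explicitly within the lemma via a short partial-trace computation (since the main text defers that identity to this lemma, you should not cite it from there on pain of circularity), while your closing observation that resumming the computational-basis indices yields the sharper bound $d_0^2\delta$ is a valid refinement the paper does not make.
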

\begin{proof}
 
We have that
\begin{align}
&d_0 q_{\alpha \beta}\\
=&\frac{1}{2^{n-m}}\tr[Q^\dagger  (\bar P_\beta \otimes I_{B})Q (\bar P_\alpha\otimes I_{B})] \notag \\
=&\frac{1}{2^{n-m}}\tr[Q (\bar P_\alpha \otimes I_{B})Q^\dagger (\bar P_\beta \otimes I_{B})] \notag \\
=&\frac{1}{2^{n-m}}\tr(\tr_{\geq m+1}[Q (\bar P_\alpha\otimes I_{B}) Q^\dagger (\bar P_\beta \otimes I_{B})]) \notag \\
=&\frac{1}{2^{n-m}}\tr(\tr_{\geq m+1}[Q (\bar P_\alpha \otimes I_{B})Q^\dagger]\bar P_{\beta}) \notag \\
=&\tr(\mc{E}^{\mc{Q}}_m(\bar P_\alpha)\bar P_\beta),
\end{align}
giving us the result
\begin{align}
2^m q_{\alpha \beta}=\tr(\mc{E}^{\mc{Q } }_m(\bar P_\alpha)\bar P_\beta).
\end{align}
For any channel $\mc{E}$ on $m$ modes (qubits), we can write the Choi-Jamiolkowski state $J(\mc{E})$ as follows:
\begin{align}
J(\mc E)=\frac{1}{d_0}\sum_{ij}\mc{E}(\ketbra{i}{j})\otimes \ketbra{i}{j}, \label{choi}
\end{align}
where $\ket{i}$ is a computational basis state on $m$ modes (qubits), and $d_0=2^m$. We can expand $J(\mc{E})$ as follows: 
\begin{align}
J(\mc{E})&=\frac{1}{d_0}\sum_{ijkl}\tr[\mc{E}(\ketbra{i}{j})\ketbra{l}{k}]\ketbra{k}{l}\otimes \ketbra{i}{j}\\
&=\frac{1}{d_0}\sum_{ijkl}\sum_{\alpha \beta} c_{\alpha, ij}c_{\beta, lk}\tr[\mc{E}(\bar P_\alpha)\bar P_\beta] \ketbra{k}{l}\otimes \ketbra{i}{j}\\
&=\sum_{ijkl}\sum_{\alpha \beta} c_{\alpha, ij}c_{\beta, lk}q_{\alpha \beta}\ketbra{k}{l}\otimes \ketbra{i}{j}. \label{Je}
\end{align}
We can then bound $\norma{J(\hat {\mc E}) - J(\mc E)}$, where $J(\hat{\mc{E}})$ corresponds to the Choi state constructed using $\hat q_{\alpha \beta}$, as follows: 
\begin{align}
&\norma{J(\hat{\mc E}) - J(\mc E)}  \notag \\  \leq &\sum_{ijkl}\sum_{\alpha \beta}\absa{c_{\alpha, ij}c_{\beta, lk}} \absa{\hat q_{\alpha \beta}-q_{\alpha \beta}}\norma{\ketbra{k}{l}\otimes \ketbra{i}{j}}.\label{Jspeca}
\end{align}
We used the facts that $\norma{\ketbra{k}{l}\otimes \ketbra{i}{j}} \leq 1$ and $c_{\alpha, ij} \leq 1/2^m$ from $c_{\alpha, ij} =\tr[\ketbra{i}{j} \bar P_\alpha]/2^m = \bra{j}\bar P_\alpha \ket{i}/2^m $.  Equation~(\ref{Jspeca}) then becomes
\begin{align}
\norma{J(\hat{\mc E}) - J(\mc E)} &\leq \frac{\delta}{2^{2m}}\sum_{ijkl}\sum_{\alpha \beta}1 \notag\\
&=2^{6m}\delta, \label{specboundjje}
\end{align}
where we use $\sum_{i}1=2^m$ and $\sum_\alpha 1 \leq 4^m$. Taking  $\delta = \text{max}_{\alpha, \beta}\absa{\hat q_{\alpha \beta}-q_{\alpha \beta}}$ then gives us the result in Eq.~(\ref{jchoinorm}). We also get the inequality
\begin{align}
\norma{J(\hat{\mc{E}}) - J(\mc{E})}_1 \leq d_0^8 \delta\label{troneboundjje}
\end{align}
from using the inequality  $\norma{X}_1 \leq \text{rank}(X)\norma{X}$ (see Lemma 11 in Ref.~\cite{theoryQcert} for details).
\end{proof}
Once we have the Choi state $J(\mc{E})$, we project it using the steps outlined in Subsec.~{\ref{alg2det}} and show that the projected Choi state $J_p$ is close to the Choi state $J(\mc{E})$ as follows.
\begin{lemma}[CPTP-projecting a Choi state]\label{regularizeChoi}
\normalfont
Let $J(\mc{E})$ be the Choi state corresponding to some $m$-mode (qubit) channel $\mc{E}$, and let $J(\hat{\mc{E}})$ be the learned version of the Choi state such that $\norma{J(\hat{\mc{E}})-J(\mc{E})}_2^2 \leq \epsilon_1^2$.
The projected Choi state $J_p$ satisfies the bound
\begin{align}
\norma{J( \mc{E})-J_p}_1 \leq \epsilon_l, \label{lemmaregchoi}
\end{align}
where $\epsilon_l = C_0\epsilon_1$ (with $C_0$ being a constant).
\end{lemma}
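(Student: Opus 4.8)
The plan is to propagate the error through the three stages of the projection scheme of Ref.~\cite{PLS}, using the crucial fact that the exact Choi state $J(\mc{E})$ already obeys both CPTP constraints in Eqs.~(\ref{cptp1}) and (\ref{cptp2}) and is therefore left invariant by the first two projections. First I would note that $J(\mc{E})\geq 0$ places it in the positive-semidefinite cone onto which $J(\hat{\mc{E}})\mapsto J_1$ projects, while $\tr_A[J(\mc{E})]=I/d_0$ places it in the affine trace-preserving subspace $\{X : \tr_A X = I/d_0\}$ onto which $J_1\mapsto J_2$ projects. Both are orthogonal projections onto closed convex sets in the Hilbert--Schmidt geometry, hence non-expansive in $\norma{\cdot}_2$, and both fix $J(\mc{E})$; composing them gives
\begin{align*}
\norma{J_2-J(\mc{E})}_2\leq\norma{J_1-J(\mc{E})}_2\leq\norma{J(\hat{\mc{E}})-J(\mc{E})}_2\leq\epsilon_1.
\end{align*}

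Next I would control the positivity-restoring mixing step. Using $\norma{J_2-J(\mc{E})}\leq\norma{J_2-J(\mc{E})}_2\leq\epsilon_1$ together with $J(\mc{E})\geq 0$, Weyl's theorem (Theorem~\ref{weylth}) gives $\lambda_{\min}(J_2)\geq\lambda_{\min}(J(\mc{E}))-\epsilon_1\geq-\epsilon_1$. In the nontrivial case $\lambda_{\min}<0$, solving $(1-p)\lambda_{\min}+p/d_0^2=0$ yields
\begin{align*}
p=\frac{\absa{\lambda_{\min}}}{1/d_0^2+\absa{\lambda_{\min}}}\leq d_0^2\,\absa{\lambda_{\min}}\leq d_0^2\epsilon_1.
\end{align*}
One checks that $I/d_0^2$ is exactly the Choi state of the completely depolarizing channel and is therefore CPTP; consequently $J_p=(1-p)J_2+(p/d_0^2)I$ remains trace-preserving, and by the choice of $p$ it is positive semidefinite, i.e.\ a genuine Choi state.

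Finally I would assemble the trace-norm error with the triangle inequality $\norma{J(\mc{E})-J_p}_1\leq\norma{J(\mc{E})-J_2}_1+\norma{J_2-J_p}_1$. The first term is handled by converting Frobenius to trace norm through $\norma{X}_1\leq\mathrm{rank}(X)\,\norma{X}\leq d_0^2\,\norma{X}_2$ (using $\norma{X}\leq\norma{X}_2$), which gives $\norma{J(\mc{E})-J_2}_1\leq d_0^2\epsilon_1$. For the second term I would use $J_p-J_2=p\,(I/d_0^2-J_2)$ together with $\tr[J_2]=\tr[I/d_0^2]=1$ and $\norma{J_2}_1\leq1+\norma{J_2-J(\mc{E})}_1\leq1+d_0^2\epsilon_1$ to bound $\norma{I/d_0^2-J_2}_1\leq2+d_0^2\epsilon_1$, so that
\begin{align*}
\norma{J_2-J_p}_1=p\,\norma{\tfrac{I}{d_0^2}-J_2}_1\leq d_0^2\epsilon_1\,(2+d_0^2\epsilon_1).
\end{align*}
Absorbing the quadratic term via $\epsilon_1\leq1$ then yields $\norma{J(\mc{E})-J_p}_1\leq C_0\epsilon_1$ with $C_0$ a polynomial in $d_0$, consistent with the constant $C_r=3d_0^4+d_0^2$ quoted in Subsec.~\ref{alg2det}.

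I expect the main obstacle to be the mixing step rather than the non-expansiveness argument, which is standard. The delicate point is that the trace-preserving projection $J_2$ can leave the positive-semidefinite cone, so one must simultaneously restore positivity and preserve the trace condition while showing that the mixing weight $p$ shrinks \emph{linearly} in $\epsilon_1$; this is exactly what the Weyl bound on $\lambda_{\min}(J_2)$ supplies, and it is what lets the Frobenius-to-trace-norm conversions close with a dimension-dependent but $\epsilon_1$-independent constant.
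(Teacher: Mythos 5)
Your proposal is correct and follows essentially the same route as the paper's proof: sequential CP- then TP-projection controlled by non-expansiveness onto convex sets (the paper's Eqs.~(\ref{firstprojprop}) and (\ref{secprojprop})), a Weyl-theorem bound $\lambda_{\min}(J_2)\geq-\epsilon_1$ giving the linear mixing weight $p\leq d_0^2\epsilon_1$, and a final triangle inequality with rank-based norm conversion. The only (cosmetic) difference is that you bound $\norma{J_2-J_p}_1$ directly via $\norma{J_2}_1\leq 1+d_0^2\epsilon_1$, whereas the paper bounds $\norma{J_3-J_2}$ in spectral norm via $\norma{J_2}\leq 1+\epsilon_1$ and then converts by rank, yielding its constant $3d_0^4+d_0^2$; your variant in fact gives a slightly smaller polynomial in $d_0$, which is immaterial since $d_0=2^m$ is constant.
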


\begin{proof}
The projection, with respect to the Frobenius norm (defined as $\norma{A}_2=\sqrt{\tr[A^\dagger A]}$), to a trace-preserving map is defined as
\begin{align}
\text{Proj}_{\text{TP}}[X]=\text{argmin}_{X^\prime}\norma{X-X^\prime}_2\\
\text{s.t.~}\tr_A[X^\prime]=\frac{\mathbb{1}}{d_0}.
\end{align}
The unique solution satisfies the inequality
\begin{align}
\norma{\text{Proj}_{\text{TP}}[X] - Y}_2^2 \leq \norma{X-Y}_2^2, \label{firstprojprop}
\end{align}
where $Y$ corresponds to a trace preserving map, and $X$ is an arbitrary matrix.
We note that this unique solution has an exact analytical form given in Proposition 11 of Ref.~\cite{PLS}, which means we can find the projection by computing this expression with a classical computer.
The projection, with respect to the Frobenius norm, to a completely positive map is defined as 
\begin{align}
\text{Proj}_{\text{CP}}[X]=\text{argmin}_{X^\prime}\norma{X-X^\prime}_2^2,\\
\text{s.t.~}X^\prime \geq 0. 
\end{align}
The unique solution satisfies the following inequality: 
\begin{align}
\norma{\text{Proj}_{\text{CP}}[X] - Y}_2^2 \leq \norma{X-Y}_2^2,\label{secprojprop}
\end{align}
where $Y \geq 0$ and $X$ is arbitary. This unique solution has an exact analytical expression which can be found in Proposition 12 of Ref.~\cite{PLS}.

Let us define $J_1:=\text{Proj}_{\text{CP}}[J(\hat{\mc{E}})]$ and $J_2:=\text{Proj}_{\text{TP}}[J_1]$.
Let $\lambda_i$ be the eigenvalues of $J_2$, and let $\lambda_{\text{min}}$ be the minimum eigenvalue. First note that 
\begin{align}
\norma{J_2 - J(\mc{E})}_2^2 &= \norma{\text{Proj}_{\text{TP}}[J_1]-J(\mc{E})}_2^2\\
&\leq \norma{J_1-J(\mc{E})}_2^2 \label{firstmanip}\\
&=\norma{ \text{Proj}_{\text{CP}}[J(\hat{\mc{E}})] - J(\mc{E})}_2^2\\
&\leq \norma{J(\hat{\mc{E}}) - J(\mc{E})}_2^2 \label{secmanip2}\\
&\leq \epsilon_1^2, \label{j2jbound}
\end{align}
where we use Eq.~(\ref{firstprojprop}) in Eq.~(\ref{firstmanip}) and use Eq.~(\ref{secprojprop}) in Eq.~(\ref{secmanip2}).
In the case $\lambda_{\text{min}}\geq 0$, we choose $J_p=J_2$ as the projected Choi state that satisfies $\norma{J_2-J(\mc{E})}_1 \leq d_0^2 \epsilon_1$. In the case $\lambda_{\text{min}}<0$, we set $J_p=J_3$, where $J_3$ is defined as follows:
\begin{align}
J_3=(1-p)J_2 + \frac{p}{d_0^2}\mathbb{1}\otimes \mathbb{1} \label{J3def},\\
(1-p)\lambda_{\text{min}}+\frac{p}{d_0^2}=0 \label{lamcon}.
\end{align}
From the condition in Eq.~(\ref{lamcon}), we have $J_3 \geq 0$ (resulting in complete positivity of the corresponding channel). Taking the partial trace of $J_3$ over the first subsystem, which we denote as system $A$, gives us $\tr_A[J_3]=\mathbb{1}/d_0$ (we use the fact that $\tr_A[J_2]=\mathbb{1}/d_0$). Before we bound $\norma{J_3-J(\mc{E})}_1$, we gather here a few facts. 
Since $J(\mc{E})$ corresponds to a CPTP state, it has eigenvalues between 0 and 1. Using Weyl's perturbation theorem in Theorem \ref{weylth} and the fact that $\norma{J_2 - J(\mc{E})} \leq \norma{J_2 - J(\mc{E})}_2 \leq \epsilon_1$ gives us $\norma{J_2}\leq 1+\epsilon_1$ and $\lambda_{\text{min}} \geq -\epsilon_1$. Using these facts with Eq.~(\ref{lamcon}) gives us
\begin{align}
p=\frac{-\lambda_{\text{min}}}{1/d_0^2-\lambda_{\text{min}}} \leq d_0^2 \epsilon_1,\\
\norma{J_2} \leq 1+\epsilon_1.
\end{align}
Using the above results and the definition of $J_3$ from Eq.~(\ref{J3def}) gives us
\begin{align}
\norma{J_3 - J_2} &\leq p \norma{J_2} + \frac{p}{d_0^2}\\
&\leq p(1+\epsilon_1)+ \frac{p}{d_0^2}\\
&\leq 3d_0^2 \epsilon_1, \label{j3j2}
\end{align}
where we assume $\epsilon_1 <1$. We can then finally bound $\norma{J_3-J(\mc{E})}_1$ as follows: 
\begin{align}
\norma{J_3 -J(\mc{E})}_1 &\leq \norma{J_3-J_2}_1 + \norma{J_2-J(\mc{E})}_1 \notag \\
&\leq d_0^2 \norma{J_3 - J_2}+d_0^2 \norma{J_2-J}\notag \\
&\leq d_0^2 \norma{J_3 - J_2}+d_0^2 \norma{J_2-J}_2\notag\\
& \leq (3d_0^4 + d_0^2)\epsilon_1 \notag\\
&=C_0 \epsilon_1 \notag\\
&=\epsilon_l,
\end{align}
where we use the triangle inequality, Eqs.~(\ref{j2jbound}) and (\ref{j3j2}), and the property $\norma{A}_1 \leq \text{rank}(A)\norma{A}$.
\end{proof}
Since the projected Choi state $J_p$ is close to the Choi state $J(\mc{E})$, we can prove the following distance bound between the channel $\mc{E}^{\mc{Q}}_m$ (corresponding to the Choi state $J(\mc{E})$) and the channel $\mc{E}^{ \mc{Q}}_{m,\text{proj}} $ (corresponding to the Choi state $J_p$).
\begin{corollary}{(\normalfont Distance between the learned channel and the projected channel)} \label{distanceLearned}
\normalfont
We can obtain the following bound on the diamond distance between the channel $\mc{E}^{ {\mc{Q}}}_m$ and the channel $\mc{E}^{ {\mc{Q}}}_{m,\text{proj}} $: 
\begin{align}
\mc{D}_{\diamond}( \mc{E}^{ {\mc{Q}}}_m, \mc{E}^{ {\mc{Q}}}_{m,\text{proj}} ) \leq C_3\epsilon_2, \label{edem}
\end{align}
where $C_3 =   d_0^{11}(3d_0^2+1)/2$, $\epsilon_2 = \text{max}_{\alpha, \beta}\absa{\hat q_{\alpha \beta} - q_{\alpha \beta}}$, $d_0=2^m$,  and $m=\kappa t/2$. Here $\mc{E}^{ {\mc{Q}}}_{m,\text{proj}} $ is obtained by projecting the learned Choi state of the channel $\mc{E}^{ {\mc{Q}}}_m$ from Lemma \ref{regularizeChoi}.
\end{corollary}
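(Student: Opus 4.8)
The plan is to treat this as a short corollary that simply chains the two Choi-state estimates already established in Lemmas~\ref{noisyChoi} and~\ref{regularizeChoi} and then converts the resulting trace-norm bound on the Choi states into a diamond-distance bound on the channels. By construction, $\mc{E}^{\mc{Q}}_m$ is the channel with normalized Choi state $J(\mc E)$, while $\mc{E}^{\mc{Q}}_{m,\text{proj}}$ is the CPTP map whose Choi state is the projected state $J_p$ produced by the regularization procedure of Lemma~\ref{regularizeChoi}. Hence it suffices to bound $\norma{J(\mc E)-J_p}_1$ in terms of $\epsilon_2=\max_{\alpha,\beta}\absa{\hat q_{\alpha\beta}-q_{\alpha\beta}}$ and then invoke the standard inequality relating the diamond norm of a Hermiticity-preserving map to the trace norm of its normalized Choi operator.

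Concretely, I would first apply the trace-norm estimate from Lemma~\ref{noisyChoi}, namely $\norma{J(\hat{\mc E})-J(\mc E)}_1\le d_0^8\epsilon_2$ [Eq.~(\ref{troneboundjje})], identifying the $\delta$ there with $\epsilon_2$. Since $\norma{\cdot}_2\le\norma{\cdot}_1$ for any operator, this immediately yields $\norma{J(\hat{\mc E})-J(\mc E)}_2\le d_0^8\epsilon_2$, so Lemma~\ref{regularizeChoi} applies with $\epsilon_1=d_0^8\epsilon_2$ and gives
\begin{align}
\norma{J(\mc E)-J_p}_1\le C_0\epsilon_1=d_0^2(3d_0^2+1)\,d_0^8\epsilon_2=d_0^{10}(3d_0^2+1)\epsilon_2,
\end{align}
using $C_0=3d_0^4+d_0^2$. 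Finally, for the trace-one Choi convention of Eq.~(\ref{choi1}), any two channels on a $d_0$-dimensional input obey $\mc{D}_{\diamond}(\mc E_1,\mc E_2)\le\tfrac{d_0}{2}\norma{J(\mc E_1)-J(\mc E_2)}_1$; applying this with $\mc E_1=\mc{E}^{\mc{Q}}_m$ and $\mc E_2=\mc{E}^{\mc{Q}}_{m,\text{proj}}$ gives
\begin{align}
\mc{D}_{\diamond}(\mc{E}^{\mc{Q}}_m,\mc{E}^{\mc{Q}}_{m,\text{proj}})\le\frac{d_0}{2}\,d_0^{10}(3d_0^2+1)\epsilon_2=\frac{d_0^{11}(3d_0^2+1)}{2}\epsilon_2=C_3\epsilon_2,
\end{align}
which is exactly the claimed bound.

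The bookkeeping of the powers of $d_0$ is routine, so the only substantive step is the final Choi-to-diamond conversion, and the main point to get right there is the constant: with the trace-one normalization of $J$ in Eq.~(\ref{choi1}) the correct factor is $d_0/2$ (not, say, $d_0^2/2$), and it is precisely this factor that makes the exponents line up to $C_3=d_0^{11}(3d_0^2+1)/2$. I would cite the standard inequality $\norma{\Phi}_{\diamond}\le d_0\norma{J(\Phi)}_1$ for a Hermiticity-preserving $\Phi$, applied to the difference $\Phi=\mc{E}^{\mc{Q}}_m-\mc{E}^{\mc{Q}}_{m,\text{proj}}$; note this bound does not require $\Phi$ itself to be completely positive, so it legitimately applies to a difference of maps, while Lemma~\ref{regularizeChoi} separately guarantees that $J_p$ is a genuine CPTP Choi state. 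One should also be careful that $d_0$ denotes the \emph{input} dimension $2^m$ of the reduced channel rather than $2^n$; since $m=\kappa t/2$ is constant, $d_0$ and hence $C_3$ are constants, which is what ultimately makes the overall algorithm efficient.
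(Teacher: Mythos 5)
Your proof is correct and takes essentially the same route as the paper: chain Lemma~\ref{noisyChoi} into Lemma~\ref{regularizeChoi} with $\epsilon_1 = d_0^8\epsilon_2$, then finish with the Choi-to-diamond conversion $\mc{D}_{\diamond}(\mc E_1, \mc E_2) \leq \frac{d_0}{2}\norma{J(\mc{E}_1)-J(\mc{E}_2)}_1$ (the paper cites Lemma 26 of Ref.~\cite{Unification} for this, with exactly the $d_0/2$ factor you emphasize). The only cosmetic difference is that you feed the Frobenius hypothesis of Lemma~\ref{regularizeChoi} via the trace-norm bound of Eq.~(\ref{troneboundjje}) together with $\norma{\cdot}_2 \leq \norma{\cdot}_1$, whereas the paper uses the spectral-norm bound of Eq.~(\ref{jchoinorm}) together with $\norma{\cdot}_2 \leq d_0^2 \norma{\cdot}$; both yield the same $\epsilon_1 = d_0^8 \epsilon_2$ and hence the same constant $C_3 = d_0^{11}(3d_0^2+1)/2$.
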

\begin{proof}
This result follows from the following computation: 
\begin{align}
\mc{D}_{\diamond}( \mc{E}^{ {\mc{Q}}}_m, \mc{E}^{ {\mc{Q}}}_{m,\text{proj}}) &\leq \frac{d_0}{2}\norma{J(\mc{E}) - J_p}_1\label{choidiam}\\ 
&\leq \frac{d_0}{2}C_0 \epsilon_1. \label{feqex}
\end{align}
In Eq.~(\ref{choidiam}), we use the following bound from Lemma 26 in Ref.~\cite{Unification}: 
\begin{align}
\mc{D}_{\diamond}(\mc E_1, \mc E_2) \leq \frac{d_0}{2}\norma{J(\mc{E}_1)-J(\mc{E}_2)}_1.
\label{diamondChoi}
\end{align}
In Eq.~(\ref{feqex}), we use Eq.~(\ref{lemmaregchoi}) from Lemma \ref{regularizeChoi}, where $\norma{J(\mc{E})-J_p}_2 \leq \epsilon_1$ and $\norma{J(\mc{E})-J_p}_1 \leq C_0\epsilon_1$. Now note that $\norma{J(\mc{E})-J_p}_2 \leq d_0^2 \norma{J(\mc{E})-J_p} \leq  d_0^8 \epsilon_2$, where we use Eq.~(\ref{jchoinorm}) and $\epsilon_2 = \text{max}_{\alpha, \beta}\absa{\hat q_{\alpha \beta} - q_{\alpha \beta}}$. Therefore, we can choose $\epsilon_1 = d_0^8 \epsilon_2$ to get
\begin{align}
\mc{D}_{\diamond}( \mc{E}^{ {\mc{Q}}}_m, \mc{E}^{ {\mc{Q}}}_{m,\text{proj}}) &\leq \frac{d_0^9}{2}C_0 \epsilon_2
\end{align}
and the result in Eq.~(\ref{edem}).
\end{proof}

We now proceed to provide the learning guarantee for our algorithm in the qubit implementation. The analogous result for the fermionic implementation is provided in Appendix \ref{fermionicimp}.
\begin{lemma}[Learning algorithm guarantee for the qubit implementation] \label{guaranteeA}
\normalfont
Let $U_t$ be the unknown unitary defined in Eq.~(\ref{promiseUt}) with a qubit implementation. There is a learning algorithm that learns the unknown unitary as the $m$-qubit channel $\mc{E}^{\bar{\mc{W}_t}}_{m,\text{proj}}$ satisfying the distance bound  
\begin{align}
\mc{D}_{\diamond}(\bar{\mc{W}}_t,\mc{E}_{m, \text{proj}}^{\bar{\mc{W}_t}} \otimes \mc I_B ) \leq T_2(n) \epsilon
\end{align}
with probability $\geq 1-\delta$, using $O(\text{poly}(n, \epsilon^{-1}, \log\small \delta^{-1}))$ accesses to $U_t$ and $O(\text{poly}(n, \epsilon^{-1}, \log\small \delta^{-1}))$ classical processing time. Here $T_2(n)=\text{poly}(n)$.
\end{lemma}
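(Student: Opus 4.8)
The plan is to assemble the chain of technical lemmas already established and bound the target quantity by inserting the ideal reduced channel as an intermediate object in a triangle inequality. Concretely, I would write
\begin{align}
\mc{D}_{\diamond}(\bar{\mc{W}}_t,\mc{E}_{m,\text{proj}}^{\bar{\mc{W}}_t}\otimes\mc I_B)
&\leq \mc{D}_{\diamond}\big(\bar{\mc{W}}_t,\mc{E}_m^{\bar{\mc{W}}_t}\otimes\mc I_B\big) \notag\\
&\quad+\mc{D}_{\diamond}\big(\mc{E}_m^{\bar{\mc{W}}_t}\otimes\mc I_B,\mc{E}_{m,\text{proj}}^{\bar{\mc{W}}_t}\otimes\mc I_B\big).
\end{align}
The first term measures how well the $n$-qubit unitary channel is approximated by a reduced $m$-qubit channel acting on register $A$, while the second term is the error incurred in actually learning that reduced channel. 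I would bound the two terms separately and choose the two tomography precisions so that each contributes $O(\text{poly}(n)\,\epsilon)$.

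For the first term I would run Algorithm \ref{alg:algo1} so that $\norma{E^{(1)}}\leq\norma{E^{(1)}}_2\leq\epsilon^2$, with failure probability $\delta/2$ (Lemma \ref{shadowestf}). Feeding this into Lemma \ref{constructW} produces $G_a,G_b$ with $\norma{[W_t,\gamma_i]}\leq\epsilon_0\leq T_1(n)\,\norma{E^{(1)}}^{1/2}\leq T_1(n)\,\epsilon$ for $i>M$. Definition \ref{defbarw} then yields $\bar W_t$, and Lemma \ref{introwtp} [Eq.~(\ref{paulidec})] gives the Pauli-decoupling bound $\tfrac12\sum_{P}\norma{[\bar W_t,P_i]}\leq\epsilon_P=(2n+3)\epsilon_0$ on register $B$. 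Applying Lemma \ref{localglobal} with $U=\bar W_t$ and the role of $\epsilon$ played by $\epsilon_P$ then bounds the first term by $n\epsilon_P=n(2n+3)\epsilon_0\leq n(2n+3)T_1(n)\,\epsilon$.

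For the second term I would first use the fact that the diamond distance is unchanged by tensoring both channels with the identity $\mc I_B$, so this term equals $\mc{D}_{\diamond}(\mc{E}_m^{\bar{\mc{W}}_t},\mc{E}_{m,\text{proj}}^{\bar{\mc{W}}_t})$. Running Algorithm \ref{alg:algo2} at precision $\epsilon_2=\max_{\alpha\beta}\absa{\hat f_{\alpha\beta}-f_{\alpha\beta}}\leq\epsilon$ with failure probability $\delta/2$ (Lemma \ref{PauliLearning}) lets us construct the learned Choi state $J(\hat{\mc E})$; CPTP-projecting it as in Subsec.~\ref{alg2det} and invoking Corollary \ref{distanceLearned} with $Q=\bar W_t$ bounds this term by $C_3\epsilon_2\leq C_3\epsilon$, where $C_3=d_0^{11}(3d_0^2+1)/2$ is a \emph{constant} because $d_0=2^m=2^{\kappa t/2}$ is constant.

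Combining the two estimates gives $\mc{D}_{\diamond}(\bar{\mc{W}}_t,\mc{E}_{m,\text{proj}}^{\bar{\mc{W}}_t}\otimes\mc I_B)\leq[n(2n+3)T_1(n)+C_3]\,\epsilon$, so I would set $T_2(n):=n(2n+3)T_1(n)+C_3=\text{poly}(n)$, and a union bound over the two tomography stages keeps the total failure probability at most $\delta$. For the resource count, Algorithm \ref{alg:algo1} uses $2n\,N_c$ accesses with $N_c=\text{poly}(n,\log\delta^{-1})/\epsilon^4$ [running Eq.~(\ref{Mcomp}) at precision $\epsilon^2$], while Algorithm \ref{alg:algo2} uses $4^m\bar N_c=O(\epsilon^{-2}\log\delta^{-1})$ accesses, both $\text{poly}(n,\epsilon^{-1},\log\delta^{-1})$; the classical work (an SVD of the $\text{poly}(n)$-sized $\hat c^{(1)}$ and the projection of a constant-sized Choi state) is likewise polynomial. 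The main obstacle I anticipate is the square-root error propagation in Lemma \ref{constructW}: since $\epsilon_0$ scales as $\norma{E^{(1)}}^{1/2}$, the first stage must be run at the tighter precision $\epsilon^2$, and I would have to check carefully that this only degrades the sample complexity from $\epsilon^{-2}$ to $\epsilon^{-4}$ rather than introducing any super-polynomial blow-up, and that the constant (in $n$) prefactors $C_3$, $d_0^6$ coming from the constant-dimensional Choi-state manipulations do not hide any hidden $n$-dependence.
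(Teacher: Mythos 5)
Your proposal is correct and follows essentially the same route as the paper's proof: the identical triangle-inequality decomposition through $\mc{E}_m^{\bar{\mc{W}}_t}\otimes\mc I_B$, the same lemma chain (Lemma~\ref{constructW} at precision $\epsilon^2$ to get $\epsilon_0\leq T_1(n)\epsilon$, then Eq.~(\ref{paulidec}) and Lemma~\ref{localglobal} for the first term, and Corollary~\ref{distanceLearned} for the second), the same parameter split $(\epsilon^2,\delta/2)$ and $(\epsilon,\delta/2)$ with a union bound, and even the same final polynomial $T_2(n)=n(2n+3)T_1(n)+C_3$. Your explicit remark that the square-root error propagation only degrades the first-stage sample complexity from $\epsilon^{-2}$ to $\epsilon^{-4}$, and that $C_3$ and $d_0^6$ are constants since $d_0=2^{\kappa t/2}$, is a correct and slightly more careful accounting than the paper's statement that the total access count is $O(\text{poly}(n,\epsilon^{-1},\log\delta^{-1}))$.
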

\begin{proof}
Running Algorithm \ref{alg:algo1} with input parameters $(\epsilon^2, \delta/2)$ and some postprocessing gives the reduced channel $\mc{E}_m^{\bar{\mc{W}_t}}$ with the bound $\mc{D}_{\diamond}( \bar{\mc{W}}_t, \mc{E}_m^{\bar{\mc{W}_t}} \otimes \mc I_B ) \leq n(2n+3)T_1(n) \epsilon$, where $\bar{\mc{W}}_t$ is the unitary channel corresponding to the unitary $\bar W_t$, with probability $\geq 1-\delta/2$. This follows from the following computation:
\begin{align}
\mc{D}_{\diamond}( \bar{\mc{W}}_t, \mc{E}_m^{\bar{\mc{W}_t}} \otimes \mc I_B ) 
& \leq n(2n+3)\epsilon_0\label{grnt2}\\
&\leq n(2n+3)T_1(n)\epsilon,\label{grnt3}
\end{align}
where line (\ref{grnt2}) follows from Eq.~(\ref{eqcdec}) in Lemma \ref{localglobal} and Eq.~(\ref{paulidec}). To obtain Eq.~(\ref{grnt3}), we use Eq.~(\ref{epsilon0}) from Lemma \ref{constructW}. Running Algorithm \ref{alg:algo2} with input parameters $(\epsilon, \delta/2)$ to learn the reduced quantum channel $\mc{E}^{\bar{\mc{W}_t}}_m$ and projecting using our scheme gives us the following bound between the channels $\mc{E}^{\bar{\mc{W}_t}}_m$ and the projected channel $\mc{E}^{\bar{\mc{W}_t}}_{m,\text{proj}}$ from Corollary \ref{distanceLearned}:
\begin{align}
\mc{D}_{\diamond}(\mc{E}^{\bar{\mc{W}_t}}_m, \mc{E}^{\bar{\mc{W}_t}}_{m,\text{proj}}) \leq C_3 \epsilon,
\end{align}
with probability $\geq 1-\delta/2$. Here $C_3$ is a constant defined in Eq.~(\ref{edem}). We can then use the triangle inequality to obtain the channel distance bound between the channel $\bar{\mc{W}_t}$ and the projected version of the learned channel $\mc{E}^{\bar{\mc{W}_t}}_{m,\text{proj}}$ as follows:
\begin{align}
&\mc{D}_{\diamond}(\bar{\mc{W}}_t,\mc{E}_{m, \text{proj}}^{\bar{\mc{W}_t}} \otimes \mc I_B ) \notag\\ \leq &\mc{D}_{\diamond}(\bar{\mc{W}}_t,\mc{E}_{m}^{\bar{\mc{W}_t}} \otimes \mc I_B )+\mc{D}_{\diamond}(\mc{E}_{m}^{\bar{\mc{W}_t}}, \mc{E}_{m, \text{proj}}^{\bar{\mc{W}_t}} )\\
\leq & T_2(n)\epsilon, \label{almostbound}
\end{align}
where $T_2(n)=n(2n+3)T_1(n)+C_3=\text{poly}(n)$. From the union bound, the algorithm succeeds with probability $\geq 1-\delta$. From the $\epsilon$-dependence of the number of states required for Algorithms 1 and 2, the learning algorithm 
uses $N_c + \bar N_c = O(\text{poly}(n,\epsilon, \log \delta^{-1}))$ accesses of the unknown unitary $U_t$ to achieve error $\epsilon$ in Eq.~(\ref{almostbound}), where $N_c$ and $\bar N_c$ are defined in Algorithms 1 and 2, respectively. Moreover, each step of the learning algorithm requires $\text{poly}(n, \epsilon^{-1}, \log \delta^{-1})$ classical processing time.
\end{proof}

\section{Technical lemmas for the fermionic implementation}\label{fermionicimp}
In this section, we prove key technical lemmas for the fermionic implementation. 

We first prove Lemma \ref{lastlemma}, which shows that,  in the fermionic implementation, we can construct $W_t$ such that the condition $[W_t, \gamma_i]\approx 0$ for $i>M$ implies that $W_t$  is Pauli decoupled from modes $i>m$.

\lastlemma*
\begin{proof}
In the case where we are given that $G_{t^\prime}$ in $U_t$ correspond to SO$(2n)$, it follows from Lemmas \ref{Udecomp} and \ref{constructW} that $W_t=G_a^\dagger U_t G_b^\dagger$ obtained from Algorithm \ref{alg:algo1} has the form $G_1 u_t G_2$, where both $G_1$ and $G_2$ correspond to orthogonal matrices in SO$(2n)$ since both $G_a$ and $G_b$ can be chosen to correspond to SO$(2n)$. This means that $W_t$ is a sum of Majorana strings of even weight and satisfies Eq.~(\ref{approxmaj}) in Lemma \ref{constructW}. We note that $W_t$ can be written as 
\begin{align}
W_t = W_t^{\text L} + W_t^{\text{NL}},
\end{align}
where $W_t^{\text{L}}$ is  supported on the first $M$ Majorana operators, and $W_t^{\text{NL}}$ contains Majorana strings $\tilde \gamma_x$ containing at least one Majorana operator $\gamma_i$ with $i > M$. We first introduce the following notation. Let $f_j$ and $\bar f_j$ be functions defined on operators as follows: 
\begin{align}
f_j(X)&=\frac{1}{2}[X, \gamma_j]\gamma_j,\\
\bar f_j(X)&=\frac{1}{2} \{X, \gamma_j \}\gamma_j,
\end{align}
where $X$ is any operator. For any operator $X$, we can always write $X=X_j+\bar X_j$, where $\{X_j, \gamma_j \}=0$, and $[\bar X_j, \gamma_j]=0$. This is because $X$ can be written as a sum of Majorana strings $\tilde \gamma_x$, and each $\tilde \gamma_x$ either commutes or anticommutes with $\gamma_j$ (since $\tilde{\gamma}_x$ and $\gamma_j$ are in the Pauli group). The function $f_j$ then satisfies $f_j(X)=X_j$ and $\bar f_j(X)=\bar X_j$. Additionally, we have the following result: 
\begin{align}
\lVert f_j \bar f_k(X) \rVert &\leq \lVert f_j(X) \rVert, \label{fres2}
\end{align}
for any $X$ and $j \neq k$. This follows from the triangle inequality and the following computation: 
\begin{align}
f_j \circ \bar f_k(X)&=\frac{1}{2}[\bar f_k(X), \gamma_j]\gamma_j \notag\\
&=\frac{1}{4}[X+\gamma_k X \gamma_k, \gamma_j]\gamma_j \notag\\
&=\frac{1}{4}\Big( [X, \gamma_j]\gamma_j + \gamma_k [X, \gamma_j]\gamma_j \gamma_k \Big) \notag\\
&=\frac{1}{2} (f_j(X)+\gamma_k f_j(X)\gamma_k).
\end{align}
We first write $W_t^{\text{NL}}$ as follows:
\begin{align}
W_t^{\text{NL}} = f_{M+1}(W_t^{\text{NL}}) + \sum_{k=M+2}^{2n}f_k   \bar f_{k-1}  \bar f_{k-2}   \cdots   \bar f_{M+1}(W_t^{\text{NL}}), \label{Xdecomp}
\end{align}
where we use the notation $\bar f_{i_1} \circ f_{i_2} \circ\cdots \circ f_{i_n}(X)=:\bar f_{i_1}f_{i_2}\hdots f_{i_n}(X)$. The decomposition in Eq.~(\ref{Xdecomp}) follows from the following computation:
\begin{align}
&W_t^{\text{NL}} \notag\\
=& f_{M+1}(W_t^{\text{NL}}) + \bar f_{M+1}(W_t^{\text{NL}}) \notag\\
=&f_{M+1}(W_t^{\text{NL}})+f_{M+2} \bar f_{M+1}(W_t^{\text{NL}}) + \bar f_{M+2} \bar f_{M+1}(W_t^{\text{NL}}) \notag\\
=& f_{M+1}(W_t^{\text{NL}})+f_{M+2} \bar f_{M+1}(W_t^{\text{NL}}) +\notag
\\& \cdots+ f_{2n} \bar f_{2n-1}  \cdots    \bar f_{M+1}(W_t^{\text{NL}}),
\end{align}
where we use the fact that all terms in $W_t^{\text{NL}}$ contain at least one $\gamma_i$ with $i > M$.
We now proceed to bound $W_t^{\text{NL}}$ as follows: 
\begin{align}
&\norma {W_t^{\text{NL}}} \notag \\ \leq& \norma{f_{M+1}(W_t^{\text{NL}})} + \notag \sum_{k=M+2}^{2n} \norma{f_k   \bar f_{k-1} \bar f_{k-2}  \cdots   \bar f_{M+1}(W_t^{\text{NL}})}\\
\leq& \norma{f_{M+1}(W_t^{\text{NL}})} + \sum_{k=M+2}^{2n}\norma{f_k(W_t^{\text{NL}})} \notag\\
\leq & (2n-M)\max_{k >M} \norma{f_k(W_t^{\text{NL}})}, \notag
\end{align}
where we use the fact that $\norma{f_{i_1} \bar f_{i_2}\cdots  \cdots \bar f_{i_n} (X)} \leq \norma{f_{i_1}(X)} $, which in turn follows from using Eq.~(\ref{fres2}) repeatedly. Now note that we can write $f_k(W_t^{\text{NL}}) = f_k(W_t)$ since $W_t^{\text{L}}$ commutes with $\gamma_k$ for $k>M$ (since $W_t$ contains Majorana strings $\tilde \gamma_x$ with even weight). Then the result $\norma{[W_t, \gamma_k]} \leq \epsilon_0$ from Eq.~(\ref{approxmaj}) gives us $\norma{f_k(W_t)} \leq \epsilon_0/2$ and the following result:
\begin{align}
\norma{W_t^{\text{NL}}} \leq n\epsilon_0.
\end{align}
The above equation then gives us the result in Eq.~(\ref{lastlemmares}) using the fact that, for $i>m$, $\norma{[W_t, P_i]} =\norma{[W_t^{\text{NL}}, P_i]} \leq 2 \norma{W_t^{\text{NL}}} \norma{P_i} \leq 2n\epsilon_0$.
\end{proof}
We now show how to modify the learning algorithm for the fermionic implementation. For Algorithms \ref{alg:algo1} and \ref{alg:algo2}, we modify the states (and the corresponding observables) used in the shadow tomography protocols to ones that can be prepared on a fermionic computer (i.e., states that can be prepared by a parity-preserving quantum circuit). Throughout this section, we will often describe our fermionic unitaries as parity-preserving qubit unitaries, related to the actual fermionic ones we have in mind through the Jordan-Wigner transformation.

\subsection{Fermionic implementation: states and observables for Algorithm \ref{alg:algo1}}

In Lemma \ref{lemcx}, we perform state tomography on some qubit state $\ket{\psi_j}$ (where the index $j$ corresponds to the bitstring $x$ with weight 1 and $x_j=1$) to estimate physical observables $O_k^{+}$ and construct the matrix $c^{(1)}$. For the fermionic implementation, we modify the states and observables as follows. We use the mapping between fermionic and qubit states where any computational basis state on qubits $\ket{z_1 \hdots z_n}$ is identified with the corresponding fermionic state in the occupation basis. We prepare the fermionic state $\ket{\psi_j^\text{f}}$ defined as
\begin{align}
\ket{\psi_j^{\text{f}}}&=(U_t\otimes I) \ket{\phi_j^\text{f}}, \label{psijfdef}\\
\ket{\phi_j^{\text{f}}}&=\frac{1}{\sqrt{2}}U_j^{\text{f}}(\ket{00}-\ket{11})_{\mc{A}_1 \mc{A}_2 }\ket{\Phi_d} \label{parityqubit},
\end{align}
where $\ket{\Phi_d} \propto \sum_{z\in \{0,1 \}^n} \ket{z,z}$, where $\ket{z}$ are the occupation basis states on $n$ modes. $\ket{\phi^{\text{f}}_j}$ is a fermionic state defined on modes $\mc{A}_1, \mc{A}_2, 1, \hdots, 4n$, and $U_j^{\text{f}}$ is a fermionic unitary defined as $U_j^{\text{f}}=(1-a_{\mc{A}_2}^\dagger a_{\mc{A}_2})+a_{\mc{A}_2}^\dagger a_{\mc{A}_2} \gamma_{\mc{A}_1} \gamma_j$, where $j\in \{1, \hdots, 2n\}$. The unitary $U_t$ acts on modes labeled $1, \hdots, n$. We write $U_t \otimes I$ to emphasize the fact that $U_t$ only acts on modes $1, \hdots, n$. The state $\ket{\phi_j^{\text{f}}}$ can be simplified as follows:
\begin{align}
\ket{\phi_j^{\text{f}}}=\frac{1}{\sqrt{2}}(\ket{00}\ket{\Phi_d}+a_{\mc{A}_2}^\dagger \gamma_j \ket{00}\ket{\Phi_d}).
\end{align}
We note here that the state $(\ket{00}-\ket{11})
\ket{\Phi_d}$ can be prepared by a fermionic circuit efficiently using the facts that the qubit version of this state can be prepared by a circuit composed of parity-preserving two-qubit gates as shown in Fig.~\ref{stateprep}, and that any two-qubit parity-preserving gate can be implemented as a series of fermionic gates \cite{bravyi}.

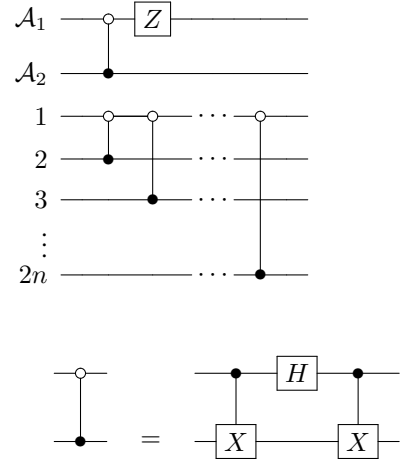
\begin{figure}[t]
\centering
\hspace{0cm}
\Qcircuit @C=0.8em @R=1.3em {
\lstick{\mathcal{A}_1}& \qw  & \ctrlo{1} & \gate{Z} & \qw & \qw &\qw &\qw&\qw &\qw    \\
\lstick{\mathcal{A}_2}& \qw    & \control \qw& \qw &  \qw& \qw & \qw & \qw & \qw &  \qw  \\
\lstick{1}& \qw & \ctrlo{1}  & \ctrlo{2} \qw &  \qw & \cdots &    & \ctrlo{4}&  \qw  & \qw\\
\lstick{2}& \qw & \control \qw & \qw & \qw &  \cdots &    &   \qw & \qw & \qw  \\
\lstick{3}& \qw  & \qw &  \control \qw & \qw  & \cdots & &  \qw & \qw& \qw \\
\lstick{\vdots}&    &   & &    &   &   &  &     &  &   &    &   &    \\
\lstick{2n}& \qw  & \qw & \qw& \qw  &   \cdots &   & \control \qw&  \qw & \qw  \\}\\
\vspace{1cm}
\hspace{0cm}
\Qcircuit @C=0.8em @R=1.3em {
& \ctrlo{1} & \qw & & & \ctrl{1} &\gate{H}  & \ctrl{1} & \qw  \\
& \control \qw &\qw & \push{\rule{.2em}{0em}=\rule{.2em}{0em}} &  &\gate{X}  &\qw & \gate{X} & \qw
}
\caption{The qubit unitary, composed of parity-preserving two-qubit unitaries, needed to prepare the qubit state $(\ket{00}-\ket{11})\ket{\Phi_d}$ in Eq.~(\ref{parityqubit}). This implies that this state, now thought of as a fermionic state in the occupation basis, can be prepared on a fermionic quantum computer \cite{bravyi}.}
\label{stateprep}
\end{figure}

The observables in Eqs.~(\ref{Oplus}) and (\ref{Ominus}) are redefined to 
\begin{align}
O_k^+&=(a_{\mc{A}_2}^\dagger - a_{\mc{A}_2})\gamma_k, \label{oplusferm}\\
O_k^-&=i(a_{\mc{A}_2}^\dagger + a_{\mc{A}_2})\gamma_k,
\end{align}
giving us $O_k=a_{\mc{A}_2}^\dagger \gamma_k=(O_k^+ - iO_k^-)/2$ in place of Eq.~(\ref{okqbit}) with the desired expectation value in Eq.~(\ref{expecs1}). We measure only $O_k^+$ since $O_k\propto (O_k^+ - iO_k^-)$ and $c_{jk}^{(1)}\propto \tr[\rho_j O_k]$ (since $c^{(1)}_{jk}$ is a real matrix).
The shadow tomography step can be implemented on the fermionic platform by first applying a unitary from the fermionic Gaussian ensemble, performing a basis measurement in the occupation basis, and then performing classical post-processing on a classical computer (e.g.~by representing Majorana operators via the Jordan-Wigner encoding). Since the state is defined on a different number of modes, the shadow tomography step in Algorithm \ref{alg:algo1} needs
\begin{align}
N_c^{\text{f}}=\Big(1+\frac{\epsilon}{6n} \Big)\log(8n^2/\delta)\frac{4n^2(4n+3)}{\epsilon^2} \label{expncf}
\end{align}
copies of state $\ket{\psi_j^{\text{f}}}$ (for more details, see discussion surrounding Eq.~(\ref{zharores}) in Lemma \ref{shadowestf}). This step gives us the matrix $c^{(1)}$,  which can be processed, as described in Lemma \ref{constructW}, to give us the unitaries $G_a$, $G_b$, and the unitary $W_t=G_a^\dagger U_t G_b^\dagger$ that satisfies the Majorana decoupling property in Eq.~(\ref{approxmaj}).

\subsection{Fermionic implementation: states and observables for Algorithm \ref{alg:algo2}}

We now modify the states and observables in Algorithm \ref{alg:algo2} used to construct the Choi state $J(\hat{\mc{E}})$ for the reduced quantum channel corresponding to the unitary $W_t$. We define states on the fermionic modes $\mc{A}_1, \mc{A}_2, 1, \hdots, 2n$, where $\mc{A}_1$, $\mc{A}_2$ are ancilla modes. Instead of using the state defined in Eq.~(\ref{psialpha}), we use the state
\begin{align}
\lvert\bar \psi_\alpha^\text{f} \rangle &= (W_t\otimes I)\ket{\bar{\phi}^{\text{f}}_\alpha}, \label{psialphaf}\\
\ket{\bar{\phi}_\alpha^{\text{f}}}&=\frac{1}{\sqrt{2}}\bar U_{\alpha}^{\text{f}p}\big( \ket{00}_{\mc{A}_1 \mc{A}_2} + \ket{11}_{\mc{A}_1 \mc{A}_2})\ket{\Phi_d},
\end{align}
where $\alpha\in\{I,X,Y,Z\}^{\otimes m}$ and $W_t$ acts on modes $1, \hdots, n$. Here $p=0$ when the Pauli observable $\bar{P}_{\alpha}$ is parity-preserving and $p=1$ when it is not. The unitary $\bar U_{\alpha}^{\text{f}p}$ is defined as follows:
\begin{align}
\bar U_{\alpha}^{\text{f}p} = (\ketbra{0}{0}_{\mc{A}_1}\otimes I + \ketbra{1}{1}_{\mc{A}_1}\otimes (X_{\mc{A}_2})^p \bar P_\alpha),
\end{align}
where $\bar P_{\alpha}$ acts on modes $1, \hdots, m$. We note that, due to the Jordan-Wigner transformation,  $X_{\mc{A}_2}$ acts on both modes $\mc{A}_2$ and $\mc{A}_1$.
We also note here that the qubit version of state $(\ket{00}_{\mc{A}_1 \mc{A}_2}+\ket{11}_{\mc{A}_1 \mc{A}_2})\ket{\Phi_d}$ can be prepared using a circuit composed of parity-preserving two-qubit gates (similar to the circuit in Fig.~\ref{stateprep}). Moreover, the unitary $\bar U_{\alpha}^{\text{f}p}$ is a parity-preserving gate, ensuring that the state $\lvert\bar \psi_\alpha^\text{f} \rangle$ can be implemented on a fermionic quantum computer. We can then employ shadow tomography using the local Clifford unitary ensemble to estimate the expectation value of the operator $\bar O_\beta^+$ defined as
\begin{align}
\bar O_\beta^+ &= X_{\mc{A}_1} X_{\mc{A}_2}\otimes \bar P_\beta \label{obsf1}
\end{align}
for $p=0$. 
 For the case where $p=1$, we use the observable $\bar O_\beta^+$ defined as
\begin{align}
\bar O_\beta^+ = X_{\mc{A}_1}Z_{\mc{A}_2}\otimes \bar P_\beta. \label{obsf2}
\end{align}
The number of copies required for shadow tomography using the local Clifford ensemble (for the cases $p=0$ and $p=1$) is given by
\begin{align}
\bar N_c^{\text{f}} = C_1^\prime \log(C_2^\prime/\delta)/\epsilon^2, \label{ncfbar}
\end{align}
where $C_1^\prime = 68(3^{m+2})$  and $C_2 = 2(4^{2m})$.
The choice of observables in Eqs.~(\ref{obsf1}) and (\ref{obsf2}) ensures that $\tr\big[\bar O_\beta^+|\bar \psi_\alpha^{\text{f}}\rangle \langle \bar \psi_\alpha^{\text{f}}|\big] = f_{\alpha \beta}$, where $f_{\alpha \beta}$ is defined in Eq.~(\ref{paulilearneq}) for $S=W_t$, and $\alpha, \beta \in \{I, X, Y, Z \}^{\otimes m}$.

We now show how to implement any unitary $U_c$ from the local Clifford ensemble for the shadow tomography step. First, consider the case where $U_c$ is parity-preserving. In that case, we can implement the gate built from fermionic gates \cite{bravyi}. For the case where $U_c$ doesn't preserve parity, as shown in Ref.~\cite{bravyi}, we can define the following gate:
\begin{align}
\tilde{U}_c = V_p^\dagger(I\otimes U_c)V_p, \label{defuc}
\end{align}
using an ancilla mode labeled 0, where $V_p \ket{z_0, z_1, \cdots, z_N}=\ket{z_p, z_1, \hdots, z_N}$ with $z_p = z_0+z_1+\hdots+z_N$, making $\tilde U_c$ parity-preserving. We can then decompose $\tilde{U}_c$ into a product of two-qubit parity-preserving gates, each of which can be implemented on a fermionic quantum computer using a series of fermionic gates \cite{bravyi}.

By implementing Algorithm \ref{alg:algo2}, we can construct the projected Choi state $J_p$ of the reduced channel corresponding to ${\mc{W}}_t$. In the final step,  where the Stinespring dilation $V_S$ (using $2m$ ancilla fermionic modes) is constructed from the projected Choi state $J_p$, a parity-preserving unitary $\tilde V_S$ can be constructed using the same trick that defines $\tilde U_c$ in Eq.~(\ref{defuc}). $\tilde V_S$ can then be used to implement the reduced quantum channel as shown in Fig.~\ref{cptp}.

\begin{lemma}[Learning algorithm guarantees for the fermionic implementation] \label{guaranteeAf}
\normalfont
Let $U_t$ be the unknown unitary defined in Eq.~(\ref{promiseUt}) in a fermionic implementation. There is a learning algorithm that learns the unknown unitary as the $m$-mode channel $\mc{E}^{{\mc{W}_t}}_{m,\text{proj}}$ satisfying the  distance bound 
\begin{align}
\mc{D}_{\diamond}( {\mc{W}}_t,\mc{E}_{m, \text{proj}}^{{\mc{W}_t}} \otimes \mc I_B ) \leq T_2^{\text{f}}(n) \epsilon
\end{align}
with probability $\geq 1-\delta$, using $O(\text{poly}(n, \epsilon^{-1}, \log \delta^{-1}))$ accesses to $U_t$ and $O(\text{poly}(n, \epsilon^{-1}, \log \delta^{-1}))$ classical processing time. Here $T^{\text{f}}_2(n)=\text{poly}(n)$.
\end{lemma}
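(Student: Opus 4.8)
The plan is to mirror the argument for the qubit implementation in Lemma~\ref{guaranteeA}, chaining together the fermionic analogues of the tomography and channel-approximation guarantees established earlier in this Appendix. The essential structural difference is that in the fermionic case the Gaussian unitaries $G_a$ and $G_b$ can be taken in $\mathrm{SO}(2n)$, so $W_t = G_a^\dagger U_t G_b^\dagger$ is a sum of even-weight Majorana strings and we may work directly with $W_t$ (rather than passing to $\bar W_t$), invoking Lemma~\ref{lastlemma} in place of the $\bar W_t$ Pauli-decoupling bound in Eq.~(\ref{paulidec}).

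First I would run the fermionic version of Algorithm~\ref{alg:algo1} with input parameters $(\epsilon^2, \delta/2)$, using the states $\ket{\psi_j^{\text{f}}}$ from Eq.~(\ref{psijfdef}) and the observables $O_k^+$ from Eq.~(\ref{oplusferm}), consuming $N_c^{\text{f}}$ copies as in Eq.~(\ref{expncf}). By the fermionic form of Lemma~\ref{shadowestf} this yields $\hat c^{(1)} = c^{(1)} + E^{(1)}$ with $\norma{E^{(1)}} \leq \epsilon^2$ with probability $\geq 1 - \delta/2$, and Lemma~\ref{constructW} then produces descriptions of $G_a, G_b \in \mathrm{SO}(2n)$ such that $\norma{[W_t, \gamma_i]} \leq \epsilon_0 \leq T_1(n)\epsilon$ for $i > M$. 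Feeding this into Lemma~\ref{lastlemma} gives the Pauli-decoupling bound $\tfrac{1}{2}\sum_{P}\norma{[W_t, P_i]} \leq 3n\epsilon_0$ for $i > m+1$, and applying Lemma~\ref{localglobal} with $\epsilon \to 3n\epsilon_0$ produces
\begin{align}
\mc{D}_{\diamond}(\mc{W}_t, \mc{E}_m^{\mc{W}_t} \otimes \mc I_B) \leq 3n^2\epsilon_0 \leq 3n^2 T_1(n)\epsilon.
\end{align}

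Next I would run the fermionic version of Algorithm~\ref{alg:algo2} with input parameters $(\epsilon, \delta/2)$, using the states $\ket{\bar\psi_\alpha^{\text{f}}}$ from Eq.~(\ref{psialphaf}) and the observables $\bar O_\beta^+$ from Eqs.~(\ref{obsf1}) and (\ref{obsf2}), consuming $\bar N_c^{\text{f}}$ copies as in Eq.~(\ref{ncfbar}). This learns the Choi state of $\mc{E}_m^{\mc{W}_t}$; projecting it via Lemma~\ref{regularizeChoi} and invoking Corollary~\ref{distanceLearned} gives $\mc{D}_{\diamond}(\mc{E}_m^{\mc{W}_t}, \mc{E}_{m,\text{proj}}^{\mc{W}_t}) \leq C_3\epsilon$ with probability $\geq 1 - \delta/2$. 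The triangle inequality then yields
\begin{align}
\mc{D}_{\diamond}(\mc{W}_t, \mc{E}_{m,\text{proj}}^{\mc{W}_t} \otimes \mc I_B) \leq \big(3n^2 T_1(n) + C_3\big)\epsilon = T_2^{\text{f}}(n)\epsilon,
\end{align}
with $T_2^{\text{f}}(n) = \text{poly}(n)$. A union bound over the two stages gives overall success probability $\geq 1 - \delta$, and since $N_c^{\text{f}}, \bar N_c^{\text{f}} = O(\text{poly}(n, \epsilon^{-1}, \log\delta^{-1}))$ both the number of accesses to $U_t$ and the classical processing time are polynomial.

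The main obstacle is not the final chaining, which is routine once the ingredients are in place, but ensuring that the fermionic analogues of the tomography guarantees hold with states and observables implementable by a parity-preserving circuit. Concretely, one must verify that the modified observables $O_k^+$ and $\bar O_\beta^+$ reproduce the same matrix elements $c^{(1)}_{jk}$ and $f_{\alpha\beta}$ as their qubit counterparts, so that Lemmas~\ref{propc}, \ref{constructW}, \ref{noisyChoi}, and \ref{regularizeChoi} apply with no change, and that the extra ancilla modes alter the copy counts only by constant factors. These verifications are supplied in the preceding subsections of this Appendix; the present lemma then follows by substituting the fermionic Pauli-decoupling constant $3n$ from Lemma~\ref{lastlemma} for the constant $2n+3$ used in the qubit case.
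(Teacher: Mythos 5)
Your proposal is correct and follows essentially the same route as the paper's proof: run Algorithm~\ref{alg:algo1} with parameters $(\epsilon^2, \delta/2)$ so that Lemma~\ref{constructW} yields $\epsilon_0 \leq T_1(n)\epsilon$, chain Lemma~\ref{lastlemma} into Lemma~\ref{localglobal} to get the $3n^2 T_1(n)\epsilon$ bound, run Algorithm~\ref{alg:algo2} with $(\epsilon, \delta/2)$ and invoke Corollary~\ref{distanceLearned} for the $C_3\epsilon$ term, then conclude by the triangle inequality with $T_2^{\text{f}}(n)=3n^2 T_1(n)+C_3$ and a union bound. Your emphasis on verifying the fermionic states and observables reproduce the same matrix elements $c^{(1)}_{jk}$ and $f_{\alpha\beta}$ is exactly the content the paper delegates to the preceding subsections of Appendix~\ref{fermionicimp}, so nothing is missing.
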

\begin{proof}
Running Algorithm \ref{alg:algo1} with input parameters $(\epsilon^2, \delta/2)$ and some postprocessing gives the reduced channel $\mc{E}^{{\mc{W}_t}}$ with the bound $\mc{D}_{\diamond}({\mc{W}}_t, \mc{E}^{{\mc{W}_t}} \otimes \mc I_B ) \leq 3n^2T_1(n) \epsilon$, where  ${\mc{W}}_t$ is the unitary channel corresponding to the unitary $ W_t$, with probability $\geq 1-\delta/2$. This follows from the following computation:
\begin{align}
\mc{D}_{\diamond}({\mc{W}}_t, \mc{E}_m^{{\mc{W}_t}} \otimes \mc I_B ) 
& \leq3n^2 \epsilon_0\label{grntf1}\\
&\leq 3n^2 T_1(n)\epsilon,\label{grntf2}
\end{align}
where line (\ref{grntf1}) follows from Eq.~(\ref{eqcdec}) in Lemma \ref{localglobal} and Eq.~(\ref{lastlemmares}) in Lemma \ref{lastlemma}. Line (\ref{grntf2}) follows from Eq.~(\ref{epsilon0}) in Lemma \ref{constructW}. Running Algorithm \ref{alg:algo2} with input parameters $(\epsilon, \delta/2)$ to learn the reduced quantum channel and then projecting it to a CPTP map gives us the following bound on the distance between the channel  $\mc{E}^{{\mc{W}_t}}_m$ and the projected channel $\mc{E}^{{\mc{W}_t}}_{m,\text{proj}}$ from Corollary \ref{distanceLearned}:
\begin{align}
\mc{D}_{\diamond}(\mc{E}^{{\mc{W}_t}}_m, \mc{E}^{{\mc{W}_t}}_{m,\text{proj}}) \leq C_3 \epsilon,
\end{align}
with probability $\geq 1-\delta/2$. Here $C_3$ is a constant defined in Eq.~(\ref{edem}). We can then use the triangle inequality to obtain the channel distance bound between the channel ${\mc{W}_t}$ and the projected version of the learned channel $\mc{E}^{{\mc{W}_t}}_{m,\text{proj}}$ as follows: 
\begin{align}
&\mc{D}_{\diamond}({\mc{W}}_t,\mc{E}_{m, \text{proj}}^{{\mc{W}_t}} \otimes \mc I_B ) \notag\\ \leq &\mc{D}_{\diamond}({\mc{W}}_t,\mc{E}_{m}^{{\mc{W}_t}} \otimes \mc I_B )+\mc{D}_{\diamond}(\mc{E}_{m}^{{\mc{W}_t}}, \mc{E}_{m, \text{proj}}^{{\mc{W}_t}} )\\
\leq & T_2^{\text{f}}(n)\epsilon, \label{almostboundf}
\end{align}
where $T_2^{\text{f}}(n)=3n^2 T_1(n)+C_3=\text{poly}(n)$. From the union bound, the algorithm succeeds with probability $\geq 1-\delta$. From the $\epsilon$-dependence of the number of states required for Algorithms 1 and 2, the learning algorithm 
uses $N_c^{\text{f}} + \bar N_c^{\text{f}} = O(\text{poly}(n, \epsilon^{-1}, \log \delta^{-1}))$ accesses to the unknown unitary $U_t$ to achieve error $\epsilon$ in Eq.~(\ref{almostboundf}), where $N_c^{\text{f}}$ and $\bar N_c^{\text{f}}$ are defined in Algorithms \ref{alg:algo1} and \ref{alg:algo2}, respectively. Moreover, each step of the learning algorithm requires $\text{poly}(n, \epsilon^{-1}, \log \delta^{-1})$ classical processing time.
\end{proof}

\section{Fermionic unitaries and the matchgate hierarchy}
\label{matchgateApp}
In this section, we state and prove Lemma \ref{outsideMH}, which shows that, generally, fermionic unitaries with just two non-Gaussian gates lie outside the matchgate hierarchy.

\outsideMH*
\begin{proof}
We assume that $U_t$ is in some finite level, say $k$, of the matchgate hierarchy. Then the unitary $F_1 = U_t \gamma_\mu U_t^\dagger$ must be in $\mc{M}_{k-1}$ from the definition of the matchgate hierarchy, where $\mu \in [2n]$. We define the unitaries
\begin{align}
F_{j}:=F_{j-1}\gamma_{\mu} F_{j-1}^\dagger, \>\>\> j \geq 2.
\end{align}
Extending the same argument as above shows that $F_{k-2}$ must belong to $\mc{M}_{k-2}$. For $U_t = K G(\theta)K$ and $\mu=2$, the following results hold: 
\begin{align}
F_1 &= -\gamma_2 (\cos (2\theta)+i\sin(2\theta)\gamma_2 \gamma_3 \gamma_4 \gamma_5),\\
F_j &= \gamma_2 (\cos (2^j\theta) + i\sin (2^j \theta)\gamma_2 \gamma_3 \gamma_4 \gamma_5),\>\>\> j \geq 2.
\end{align}
Choosing $\theta=\pi/p$, where $p$ is an odd integer, shows that $F_{j}$ always has a Majorana string with weight $>1$. This means that $F_{k-2}$ is not Gaussian because $F_{k-2}\gamma_2 F_{k-2}^\dagger = F_{k-1}$ has a Majorana weight $>1$, proving the claim that $U_t$ does not belong to any finite level of the matchgate hierarchy.
\end{proof}

\end{document}